\newtheorem{theorem}{Theorem}[section]
\newtheorem{corollary}{Corollary}[theorem]
\newtheorem{lemma}[theorem]{Lemma}
\newtheorem{definition}[theorem]{Definition}
\newcommand{\overbar}[1]{\mkern 1.8mu\overline{\mkern-1.8mu#1\mkern-1.8mu}\mkern 1.8mu}
\numberwithin{equation}{section}
\title{\textbf{On energy bounds in asymptotically locally AdS spacetimes}}
\author{Virinchi Rallabhandi\thanks{v.v.rallabhandi@sms.ed.ac.uk}}
\affil{School of Mathematics and Maxwell Institute for Mathematical Sciences, University of Edinburgh, King's Buildings, Edinburgh, UK, EH9 3FD}
\date{19th of December, 2025}
\begin{document}

\maketitle

\begin{abstract}
\noindent This work considers positive energy theorems in asymptotically, locally AdS spacetimes. Particular attention is given to spacetimes where conformal infinity has  compact, Einstein cross-sections admitting Killing or parallel spinors; a positive energy theorem is derived for such spacetimes in terms of geometric data intrinsic to the cross-section. This is followed by the first complete proofs of the BPS inequalities in (the bosonic sectors of) 4D and 5D minimal, gauged supergravity, including with magnetic fields, provided the Maxwell field is exact. The BPS inequalities are proven for asymptotically AdS spacetimes, but also generalised to the aforementioned class of asymptotically, locally AdS spacetimes.

\end{abstract}

\tableofcontents

\section{Introduction}
The positive energy theorem stands as one of the most treasured and significant results in mathematical general relativity - originally proved by Schoen \& Yau based on minimal surface methods \cite{Schoen1979} and soon after by Witten \cite{Witten1981} based on spinor techniques. Witten's method suggested a number of extensions, including allowing a negative cosmological constant - the focus of the present work. The first positive energy theorems for asymptotically AdS spacetimes \cite{Gibbons1983, GibbonsHullWarner1983} followed soon after Witten's original work and were based on the Abbott-Deser definition of energy \& asymptotics \cite{Abbott1982}.

However, in the age of holography, a more natural choice of asymptotics is one based on a Fefferman-Graham expansion \cite{Fefferman1985, deHaro2001, Skenderis2002}. In particular, the Einstein equation is solved order by order from a timelike conformal boundary and the geometry of the boundary itself is arbitrary; the case of a static $\mathbb{R}\times S^2$ boundary reduces to the asymptotically (globally) AdS case. Rigorous definitions of energy were given in the latter context by \cite{Wang2001, Chrusciel2003, Chrusciel2001} and corresponding positive energy theorems were subsequently proven\footnote{Note that in the former two references, the asymptotics considered are Riemannian, not Lorentzian, and should be viewed as asymptotics for an initial data slice.}.

Having understood the ``global" case, the next logical extension is the ``local" case. The example of a toroidal boundary was considered in \cite{Chrusciel2006} and a more general analysis was performed in \cite{Cheng2005}. One of the main aims of this work is to built upon the latter. The present work will adopt a few conceptual differences though. Most saliently, the holographic renormalisation \cite{Skenderis2002} approach pursued by \cite{Cheng2005} will not be followed. Instead, energy will be defined using the background subtraction and Hamiltonian methods of \cite{Henneaux1985, Chrusciel2001, Chrusciel2003}. Furthermore, Killing spinors will play a crucial role in the analysis. To this end, a general formula is developed for imaginary Killing spinors on time-symmetric metrics with cross-sections admitting either parallel or real Killing spinors. This formula allows a derivation of a positive energy theorem based on data intrinsic to the cross-section. The theorem decomposes the ``Witten-Nester" energy \cite{Nester1981} of \cite{Cheng2005} into further ``conserved quantities" built from symmetries of the boundary geometry.

Given the deep connections between Witten's method and supergravity \cite{Horowitz1983}, another natural extension is to try prove BPS inequalities for (the bosonic sectors of) supergravity theories. This was realised very soon after Witten's original work to prove global mass-charge inequalities in asymptotically flat spacetimes in four and five dimensions \cite{Gibbons1982, Gibbons1994}. Some results already exist along these lines \cite{London1995, Kostelecky1996, Wang2015b, Nozawa2014b} in the context of asymptotically AdS spacetimes - i.e. in gauged supergravity theories. However, the magnetic field is set to zero in \cite{London1995} and a non-gauge-covariant connection is used in \cite{Kostelecky1996, Wang2015b}, thereby leading to some unnatural assumptions and different results to the present work when incorporating magnetic fields. \cite{Nozawa2014b} has the closest results to the present work, but still has some dependence on \cite{Kostelecky1996} in its reasoning\footnote{These issues are all discussed further in section \ref{sec:BPS}.}. This paper aims to build upon the literature by providing a more complete treatment of magnetic fields in the study of classical energy-charge inequalities with negative cosmological constant. This includes BPS inequalities with the standard asymptotically (globally) AdS asymptotics, but also a study of how BPS inequalities may be changed in the asymptotically, locally AdS case.

The paper begins in section \ref{sec:setup} by setting up the formalism of asymptotically, locally AdS spacetimes and applying Witten's method to prove a very general positive energy theorem - theorem \ref{thm:generalPositveEnergy}. Section \ref{sec:boundaryGeometry} follows with an analysis of the boundary geometry and specialises theorem \ref{thm:generalPositveEnergy} to boundaries with the properties discussed above. Section \ref{sec:examples} illustrates these results through various example boundary geometries - namely a squashed $S^7$, the torus, the round sphere and the lens space, $L(p, 1)$. Finally, section \ref{sec:BPS} presents the BPS inequalities. One additional appendix justifies that a particular Dirac operator can be inverted, as required for Witten's method. The main results are theorems \ref{thm:generalPositveEnergy}, \ref{thm:crossSectionKilling}, \ref{thm:crossSectionPositiveEnergy0}, \ref{thm:crossSectionPositiveEnergy3} and corollary \ref{thm:magneticCancellation}.

\subsection{Conventions}
$(M, g)$ is a smooth, $n$-dimensional spacetime with mostly pluses metric signature. $a, b, \cdots$, $I, J, \cdots$, $M, N, \cdots$ and $A, B, \cdots$ denote vielbein indices running $\{0, 1, \cdots, n - 1\}$, $\{1, 2, \cdots, n - 1\}$, $\{0, 2, 3, \cdots, n - 1\}$ and $\{2, 3, \cdots, n - 1\}$ respectively\footnote{Although in most cases, equations with $a, b, \cdots$ will remain valid if these were abstract indices.} while $\mu, \nu, \cdots$, $i, j ,\cdots$, $m, n, \cdots$ and $\alpha, \beta, \cdots$ denote coordinate indices running $\{0, 1, \cdots, n - 1\}$, $\{1, 2, \cdots, n - 1\}$, $\{0, 2, 3, \cdots, n - 1\}$ and $\{2, 3, \cdots, n - 1\}$ respectively. It's assumed $n \geq 4$ throughout. The gamma matrices are chosen so that $\gamma^a\gamma^b + \gamma^b\gamma^a = -2g^{ab}I$, $(\gamma^0)^\dagger = \gamma^0$ and $(\gamma^I)^\dagger = -\gamma^I$. $\gamma^{a_1\cdots a_k}$ denotes the antisymmetrisation, $\gamma^{[a_1}\cdots\gamma^{a_k]}$. Only Dirac spinors will be used in this work and the Dirac conjugate of a Dirac spinor, $\Psi$, is $\overline{\Psi} = \Psi^\dagger\gamma^0$. From section \ref{sec:boundaryGeometry} onwards, it will be necessary to consider both Dirac spinors on an $(n-2)$-dimensional surface and Dirac spinors on the full spacetime. Thus, the spinors on the $(n-2)$-dimensional surface exist in a vector space which has half the dimension of the spacetime spinors. In these cases, the spacetime spinor space can be viewed as a direct sum of the $(n-2)$-dimensional surface's spinor space with itself. Furthermore, the gamma matrices for the spacetime will be chosen as
\begin{align}
    \gamma^0 &= \begin{bmatrix}
        I & 0 \\
        0 & -I
    \end{bmatrix}, \,\, \gamma^1 = \begin{bmatrix}
        0 & -I \\
        I & 0
    \end{bmatrix} \,\,\mathrm{and}\,\,\gamma^A = \begin{bmatrix}
        0 & \hat{\gamma}^A \\
        \hat{\gamma}^A & 0
    \end{bmatrix},
\end{align}
where $\{\hat{\gamma}^A\}$ are the gamma matrices (in any representation) of the $(n-2)$-dimensional surface. Indeed, hats will be placed on all quantities intrinsic to the surface. The Levi-Civita connection of $g$ is denoted by $D$ and the Riemann tensor convention is $[D_a, D_b]V^c = R\indices{^c_d_a_b}V^d$. Thus, the Einstein equation is
\begin{align}
    R_{ab} - \frac{1}{2}g_{ab}R + \Lambda g_{ab} = 8\pi T_{ab}.
\end{align}
The units/length scales are chosen so that $\Lambda = -\frac{1}{2}(n - 1)(n - 2)$.

\section{Set-up and general positive energy theorem}
\label{sec:setup}
We begin with the set-up that will be used throughout. The central notion underpinning this work is the asymptotically, locally AdS spacetime.
\begin{definition}[Asymptotically, locally AdS]
    \label{def:asymptoticallyAdS}
    A spacetime, $(M, g)$, is said to be asymptotically, locally AdS if and only if the following conditions hold. $(M, g)$ must admit a conformal compactification such that conformal infinity, $\mathcal{I}$, is a timelike hypersurface. Then, in an open neighbourhood of $\mathcal{I}$, there must exist coordinates, $(r, x^m) = (r, t, x^\alpha)$, such that $\{r = \infty\}$ is $\mathcal{I}$ itself and $g$ admits a Fefferman-Graham expansion \cite{Fefferman1985},
    \begin{align}
        g &= \mathrm{e}^{2r}\left(f_{(0)mn} + \mathrm{e}^{-r}f_{(1)mn} + \cdots + \mathrm{e}^{-(n-1)r}f_{(n-1)mn} + r\mathrm{e}^{-(n-1)r}\tilde{f}_{(n-1)} + \cdots\right)\mathrm{d}x^m\otimes\mathrm{d}x^n \nonumber \\
        &\,\,\,\,\,\,\, + \mathrm{d}r\otimes\mathrm{d}r,
        \label{eq:feffermanGraham}
    \end{align}
    with $f_{(k)mn}$ and $\tilde{f}_{(k)mn}$ independent of $r$ for any $k$. The series, $f_{(0)mn} + \mathrm{e}^{-r}f_{(1)mn} + \cdots$, will be denoted $f_{mn}$ (when summed). It will be assumed $\mathcal{I}$ is diffeomeorphic to $\mathbb{
    R}\times S$ for some $(n-2)$D, spacelike, compact manifold, $S$. $t$ will be chosen as the coordinate along $\mathbb{R}$ and $x^\alpha$ are coordinates along $S$. Without loss of generality, the  coordinates are chosen so that $f_{(0)0\alpha} = 0$. Finally, the coordinates will always be ordered so that $t$ is the 0th coordinate, $r$ is the 1st coordinate and $x^\alpha$ are coordinates $2$ to $n - 1$.
    \label{def:locallyAdS}
\end{definition}
The lowest order $n$ terms of $f_{mn}$, i.e $f_{(0)}$, $\cdots$, $f_{(n-2)}$ and $\tilde{f}_{(n-1)}$, are uniquely determined by $f_{(0)mn}$ and its curvature \cite{Fefferman1985, deHaro2001, Skenderis2002}, i.e. specifying $f_{(0)}$ specifies $g$ up to $O\big(\mathrm{e}^{-(n-3)r}\big)$. $f_{(n-1)mn}$ is the first distinguishing term and is often thought of as a ``boundary stress tensor" \cite{Skenderis2002}, although the procedure followed in this work will not require holographic renormalisation like \cite{Skenderis2002}.

Typically, $f_{(2k+1)mn} = 0$ and there are further conditions on the trace and divergence of $f_{(n-1)mn}$, but these won't matter for the present discussion.

A striking feature of this expansion is that $f_{(0)}$ - the metric on $\mathcal{I}$ itself - can be freely chosen. A major focus of this work will be exploring the effects of the boundary geometry freedom on positive energy theorems. Energy itself will be defined using the ``background subtraction" method - see \cite{Hollands2005} for alternatives.
\begin{definition}[Background metric, $\bar{g}$]
    Any asymptotically, locally AdS metric, $\bar{g}$, is said to be a background metric for $g$ when $\bar{f}_{(0)mn} = f_{(0)mn}$.
\end{definition}
Hence, $g$ and $\bar{g}$ can only differ from $O\big(\mathrm{e}^{-(n-3)r}\big)$ onwards. The most commonly considered background metric will be AdS itself,
\begin{align}
    \bar{g} &= \mathrm{d}r\otimes\mathrm{d}r + \mathrm{e}^{2r}\left(-\bigg(1 + \frac{1}{4}\mathrm{e}^{-2r}\bigg)^2\mathrm{d}t\otimes\mathrm{d}t + \left(1 - \frac{1}{4}\mathrm{e}^{-2r}\right)^2g_{S^{n-2}}\right).
    \label{eq:ads}
\end{align}
In terms of Fefferman-Graham expansions, this has $S = S^{n-2}$ cross-section, $f_{(n-1)} = 0$ and $f_{(0)} = -\mathrm{d}t\otimes\mathrm{d}t + g_{S^{n-2}}$.

Let $\Sigma_t$ be any $(n-1)$D, spacelike hypersurface intersecting $\mathcal{I}$ such that $t$ is a constant on $\Sigma_t$ in an open neighbourhood of $\mathcal{I}$. Then, let $\Sigma_{t, r}$ be a constant $r$ cross-section of $\Sigma_t$ and let $\Sigma_{t, \infty} = \Sigma_t\cap\mathcal{I} = S$. Furthermore, throughout this work, $\int_{\Sigma_{t, \infty}}$ should be interpreted as $\lim_{r\to\infty}\int_{\Sigma_{t,r}}$. Now, the energy of $(M, g)$ can be defined as follows.
\begin{definition}[Energy]
    In asymptotically, locally AdS spacetimes, the energy relative to a background metric is defined to be
    \begin{align}
        E &= \frac{n-1}{16\pi}\int_{\Sigma_{t, \infty}}\hat{f}_{(0)}^{mn}(f_{(n-1)mn} - \bar{f}_{(n-1)mn})\,\sqrt{\mathrm{det}(f_{(0)\alpha\beta})}\,\mathrm{d}^{n-2}x,
        \label{eq:kottlerEnergy}
    \end{align}
    where $\hat{f}_{(0)}^{mn} = f_{(0)}^{mn} + n_{(0)}^mn_{(0)}^n$ is the induced (inverse) metric on $\Sigma_{t, \infty}$ times $\mathrm{e}^{2r}$ and $n_{(0)}^m$ is the unit normal to such cross-sections\footnote{The $\mathrm{e}^r$ factor scalings effectively just remove all $\mathrm{e}^r$ factors on the boundary, as would happen in the conformal compactification. Similarly, note that $\sqrt{\mathrm{det}(f_{(0)\alpha\beta})}\,\mathrm{d}^{n-2}x$ is simply the measure/volume form on $\Sigma_{t, \infty}$ after compactification.}.
\end{definition}
This definition can be derived by background subtraction from ``first principles" using the procedure described in \cite{Regge1974, Henneaux1985} to ensure a well-defined Hamiltonian formulation. In the examples studied later, only $\mathrm{AdS}_5$ and its variations will actually have a non-zero $\bar{f}_{(n-1)mn}$. The full geometric set-up described here is illustrated in figure \ref{fig:locallyAdSPenrose}.
\begin{figure}
    \centering
    \includegraphics[width=0.8\linewidth]{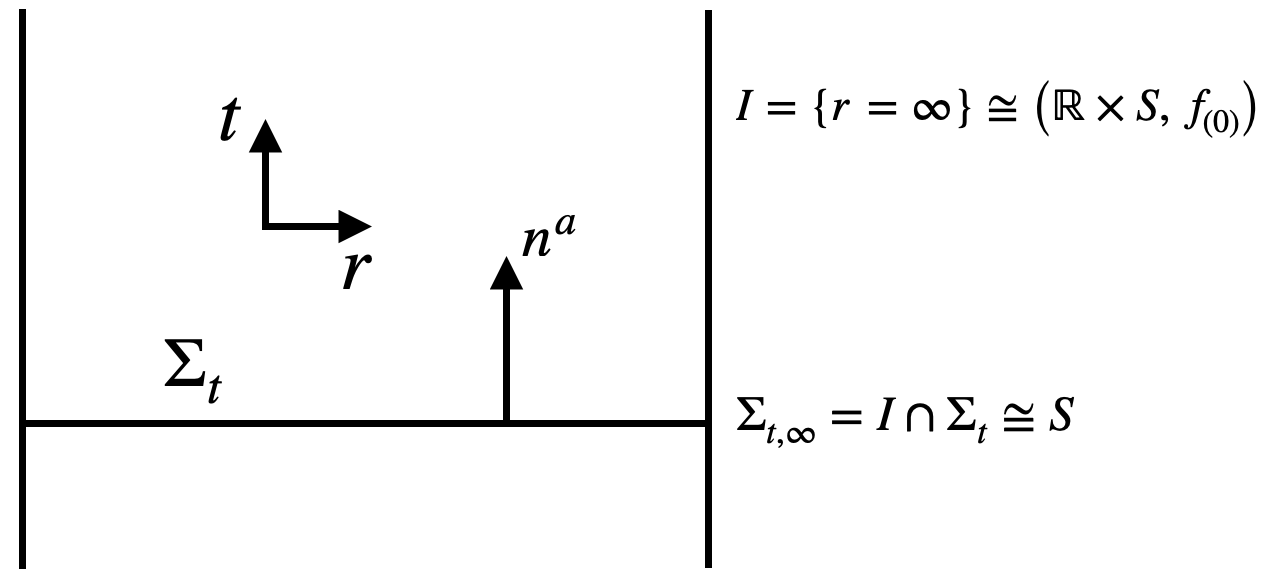}
    \caption{This is a Penrose diagram for defining asymptotically, locally AdS spacetimes and energy within them. Conformal infinity, $\mathcal{I}$, is $\{r = \infty\}$, where $r$ is the Fefferman-Graham coordinate. $\mathcal{I}$ is topologically $\mathbb{R}\times S$ with metric $f_{(0)mn}$. $t$ is a coordinate along the $\mathbb{R}$ direction, $\Sigma_t$ is a spacelike hypersurface which has constant $t$ near $\mathcal{I}$ and $n^a$ is the future-directed, timelike, unit normal to $\Sigma_t$. Energy is measured at $\Sigma_{t, \infty} = \mathcal{I}\cap \Sigma_t$.}
    \label{fig:locallyAdSPenrose}
\end{figure}

Key to Witten's method will be backgrounds that admit imaginary Killing spinors\footnote{The subsequent arguments would still work as long as the Killing spinor equation is satisfied up to $O(\mathrm{e}^{-(n-1)r})$ corrections.}.
\begin{definition}[Background Killing spinor]
    A spinor, $\varepsilon_k$, is called a Killing spinor of the background metric, $\bar{g}$, if and only if it satisfies
    \begin{align}
        \overbar{D}_a\varepsilon_k + \frac{\mathrm{i}}{2}\gamma_a\varepsilon_k = 0,
        \label{eq:backgroundKillingSpinor}
    \end{align}
    where $\overbar{D}_a$ is the Levi-Civita connection of $\bar{g}$. Similarly, denote the vielbeins associated to $\bar{g}$ as $\bar{e}^a$ and $\bar{e}_a$.
    \label{def:background}
\end{definition}
Of course, not every background metric admits a background Killing spinor. However, Witten's method works most naturally when there does exist a non-zero $\varepsilon_k$ - see \cite{Leitner2003, Baum1989} for a broad discussion on the admissible metrics. Furthermore, when no background Killing spinors exist, negative energies are possible - as illustrated by the examples in \cite{Chrusciel2023}.

Note that $\varepsilon_k$ may only be defined in an open neighbourhood of the ``boundary" at infinity or equation \ref{eq:backgroundKillingSpinor} may only have a solution in such a region. This is not a problem because equation \ref{eq:backgroundKillingSpinor} will only really be required in an open neighbourhood of infinity, say $\overbar{M}$, and $\varepsilon_k$ can be extended to a spinor on all of $\Sigma_t$ by multiplying it with a smooth function that's one near infinity but falls to zero within $\overbar{M}$.

\begin{definition}[$\nabla_a$, $\mathcal{A}_a$, $\mathbb{M}$, Witten-Nester 2-form and $Q(\varepsilon)$]
    \label{def:setup}
    When acting on any Dirac spinor, $\Psi$, of the spacetime, define the modified connection, $\nabla$, by
    \begin{align}
        \nabla_a\Psi &= D_a\Psi + \frac{\mathrm{i}}{2}\gamma_a\Psi + \mathcal{A}_a\Psi\,\,\,\,\,\,\,\mathrm{and} \\
        \nabla_a\overline{\Psi} &= D_a\overline{\Psi} - \frac{\mathrm{i}}{2}\overline{\Psi}\gamma_a + \overline{\Psi}\gamma^0\mathcal{A}_a^\dagger\gamma^0 = (\nabla_a\Psi)^\dagger\gamma^0,
    \end{align}
    where $\mathcal{A}_a$ is some Clifford algebra valued one-form\footnote{In the interests of generality, $\mathcal{A}_a$ is left quite unconstrained for now. But, in the examples, $\mathcal{A}_a$ will either be zero or a function of Maxwell fields such that $\nabla_a$ describes the gravitino transformation in a gauged supergravity.}. It will always be assumed that $\gamma^{IJ}\mathcal{A}_J$ is a hermitian matrix and that 
    \begin{align}
        \mathbb{M} &= 4\pi T^{0a}\gamma_0\gamma_a + \gamma^{IJ}D_I\mathcal{A}_J + \frac{\mathrm{i}(n-2)}{2}(\gamma^I\mathcal{A}_I + \mathcal{A}_I^\dagger\gamma^I) - \mathcal{A}^\dagger_I\gamma^{IJ}\mathcal{A}_J
    \end{align}
    is a non-negative definite matrix. Furthermore, it will be assumed $||\mathcal{A}_a||_0$ and $||\mathbb{M}||_0$ decay as $O(\mathrm{e}^{-(n-1)r})$ and $o(\mathrm{e}^{-(n-1)r})$ respectively near $\Sigma_{t, \infty}$, where $||\cdot||_0$ denotes the (pointwise) operator norm of a matrix\footnote{In the latter case, this is equivalent to saying $||\mathbb{M}||_0 \in L^1$.}. Finally, the Witten-Nester \cite{Witten1981, Nester1981} 2-form\footnote{Some authors would refer to this as the Hodge dual of a Witten-Nester $(n-2)$-form.} is defined to be
    \begin{align}
        E^{ab}(\varepsilon) &= \overline{\varepsilon}\gamma^{abc}\nabla_c\varepsilon + \mathrm{c.c} = \overline{\varepsilon}\gamma^{abc}\nabla_c\varepsilon - \nabla_c(\overline{\varepsilon})\gamma^{abc}\varepsilon.
        \label{eq:eAB}
    \end{align}
    Given a unit normal, $n_a$, to $\Sigma_t$, it is used to define a functional, $Q(\varepsilon)$, by 
    \begin{align}
        Q(\varepsilon) &= \int_{\Sigma_t}n_aD_b(E^{ba}(\varepsilon))\mathrm{d}V.
    \end{align}
\end{definition}
Having established these preliminaries, it's now possible to state the Lichnerowicz identity associated to the connections in the definition above\footnote{A similar Lichnerowicz identity is also given in \cite{Nozawa2014a}.}. Note that a variant of the Lichnerowicz identity always underpins any Witten-style positive energy theorem.
\begin{theorem}[Lichnerowicz identity]
    \label{thm:lichnerowicz}
    \begin{align}
        n_aD_b(E^{ba}(\varepsilon)) &= 2\left((\nabla_I\varepsilon)^\dagger\nabla^I\varepsilon - (\gamma^I\nabla_I\varepsilon)^\dagger\gamma^J\nabla_J\varepsilon + \varepsilon^\dagger\mathbb{M}\varepsilon\right).
    \end{align}
\end{theorem}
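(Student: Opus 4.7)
The plan is to perform the standard Witten--Lichnerowicz rearrangement, adapted to accommodate the Clifford-valued one-form $\mathcal{A}_a$. I would begin by distributing $D_b$ across $E^{ba}(\varepsilon)$ via Leibniz, producing four pieces of the shape $(D_b\overline{\varepsilon})\gamma^{bac}\nabla_c\varepsilon$, $\overline{\varepsilon}\gamma^{bac}D_b\nabla_c\varepsilon$, and their conjugates. The antisymmetry of $\gamma^{bac}$ in $b,c$ condenses the inner pieces to $\tfrac{1}{2}\gamma^{bac}[\nabla_b,\nabla_c]\varepsilon$ (the symmetric Christoffel terms on the 1-form index drop out). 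In the outer pieces I convert $D_b\varepsilon = \nabla_b\varepsilon - \tfrac{\mathrm{i}}{2}\gamma_b\varepsilon - \mathcal{A}_b\varepsilon$ (and its conjugate), generating the genuine $\nabla\otimes\nabla$ quadratic terms together with single-$\nabla\varepsilon$ cross terms weighted by $\gamma$ and $\mathcal{A}$.

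Writing $\Omega_a := \tfrac{\mathrm{i}}{2}\gamma_a + \mathcal{A}_a$, the spinor commutator expands as
\begin{align*}
[\nabla_b,\nabla_c]\varepsilon = \tfrac{1}{4}R_{debc}\gamma^{de}\varepsilon + (D_b\mathcal{A}_c - D_c\mathcal{A}_b)\varepsilon + [\Omega_b,\Omega_c]\varepsilon.
\end{align*}
The standard Clifford--Bianchi identity reduces $\tfrac{1}{2}\gamma^{bac}R_{debc}\gamma^{de}$ to a multiple of $G^{a}{}_{b}\gamma^b$; substituting the Einstein equation with $\Lambda = -\tfrac{1}{2}(n-1)(n-2)$ isolates an $8\pi T^{a}{}_{b}\gamma^b$ contribution plus a pure $\gamma^a$ cosmological remainder. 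That remainder is precisely cancelled by the $-\tfrac{1}{4}[\gamma_b,\gamma_c]$ contribution inside $[\Omega_b,\Omega_c]$, which when contracted against $\gamma^{bac}$ produces a $\gamma^a$-term with the opposite coefficient. This cancellation, specific to the stated value of $\Lambda$, is the hallmark of Witten's argument with negative cosmological constant and is what leaves exactly $4\pi T^{0a}\gamma_0\gamma_a$ in $\mathbb{M}$ after contraction with $n_a$.

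Contracting with $n_a$ forces the free $\gamma$-indices to be spatial, since in an adapted frame $n_a\gamma^{abc}$ vanishes unless $b,c$ are both spatial. The quadratic pieces in $\nabla\varepsilon$ then reorganise, through the Clifford identity $\gamma^{IJ}\gamma^K = \gamma^{IJK} + g^{JK}\gamma^I - g^{IK}\gamma^J$, into $2\bigl((\nabla_I\varepsilon)^\dagger\nabla^I\varepsilon - (\gamma^I\nabla_I\varepsilon)^\dagger\gamma^J\nabla_J\varepsilon\bigr)$, exactly as in Witten's original derivation. The remaining algebraic pieces slot into $\mathbb{M}$ as follows: the field strength $D_b\mathcal{A}_c - D_c\mathcal{A}_b$ contracted with $\gamma^{bac}$ produces $\gamma^{IJ}D_I\mathcal{A}_J$; the pure gauge commutator $[\mathcal{A}_b,\mathcal{A}_c]$ together with its conjugate counterpart yields $-\mathcal{A}^\dagger_I\gamma^{IJ}\mathcal{A}_J$; and the mixed $\tfrac{\mathrm{i}}{2}[\gamma_b,\mathcal{A}_c]$ brackets, combined with the single-$\mathcal{A}$ outer cross terms generated in converting $D_b\mapsto\nabla_b$, assemble into $\tfrac{\mathrm{i}(n-2)}{2}(\gamma^I\mathcal{A}_I + \mathcal{A}^\dagger_I\gamma^I)$, the factor $(n-2)$ originating in the spatial-slice trace $\gamma^{IJ}\gamma_J = -(n-2)\gamma^I$.

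The main obstacle is the pure algebraic bookkeeping. The two halves of $E^{ba}(\varepsilon)$ generate \emph{a priori} non-Hermitian bilinears in $\varepsilon$, and it is not immediately obvious that all resulting combinations collapse into the symmetric form $\varepsilon^\dagger\mathbb{M}\varepsilon$ with precisely the coefficients announced. The hermiticity assumption on $\gamma^{IJ}\mathcal{A}_J$ imposed in definition \ref{def:setup} is used essentially at this step: it is what permits pairing each contribution with its conjugate to produce the Hermitian structure. Beyond this input, the remainder of the argument is a careful collation of Clifford-algebraic identities with no further conceptual content.
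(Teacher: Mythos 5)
Your plan reproduces the essential shape of the paper's own argument: Leibniz on $D_bE^{ba}$, the standard Lichnerowicz identity (i.e.\ $\gamma^{bac}D_bD_c\varepsilon = \frac{1}{2}\gamma^{bac}[D_b,D_c]\varepsilon$ reduced via the Clifford--Bianchi contraction to $G^a{}_b\gamma^b$), the Einstein equation with $\Lambda = -\frac{1}{2}(n-1)(n-2)$, and the observation that the residual cosmological $\gamma^a$ piece is eaten by the $-\frac{1}{4}[\gamma_b,\gamma_c]$ contribution from the modified connection. That cancellation mechanism is correct and is the crux of the Witten-style identity in AdS, so the hard-to-guess part is in place. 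The $n_a$ contraction restricting to spatial indices, the use of $\gamma^{IJ}\gamma_J = -(n-2)\gamma^I$ for the linear-$\mathcal{A}$ coefficient, and the essential role of the hermiticity hypothesis on $\gamma^{IJ}\mathcal{A}_J$ are all correctly identified.

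One attribution in your bookkeeping is off, though it does not change the outcome. You claim the single-$\nabla\varepsilon$ cross terms, generated when converting $D_b\mapsto\nabla_b$ on the outer derivatives, combine with the mixed $\tfrac{\mathrm{i}}{2}[\gamma_b,\mathcal{A}_c]$ brackets to produce $\tfrac{\mathrm{i}(n-2)}{2}(\gamma^I\mathcal{A}_I + \mathcal{A}^\dagger_I\gamma^I)$. In fact, the pure-$\gamma$ cross terms vanish identically because $[\gamma_b,\gamma^{bac}] = 0$, and the $\mathcal{A}$-weighted cross terms vanish once you contract with $n_a$ and invoke $(\gamma^{IJ}\mathcal{A}_J)^\dagger = \gamma^{IJ}\mathcal{A}_J$ (equivalently $\mathcal{A}^\dagger_J\gamma^{IJ} = -\gamma^{IJ}\mathcal{A}_J$): the two contributions exactly annihilate each other. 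The $\tfrac{\mathrm{i}(n-2)}{2}(\gamma^I\mathcal{A}_I + \mathcal{A}^\dagger_I\gamma^I)$ piece therefore comes entirely from the \emph{inner} commutator $[\Omega_b,\Omega_c]$ (specifically the pieces $\tfrac{\mathrm{i}}{2}\gamma^{bac}\gamma_b\mathcal{A}_c + \tfrac{\mathrm{i}}{2}\gamma^{bac}\mathcal{A}_b\gamma_c$ and their conjugates), not from a coalition with outer cross terms. Similarly, the $-\mathcal{A}^\dagger_I\gamma^{IJ}\mathcal{A}_J$ piece arises from $\gamma^{bac}\mathcal{A}_b\mathcal{A}_c$ (and conjugate) only after applying hermiticity to trade one $\mathcal{A}$ for an $\mathcal{A}^\dagger$; it is not literally the commutator $[\mathcal{A}_b,\mathcal{A}_c]$. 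None of this affects the validity of the final identity --- the paper's own proof reaches the same intermediate expression and likewise invokes hermiticity to collapse the $D_b\varepsilon$-weighted cross terms --- but it does mean that the ``careful collation'' is slightly less mechanical than your summary suggests: the vanishing of the cross terms is itself a step that genuinely uses the assumptions, rather than something that slots automatically into $\mathbb{M}$.
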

\begin{proof}
    First note the standard Lichnerowicz identity, 
    \begin{align}
        \gamma^{abc}D_bD_c\varepsilon &= -\frac{1}{2}\left(R^{ab} - \frac{1}{2}g^{ab}R\right)\gamma_b\varepsilon,
        \label{eq:lichnerowicz}
    \end{align}
    which is proven using $\gamma^{abc}$'s antisymmetry to convert $D_bD_c\varepsilon$ into $\frac{1}{2}[D_b, D_c]\varepsilon = -\frac{1}{8}R\indices{^d^e_b_c}\gamma_{de}\varepsilon$ and then applying the identity, $\gamma^{abc}\gamma_{de} = \gamma\indices{^a^b^c_d_e} - 6\gamma\indices{^[^a^b_[_e}\delta\indices{^c^]_d_]} + 6\gamma^{[a}\delta\indices{^b_[_e}\delta\indices{^c^]_d_]}$.
    
    By directly expanding and recombining terms, applying the standard Lichnerowcz identity, the Einstein equation and standard gamma matrix identities, one finds
    \begin{align}
        D_bE^{ba}(\varepsilon) &=  \bar{\varepsilon}(8\pi T^{ab}\gamma_b - \mathrm{i}(n-2)\gamma^{ab}\mathcal{A}_b - \mathrm{i}(n-2)\gamma^0\mathcal{A}_b^\dagger\gamma^0\gamma^{ab} - 2\gamma^0\mathcal{A}_b^\dagger\gamma^0\gamma^{bac}\mathcal{A}_c \nonumber \\
        &\,\,\,\,\,\,\,\,\,\,\, + \gamma^{bac}D_b\mathcal{A}_c - \gamma^0 D_b(\mathcal{A}_c^\dagger)\gamma^0\gamma^{bac})\varepsilon + 2\nabla_b(\bar{\varepsilon})\gamma^{bac}\nabla_c\varepsilon \nonumber \\
        &\,\,\,\,\,\,\, + \bar{\varepsilon}(\gamma^{bac}\mathcal{A}_c - \gamma^0\mathcal{A}_c^\dagger\gamma^0\gamma^{cab})D_b\varepsilon + D_b(\bar{\varepsilon})(\gamma^{cab}\mathcal{A}_c - \gamma^0\mathcal{A}_c^\dagger\gamma^0\gamma^{bac})\varepsilon.
    \end{align}
    In vielbein indices $n_a = -\delta_{a0}$. Then, in conjunction with the assumption that $\gamma^{IJ}\mathcal{A}_J$ is hermitian, the previous equation reduces to
    \begin{align}
        n_aD_bE^{ba}(\varepsilon) &= 2\varepsilon^\dagger\mathbb{M}\varepsilon + 2\nabla_I(\varepsilon)^\dagger\gamma^{IJ}\nabla_J\varepsilon = 2\left((\nabla_I\varepsilon)^\dagger\nabla^I\varepsilon - (\gamma^I\nabla_I\varepsilon)^\dagger\gamma^J\nabla_J\varepsilon + \varepsilon^\dagger\mathbb{M}\varepsilon\right),
    \end{align}
    which is the required version of the Lichnerowicz identity.
\end{proof}
Spinors are naturally defined by first choosing a frame. In the present context, this will lead to vielbeins associated to various different metrics - e.g. background, foreground, with $\mathrm{e}^r$ factors, without $\mathrm{e}^r$ factors etc. Thus, it will often be necessary to explicitly state which metric a given vielbein is associated to. A convenient notation for this will be $e^{(h)}$ for the vielbein (or inverse vielbein) associated to a metric, $h$. 

Given some background metric, $\bar{g}$, with vielbein, $\{\partial_r, \mathrm{e}^{-r}\bar{e}_M^{(\bar{f})m}\partial_m\}$, a natural vielbein for $g$ is $\partial_r$ together with
\begin{align}
    e_M &= \mathrm{e}^{-r}\bar{e}^{(\bar{f})m}_M\left(\partial_m - \frac{1}{2}\mathrm{e}^{-(n-1)r}(f_{(n-1)mp} - \bar{f}_{(n-1)mp})\bar{f}^{pn}_{(0)}\partial_n + O(\mathrm{e}^{-nr})\right).
    \label{eq:gVielbein}
\end{align}
In particular, this ansatz has $g(e_M, e_N) = \eta_{MN} + O(\mathrm{e}^{-nr})$. The specific $O(\mathrm{e}^{-nr})$ corrections won't be relevant for this work. Furthermore, given a background metric, the vielbein for $g$ will be fixed as $\{\partial_r, e_M\}$ throughout\footnote{Without this, it will not be possible to meaningfully talk about notions such as ``constant spinor" later.}. Background Killing spinors interact with this choice of vielbein to yield useful decay properties on $\Sigma_t$. The most common measure of decay will be inclusion in $L^2$, whose inner product is defined as
\begin{align}
    \langle A, B\rangle_{L^2} &= \int_{\Sigma_t}A^\dagger_{I_1\cdots I_k}B^{I_1\cdots I_k}\mathrm{d}V 
\end{align}
for any spinor-valued, rank-$k$ tensors, $A$ and $B$.
\begin{lemma}
    If a background Killing spinor, $\varepsilon_k$, is $O(\mathrm{e}^{r/2})$ near $\Sigma_{t, \infty}$, then $\nabla_I\varepsilon_k \in L^2$.
    \label{thm:killingSpinorL2}
\end{lemma}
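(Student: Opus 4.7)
The plan is to use the background Killing spinor equation to rewrite $\nabla_I\varepsilon_k$ as a difference of connection operators plus the $\mathcal{A}$ correction, control each term by the Fefferman--Graham asymptotics and then test against the volume element on $\Sigma_t$.

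Working in a common spin frame, I would observe that (\ref{eq:backgroundKillingSpinor}) makes
\begin{align*}
    \nabla_I\varepsilon_k = D_I\varepsilon_k + \tfrac{\mathrm{i}}{2}\gamma_I\varepsilon_k + \mathcal{A}_I\varepsilon_k = (D_I - \overbar{D}_I)\varepsilon_k + \mathcal{A}_I\varepsilon_k,
\end{align*}
so that the $\frac{\mathrm{i}}{2}\gamma_I$ piece is absorbed against $\overbar{D}_I$. The residual operator $D_I - \overbar{D}_I$ is purely algebraic on spinors and is determined by (i) the spin-connection difference $\omega_I{}^{ab} - \bar{\omega}_I{}^{ab}$ and (ii) the spin lift of the near-identity Lorentz transformation between $e$ and $\bar{e}$. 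From (\ref{eq:gVielbein}), $e_M - \bar{e}_M$ is $O(\mathrm{e}^{-(n-1)r})$ relative to the orthonormal basis, and derivatives along $\partial_r$ or transverse to $\mathcal{I}$ only multiply this correction by bounded smooth data drawn from $f_{(0)}$, $f_{(n-1)}$ and $\bar{f}_{(n-1)}$. Thus both (i) and (ii) contribute at order $O(\mathrm{e}^{-(n-1)r})$.

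Combining this with the assumed decay $||\mathcal{A}_I||_0 = O(\mathrm{e}^{-(n-1)r})$ and the growth hypothesis $|\varepsilon_k| = O(\mathrm{e}^{r/2})$, I would deduce the pointwise bound $|\nabla_I\varepsilon_k|^2 = O(\mathrm{e}^{-(2n-3)r})$ in the orthonormal frame. The induced metric on $\Sigma_t$ takes the form $\mathrm{d}r^2 + \mathrm{e}^{2r}f_{(0)\alpha\beta}\mathrm{d}x^\alpha\mathrm{d}x^\beta + \cdots$, so $\mathrm{d}V = O(\mathrm{e}^{(n-2)r})\,\mathrm{d}r\,\mathrm{d}^{n-2}x$. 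Therefore
\begin{align*}
    \int_{\Sigma_t\cap\{r\geq R\}}|\nabla_I\varepsilon_k|^2\,\mathrm{d}V = O\!\left(\int_R^\infty\mathrm{e}^{-(n-1)r}\,\mathrm{d}r\right)\cdot\mathrm{Vol}(S),
\end{align*}
which is finite for $n \geq 4$. Since $\varepsilon_k$ is extended by a smooth cutoff to be compactly supported away from $\Sigma_{t,\infty}$, the integral over $\Sigma_t\cap\{r\leq R\}$ is automatically finite, whence $\nabla_I\varepsilon_k\in L^2$.

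The main obstacle will be the bookkeeping around the two vielbein choices: verifying in detail that the spin lift of the change of frame and the algebraic spin-connection-difference formula conspire to yield only an $O(\mathrm{e}^{-(n-1)r})$ perturbation, with no hidden $r$-derivative enhancements. Once this verification is in hand, the decay-versus-volume accounting is essentially routine.
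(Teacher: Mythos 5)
Your approach is essentially the same as the paper's: rewrite $\nabla_I\varepsilon_k$ via the Killing spinor equation, identify the residual as vielbein/connection differences plus $\mathcal{A}_I$, and balance the decay against the $O(\mathrm{e}^{(n-2)r})$ volume growth. The one imprecision is calling $D_I - \overbar{D}_I$ ``purely algebraic'': with distinct vielbeins for $g$ and $\bar{g}$, the paper's explicit formula also contains a coordinate-derivative term proportional to $(e\indices{_I^\mu} - \bar{e}\indices{_I^\mu})\partial_\mu\varepsilon_k$, which from equation \ref{eq:gVielbein} is $O(\mathrm{e}^{-nr})$ and hence subdominant, so your accounting is unaffected. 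The ``main obstacle'' you flag at the end --- verifying that no $r$-derivative enhancement spoils the decay --- is exactly what the paper carries out via equations \ref{eq:spinConnection} and \ref{eq:gVielbein}, finding the slowest spin-connection-difference decay to be $O(\mathrm{e}^{-(n-1)r})$ (in the mixed radial--tangential components), confirming the rate you assumed.
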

\begin{proof}
    First note that to be in $L^2$, an object must decay faster than $O(\mathrm{e}^{-(n-2)r/2})$ because the integration measure over $\Sigma_t$ is $O(\mathrm{e}^{(n-2)r})$. Next, recall that given a vielbein, $e\indices{_a^\mu}\partial_\mu$, the spin connection coefficients are defined as
    \begin{align}
        \omega_{bca} &= \frac{1}{2}\left(g(e_a, [e_b, e_c]) - g(e_b, [e_c, e_a]) + g(e_c, [e_b, e_a])\right).
        \label{eq:spinConnection}
    \end{align}
    Since $\varepsilon_k$ is a background Killing spinor, with the vielbein chosen in equation \ref{eq:gVielbein} (along with $e_1 = \partial_r)$, one finds
    \begin{align}
        \nabla_M\varepsilon_k &= \left(-\frac{1}{2}\mathrm{e}^{-nr}(f_{(n-1)np} - \bar{f}_{(n-1)np})\bar{f}^{pm}\bar{e}^{(\bar{f})n}_M + O(\mathrm{e}^{-(n+1)r})\right)\partial_m\varepsilon_k \nonumber \\
        &\,\,\,\,\,\,\, - \frac{1}{4}(\omega_{ab M} - \overbar{\omega}_{ab M})\gamma^{ab}\varepsilon_k + \mathcal{A}_M\varepsilon_k \,\,\,\,\,\mathrm{and} \\
        \nabla_1\varepsilon_k &= -\frac{1}{4}(\omega_{ab 1} - \overbar{\omega}_{ab 1})\gamma^{ab}\varepsilon_k + \mathcal{A}_1\varepsilon_k.
    \end{align}
    $\mathcal{A}_a\varepsilon_k$ is comfortably in $L^2$ from the decay assumption on $||\mathcal{A}_a||_0$. Likewise, the term proportional to $\mathrm{e}^{-nr}\partial_m\varepsilon_k$ also easily decays fast enough to be in $L^2$. Finally, for the connection terms, using equations \ref{eq:spinConnection} and \ref{eq:gVielbein}, one finds
    \begin{align}
        \omega_{NPM} - \bar{\omega}_{NPM} &= O(\mathrm{e}^{-nr}),
        \label{eq:wNPMDecay} \\
        \omega_{MN1} - \bar{\omega}_{MN1} &= O(\mathrm{e}^{-(n-1)r}), \\
        \omega_{1NM} - \bar{\omega}_{1NM} &= O(\mathrm{e}^{-(n-1)r}) \,\,\,\mathrm{and} 
        \label{eq:w1NMDecay} \\
        \omega_{1M1} - \bar{\omega}_{1M1} &= 0.
    \end{align}
    In summary, all the terms decay quickly enough to have $\nabla_I\varepsilon_k \in L^2$.
\end{proof}
\begin{definition}[$p_M$]
    \label{def:pM}
    For future notational convenience, define
    \begin{align}
        p_{M} &= e_M^{(f_{(0)})m}e_0^{(f_{(0)})n}(f_{(n-1)mn} - \bar{f}_{(n-1)mn}) + \delta_{M0}f_{(0)}^{mn}(f_{(n-1)mn} - \bar{f}_{(n-1)mn}) \\
        &= \delta_{M0}\hat{f}_{(0)}^{mn}(f_{(n-1)mn} - \bar{f}_{(n-1)mn}) + \delta\indices{^A_M}e_A^{(f_{(0)})m}n_{(0)}^n(f_{(n-1)mn} - \bar{f}_{(n-1)mn}).
    \end{align}
\end{definition}
\begin{theorem}[Positive energy theorem]
    \label{thm:generalPositveEnergy}
    If the Einstein equation holds and $\exists$ a non-zero $\varepsilon_k$ with $\varepsilon_k$ being $O(\mathrm{e}^{r/2})$ near $\Sigma_{t, \infty}$, then $\exists \epsilon$ such that $\gamma^I\nabla_I\varepsilon = 0$ and 
    \begin{align}
        Q(\varepsilon) &= \frac{n-1}{2}\int_{\Sigma_{t, \infty}}\mathrm{e}^{-r}p_M\bar{\varepsilon}_k\gamma^M\varepsilon_k\sqrt{\mathrm{det}(f_{(0)\alpha\beta})}\,\mathrm{d}^{n-2}x \nonumber \\
        &\,\,\,\,\,\,\, + \int_{\Sigma_{t, \infty}}\mathrm{e}^{(n-2)r}\varepsilon^\dagger_k\left(\gamma^1\gamma^A\mathcal{A}_A + \mathcal{A}_A^\dagger\gamma^A\gamma^1\right)\varepsilon_k\sqrt{\mathrm{det}(f_{(0)\alpha\beta})}\,\mathrm{d}^{n-2}x  
        \label{eq:qBoundary} \\
        &= 2\int_{\Sigma_t}\left((\nabla_I\varepsilon)^\dagger\nabla^I\varepsilon + \varepsilon^\dagger\mathbb{M}\varepsilon\right)\mathrm{d}V \\
        &\geq 0.
    \end{align}
\end{theorem}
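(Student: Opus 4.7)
The plan is to combine three main ingredients: (i) existence of a solution to the Witten-Dirac equation with prescribed asymptotics governed by $\varepsilon_k$, (ii) the Lichnerowicz identity of Theorem \ref{thm:lichnerowicz} to control the sign of the bulk integral, and (iii) Stokes' theorem together with the Fefferman-Graham expansion to recast $Q(\varepsilon)$ as the boundary expression in equation \ref{eq:qBoundary}.

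First I would invoke the appendix to obtain a Dirac spinor $\varepsilon$ with $\gamma^I\nabla_I\varepsilon=0$ and with $\varepsilon-\varepsilon_k$ lying in an appropriate weighted $L^2$ space, so that $\varepsilon$ asymptotes to $\varepsilon_k$ at $\mathcal{I}$. The inhomogeneous source governing the correction is essentially $\gamma^I\nabla_I\varepsilon_k$, which is in $L^2$ by Lemma \ref{thm:killingSpinorL2}; the decay hypotheses on $\|\mathcal{A}_a\|_0$ and $\|\mathbb{M}\|_0$, together with the $O(\mathrm{e}^{r/2})$ growth of $\varepsilon_k$, are exactly what is needed to make this solvable.

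Second, with $\gamma^I\nabla_I\varepsilon=0$ the cross term in Theorem \ref{thm:lichnerowicz} drops out, so
\begin{align*}
Q(\varepsilon) = \int_{\Sigma_t}n_a D_b E^{ba}(\varepsilon)\,\mathrm{d}V = 2\int_{\Sigma_t}\Big((\nabla_I\varepsilon)^\dagger\nabla^I\varepsilon + \varepsilon^\dagger\mathbb{M}\varepsilon\Big)\mathrm{d}V.
\end{align*}
Both summands are pointwise non-negative (the first trivially, the second by the non-negativity assumption on $\mathbb{M}$), giving $Q(\varepsilon)\geq 0$. This is the easy half of the chain of equalities.

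Third, to recover the boundary expression, I would apply Stokes' theorem to $E^{ba}(\varepsilon)$ on $\Sigma_t$. There is no inner boundary contribution (by the global setup), so only the flux through $\Sigma_{t,\infty}$ survives; with vielbein indices the integrand reduces to $E^{10}(\varepsilon)$ times the induced measure. I would then split $\varepsilon=\varepsilon_k+(\varepsilon-\varepsilon_k)$ and argue that the correction contributes nothing in the limit, because $\varepsilon-\varepsilon_k$ and its $\nabla$-derivative decay faster than the $O(\mathrm{e}^{(n-2)r})$ growth of the measure. Thus only the ``background'' piece $\overline{\varepsilon}_k\gamma^{1 0 c}\nabla_c\varepsilon_k+\mathrm{c.c.}$ contributes.

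Fourth, I would expand $\nabla_c\varepsilon_k$ using $\overbar{D}_c\varepsilon_k+\tfrac{\mathrm{i}}{2}\bar\gamma_c\varepsilon_k=0$ to obtain
\begin{align*}
\nabla_c\varepsilon_k = (D_c-\overbar{D}_c)\varepsilon_k + \tfrac{\mathrm{i}}{2}(\gamma_c-\bar\gamma_c)\varepsilon_k + \mathcal{A}_c\varepsilon_k.
\end{align*}
The spin-connection and vielbein differences, expanded via the Fefferman-Graham ansatz and the decay rates \ref{eq:wNPMDecay}--\ref{eq:w1NMDecay}, are $O(\mathrm{e}^{-(n-1)r})$ and are linear in $f_{(n-1)mn}-\bar{f}_{(n-1)mn}$. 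Combining this with the $O(\mathrm{e}^{(n-2)r})$ measure yields a finite limit; reorganising the resulting gamma-matrix contractions via $\gamma^{abc}\gamma_c$ identities and using the definition of $p_M$ from definition \ref{def:pM} produces the first line of equation \ref{eq:qBoundary}. The $\mathcal{A}_c\varepsilon_k$ terms, symmetrised with their Dirac conjugates and with $\gamma^{10c}$ forcing the remaining index to lie tangent to $S$, directly furnish the second line.

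The main obstacle is the analytic step: solving the Witten equation on a non-compact spacetime with prescribed growth at infinity and ensuring the correction sits in a function space in which the boundary integrals at infinity pick up no contribution from $\varepsilon-\varepsilon_k$. This is exactly what the cited appendix is designed to handle. The remaining algebra, while lengthy, is mechanical once the correct orders in $\mathrm{e}^{-r}$ are tracked and the Killing spinor equation is used to trade derivatives of $\varepsilon_k$ for algebraic gamma-matrix expressions.
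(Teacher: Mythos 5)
Your overall architecture matches the paper's: invert the modified Dirac operator to produce $\varepsilon$ with $\gamma^I\nabla_I\varepsilon = 0$ and $\varepsilon \to \varepsilon_k$; use Theorem~\ref{thm:lichnerowicz} to identify $Q(\varepsilon)$ with a manifestly non-negative bulk integral; and convert $Q$ into a boundary flux, expanding via the Fefferman-Graham asymptotics and the background Killing spinor equation to land on Equation~\ref{eq:qBoundary}. The order-counting you describe (keeping only $O(\mathrm{e}^{-(n-1)r})$ matrix entries against the $O(\mathrm{e}^{(n-2)r})$ measure, trading spin-connection differences for $p_M$, the $\mathcal{A}_A$ term furnishing the second line) is exactly what the paper does.

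There is, however, a genuine gap in your step that discards the boundary contribution of $\varepsilon - \varepsilon_k$. You assert that ``$\varepsilon-\varepsilon_k$ and its $\nabla$-derivative decay faster than the $O(\mathrm{e}^{(n-2)r})$ growth of the measure,'' i.e. a pointwise decay estimate. But what the appendix actually delivers is that $\Psi = \varepsilon_k - \varepsilon$ lies in $\mathcal{H}$, a Hilbert-space completion whose elements are only known to be in $H^1_{\mathrm{loc}}$ and satisfy a weighted $L^2$ (Poincar\'e-type) bound. That is an integrated decay statement, not a pointwise one, and it does not by itself let you evaluate a surface integral at $r = \infty$ and conclude it vanishes. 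To make your argument rigorous you would need additional elliptic regularity plus weighted Sobolev embedding to upgrade the $L^2$ control to pointwise decay — nontrivial extra work that you do not supply.

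The paper sidesteps this precisely because the issue is real: it writes $\Psi = \lim_m \psi_m$ with $\psi_m \in C_c^\infty$, sets $\varepsilon_m = \varepsilon_k - \psi_m$, and applies Stokes' theorem (Lemma~\ref{thm:antisymmetricDerivative}) to $\varepsilon_m$. Because $\psi_m$ has compact support, the boundary integral at $\Sigma_{t,\infty}$ automatically only sees $\varepsilon_k$; no pointwise decay assertion about $\Psi$ is ever needed at the boundary. It then invokes a separate result, Theorem~\ref{thm:bulkConvergence}, to show $\lim_m Q(\varepsilon_m) = Q(\varepsilon)$ in the bulk. That approximation-plus-convergence scheme is the piece your proposal is missing; without it (or an elliptic-regularity replacement), the claim that only the $\varepsilon_k$ piece survives the boundary integral is not justified.
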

Note that the growth/decay of $\varepsilon_k$ and $||\mathcal{A}_a||_0$ ensure equation \ref{eq:qBoundary} is convergent.
\begin{proof}
    From lemma \ref{thm:killingSpinorL2} and the constructions of appendix \ref{sec:invertDirac}, there exists a functional space, $\mathcal{H}$, and a spinor, $\Psi \in \mathcal{H}$, such that $\gamma^I\nabla_I\Psi = \gamma^I\nabla_I\varepsilon_k$. Choose $\varepsilon = \varepsilon_k - \Psi$.

    From appendix \ref{sec:invertDirac}, there exists a Cauchy sequence of compactly supported, smooth spinors, $\{\psi_m\}_{m = 0}^{\infty}$, whose limit (under an appropriate norm) is $\Psi$. Next, let $\varepsilon_m = \varepsilon_k - \psi_m$ so that $\lim_{m\to\infty}\varepsilon_m = \varepsilon$. Then, since $\psi_m$ is compactly supported, in a vielbein where $n_a = -\delta_{a0}$ and $\partial_r = e_1$ in the asymptotic end, lemma \ref{thm:antisymmetricDerivative} implies
    \begin{align}
        Q(\varepsilon_m) &= \int_{\Sigma_{t, \infty}}E^{01}(\varepsilon_m)\mathrm{d}A = \int_{\Sigma_{t, \infty}}E^{01}(\varepsilon_k)\mathrm{d}A.
    \end{align}
    Since the RHS doesn't depend on $m$,
    \begin{align}
        \lim_{m\to\infty} Q(\varepsilon_m) &= \int_{\Sigma_{t, \infty}}E^{01}(\varepsilon_k)\mathrm{d}A \\
        &= \int_{\Sigma_{t, \infty}}\big(\varepsilon_k^\dagger\gamma^1\gamma^AD_A\varepsilon_k + D_A(\varepsilon_k)^\dagger\gamma^A\gamma^1\varepsilon_k - \mathrm{i}(n-2)\varepsilon_k^\dagger\gamma^1\varepsilon_k \nonumber \\
        &\,\,\,\,\,\,\, + \varepsilon_k^\dagger\gamma^1\gamma^A\mathcal{A}_A\varepsilon_k + \varepsilon_k^\dagger \mathcal{A}_A^\dagger\gamma^A\gamma^1\varepsilon_k\big)\mathrm{d}A.
        \label{eq:energyLimit}
    \end{align}
    To leading order, the measure, $\mathrm{d}A$, is $\mathrm{e}^{(n-2)r}\sqrt{\mathrm{det}(f_{(0)\alpha\beta})}\,\mathrm{d}^2x\cdots\mathrm{d}^{n-1}x$. This $\mathrm{e}^{(n-2)r}$ growth and the $O(\mathrm{e}^{r/2})$ growth of $\varepsilon_k$ means it suffices to keep only terms that decay as $O(\mathrm{e}^{-(n-1)r})$ or slower in the matrices in $E^{01}(\varepsilon_k)$. It's assumed $\mathcal{A}_A$ decays as $O(\mathrm{e}^{-(n-1)r})$, so those terms are kept as they are. Next, consider the derivative terms. Since $\varepsilon_k$ is a background Killing spinor,
    \begin{align}
        D_A\varepsilon_k &= (e\indices{_A^m} - \bar{e}\indices{_A^m})\partial_m\varepsilon_k - \frac{1}{4}(\omega_{abA} - \overbar{\omega}_{abA})\gamma^{ab}\varepsilon_k - \frac{\mathrm{i}}{2}\gamma_A\varepsilon_k.
    \end{align}
    From equation \ref{eq:gVielbein}, $(e\indices{_A^m} - \bar{e}\indices{_A^m})$ is $O(\mathrm{e}^{-nr})$ and hence can be ignored. Likewise, from equations \ref{eq:wNPMDecay} and \ref{eq:w1NMDecay}, only the connection coefficient terms where either $a$ or $b$ is 1 need to be retained. That leaves
    \begin{align}
        \varepsilon_k^\dagger\gamma^1\gamma^AD_A\varepsilon_k &\to -\frac{1}{2}(\omega_{1MA} - \overbar{\omega}_{1MA})\varepsilon_k^\dagger\gamma^A\gamma^M\varepsilon_k + \frac{\mathrm{i}}{2}(n-2)\varepsilon_k^\dagger\gamma^1\varepsilon_k.
    \end{align}
    From equation \ref{eq:spinConnection}, the connection difference is
    \begin{align}
        2(\omega_{1MA} - \overbar{\omega}_{1MA}) &= (g(e_A, [\partial_r, e_M]) + g(e_M, [\partial_r, e_A]) - \bar{g}(\bar{e}_A, [\partial_r, \bar{e}_M]) - \bar{g}(\bar{e}_M, [\partial_r, \bar{e}_A])).
    \end{align}
    This is symmetric in $A$ and $M$. Hence, it follows that
    \begin{align}
        &-\frac{1}{2}(\omega_{1MA} - \overbar{\omega}_{1MA})\varepsilon_k^\dagger\gamma^A\gamma^M\varepsilon_k \nonumber \\
        &= \frac{1}{2}\delta^{AB}(g(e_A, [\partial_r, e_B]) - \bar{g}(\bar{e}_A, [\partial_r, \bar{e}_B]))\varepsilon_k^\dagger\varepsilon_k \nonumber \\
        &\,\,\,\,\,\,\, -\frac{1}{4}(g(e_A, [\partial_r, e_0]) + g(e_0, [\partial_r, e_A]) - \bar{g}(\bar{e}_A, [\partial_r, \bar{e}_0]) - \bar{g}(\bar{e}_0, [\partial_r, \bar{e}_A]))\varepsilon_k^\dagger\gamma^A\gamma^0\varepsilon_k.
        \label{eq:connectionDifferenceAM}
    \end{align}
    Each term in equation \ref{eq:connectionDifferenceAM} can be calculated using equation \ref{eq:gVielbein}. In particular,
        \begin{align}
        &g(e_M, [\partial_r, e_N]) \nonumber \\
        &= g\bigg(\mathrm{e}^{-r}\bar{e}_M^{(\bar{f})m}\partial_m - \frac{1}{2}\mathrm{e}^{-nr}\bar{e}_M^{(\bar{f})m}(f_{(n-1)mp} - \bar{f}_{(n-1)mp})\bar{f}^{pn}_{(0)}\partial_n + O(\mathrm{e}^{-(n+1)r}), \nonumber \\
        &\,\,\,\,\,\,\, -\mathrm{e}^{-r}\bar{e}_N^{(\bar{f})q}\partial_q + \mathrm{e}^{-r}\partial_r\big(\bar{e}_N^{(\bar{f})q}\big)\partial_q + \frac{n}{2}\mathrm{e}^{-nr}\bar{e}_N^{(\bar{f})q}(f_{(n-1)qr} - \bar{f}_{(n-1)qr})\bar{f}^{rs}_{(0)}\partial_s + O(\mathrm{e}^{-(n+1)r})\bigg) \\
        &= \bar{e}_M^{(\bar{f})m}\partial_r\big(\bar{e}_N^{(\bar{f})q}\big)\Big(\bar{f}_{mq} + O(\mathrm{e}^{-(n-1)r})\Big) \nonumber \\
        &\,\,\,\,\,\,\, - \bar{e}_M^{(\bar{f})m}\bar{e}_N^{(\bar{f})q}\Big(\bar{f}_{mq} + \mathrm{e}^{-(n-1)r}(f_{(n-1)mq} - \bar{f}_{(n-1)mq}) - \frac{1}{2}\mathrm{e}^{-(n-1)r}(f_{(n-1)mp} - \bar{f}_{(n-1)mp})\bar{f}^{pn}_{(0)}\bar{f}_{(0)nq} \nonumber \\
        &\,\,\,\,\,\,\, - \frac{n}{2}\mathrm{e}^{-(n-1)r}(f_{(n-1)qr} - \bar{f}_{(n-1)qr})\bar{f}^{rs}_{(0)}\bar{f}_{(0)ms} + O(\mathrm{e}^{-nr})\Big) \\
        &= \bar{e}_M^{(\bar{f})m}\partial_r\big(\bar{e}_N^{(\bar{f})q}\big)\bar{f}_{mq} + O(\mathrm{e}^{-nr}) -\eta_{MN} \nonumber \\
        &\,\,\,\,\,\,\, + \frac{n-1}{2}\mathrm{e}^{-(n-1)r}\bar{e}_M^{(\bar{f})m}\bar{e}_N^{(\bar{f})n}(f_{(n-1)mn} - \bar{f}_{(n-1)mn}) + O(\mathrm{e}^{-nr})
    \end{align}
    and likewise
    \begin{align}
        \bar{g}(\bar{e}_M, [\partial_r, \bar{e}_N])
        &= \bar{g}\left(\mathrm{e}^{-r}\bar{e}_M^{(\bar{f})m}\partial_m, -\mathrm{e}^{-r}\bar{e}_N^{(\bar{f})n}\partial_n + \mathrm{e}^{-r}\partial_r\big(\bar{e}_N^{(\bar{f})n}\partial_n\big)\right) \\
        &= -\eta_{MN} + \bar{e}_M^{(\bar{f})m}\partial_r\big(\bar{e}_N^{(\bar{f})n}\big)\bar{f}_{mn}.
    \end{align}
    Putting them together yields
    \begin{align}
        g(e_M, [\partial_r, e_N]) - \bar{g}(\bar{e}_M, [\partial_r, \bar{e}_N]) &= \frac{n-1}{2}\mathrm{e}^{-(n-1)r}\bar{e}_M^{(\bar{f})m}\bar{e}_N^{(\bar{f})n}(f_{(n-1)mn} - \bar{f}_{(n-1)mn})
    \end{align}
    up to $O(\mathrm{e}^{-nr})$. Substituting this back into equation \ref{eq:connectionDifferenceAM} then implies
    \begin{align}
        -\frac{1}{2}(\omega_{1MA} - \overbar{\omega}_{1MA})\varepsilon_k^\dagger\gamma^A\gamma^M\varepsilon_k
        &= \delta^{AB}\frac{n-1}{4}\mathrm{e}^{-(n-1)r}\bar{e}_A^{(\bar{f})m}\bar{e}_B^{(\bar{f})n}(f_{(n-1)mn} - \bar{f}_{(n-1)mn})\varepsilon_k^\dagger\varepsilon_k \nonumber \\
        &\,\,\,\,\,\,\, + \frac{n-1}{4}\mathrm{e}^{-(n-1)r}\bar{e}_A^{(\bar{f})m}\bar{e}_0^{(\bar{f})n}(f_{(n-1)mn} - \bar{f}_{(n-1)mn})\bar{\varepsilon}_k\gamma^A\varepsilon_k \\
        &= \frac{n-1}{4}\mathrm{e}^{-(n-1)r}\bar{e}_M^{(\bar{f})m}\bar{e}_0^{(\bar{f})n}(f_{(n-1)mn} - \bar{f}_{(n-1)mn})\bar{\varepsilon}_k\gamma^M\varepsilon_k \nonumber \\
        &\,\,\,\,\,\,\, + \frac{n-1}{4}\mathrm{e}^{-(n-1)r}\bar{f}^{mn}(f_{(n-1)mn} - \bar{f}_{(n-1)mn})\varepsilon_k^\dagger\varepsilon_k
    \end{align}
    to leading order. The $\mathrm{e}^{-(n-1)r}$ factor and $\varepsilon_k = O(\mathrm{e}^{r/2})$ mean everything else can be kept to $O(1)$; anything lower order will go to zero in equation \ref{eq:energyLimit}. Therefore ultimately,
    \begin{align}
        -\frac{1}{2}(\omega_{1MA} - \overbar{\omega}_{1MA})\varepsilon_k^\dagger\gamma^A\gamma^M\varepsilon_k &= \frac{n-1}{4}\mathrm{e}^{-(n-1)r}p_M\bar{\varepsilon}_k\gamma^M\varepsilon_k + O(\mathrm{e}^{-(n-1)r}).
    \end{align}
    Substituting this back implies
    \begin{align}
        \varepsilon_k^\dagger\gamma^1\gamma^AD_A\varepsilon_k &\to \frac{n-1}{4}\mathrm{e}^{-(n-1)r}p_M\bar{\varepsilon}_k\gamma^M\varepsilon_k + \frac{\mathrm{i}}{2}(n-2)\varepsilon_k^\dagger\gamma^1\varepsilon_k
    \end{align}
    and consequently equation \ref{eq:energyLimit} reduces to
    \begin{align}
        \lim_{m \to \infty}Q(\varepsilon_m) &= \frac{n-1}{2}\int_{\Sigma_{t, \infty}}\mathrm{e}^{-r}p_M\bar{\varepsilon}_k\gamma^M\varepsilon_k\sqrt{\mathrm{det}(f_{(0)\alpha\beta})}\,\mathrm{d}^{n-2}x \nonumber \\
        &\,\,\,\,\,\,\, + \int_{\Sigma_{t, \infty}}\mathrm{e}^{(n-2)r}\varepsilon^\dagger_k\left(\gamma^1\gamma^A\mathcal{A}_A + \mathcal{A}_A^\dagger\gamma^A\gamma^1\right)\varepsilon_k\sqrt{\mathrm{det}(f_{(0)\alpha\beta})}\,\mathrm{d}^{n-2}x.
    \end{align}
    Finally, the LHS converges to $Q(\varepsilon)$ by theorem \ref{thm:bulkConvergence}.
\end{proof}

\section{Energy bounds and boundary geometry}
\label{sec:boundaryGeometry}
While theorem \ref{thm:generalPositveEnergy} applies somewhat generally, more progress can be made by restricting the boundary geometry further.
\begin{definition}[AdS with cross-section, $(S, h)$]
A metric is defined to be AdS with cross-section, $(S, h)$, if and only if
\begin{align}
    g &= \mathrm{d}r\otimes\mathrm{d}r + \mathrm{e}^{2r}\left(-\left(1 + \frac{c}{4}\mathrm{e}^{-2r}\right)^2\mathrm{d}t\otimes\mathrm{d}t + \left(1 - \frac{c}{4}\mathrm{e}^{-2r}\right)^2h\right)
    \label{eq:crossSectionMetric}
\end{align}
and $h$ is a Riemannian metric on a compact, $(n - 2)$D manifold, $S$, such that $R_{AB}^{(h)} = c(n-3)\delta_{AB}$ for $c = -1$, $0$ or $1$.
\end{definition}
It can be checked that any metric of this form satisfies the vacuum Einstein equation. While these metrics may be geodesically incomplete or contain conical singularities when $(S, h)$ is not the round sphere, these global issues don't prevent them from serving as natural background metrics for asymptotically, locally AdS spacetimes with static $\mathbb{R}\times S$ boundary geometry. The main objective of this section is to prove that if $h$ is ``symmetric" in some sense, then there is a positive energy theorem for spacetimes with these asymptotics. To apply the general result in theorem \ref{thm:generalPositveEnergy}, one then seeks a Killing spinor for this background.
\begin{theorem}
    The most general solution to $D_a\varepsilon_k + \frac{\mathrm{i}}{2}\gamma_a\varepsilon_k = 0$ for a metric that is AdS with cross-section, $(S, h)$, is
    \begin{align}
        \varepsilon_k = \begin{cases}
            \mathrm{e}^{r/2}P_1^-\varepsilon_h & \mathrm{for}\,\, c = 0 \\
            \mathrm{e}^{r/2}P_1^-\left(\mathrm{e}^{\mathrm{i}\gamma^0t/2} - \mathrm{i}\mathrm{e}^{-\mathrm{i}\gamma^0t/2}\right)\varepsilon_h + \frac{1}{2}\mathrm{e}^{-r/2}P_1^+\left(\mathrm{e}^{\mathrm{i}\gamma^0t/2} + \mathrm{i}\mathrm{e}^{-\mathrm{i}\gamma^0t/2}\right)\varepsilon_h & \mathrm{for}\,\,c = 1 \\
            0 & \mathrm{for}\,\,c = -1
        \end{cases}
        \label{eq:killingSpinorCrossSection}
    \end{align}
    where $P_1^\pm = \frac{1}{2}(I \pm \mathrm{i}\gamma^1)$, $\varepsilon_h$ solves $D_A^{(h)}\varepsilon_h = \frac{c}{2}\gamma_A\varepsilon_h$ and $\partial_t\varepsilon_h = 0$. In the former equation, $D_A^{(h)}$ is the formal expression for the Levi-Civita connection of $h$, e.g. if $\{e^{(h)A}\}_{A = 2}^{n - 1}$ is a vielbein for $h$ and $\omega_{BCA}^{(h)}$ are the corresponding spin coefficients, then\footnote{This is only a formal Levi-Civita connection because the gamma matrices used are from the spacetime, not the cross-section. In effect, this is a Levi-Civita connection that results from a reducible representation of the Clifford algebra.} $D_A^{(h)}\varepsilon_h = e_A^{(h)\alpha}\partial_\alpha\varepsilon_h - \frac{1}{4}\omega_{BCA}^{(h)}\gamma^{BC}\varepsilon_h$.
    \label{thm:crossSectionKilling}
\end{theorem}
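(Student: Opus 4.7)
The plan is to compute the Killing spinor equation in an orthonormal frame adapted to the warped product, decouple the radial dependence via projectors built from $\gamma^1$, and then analyse the remaining $t$ and $S$ equations case by case in $c$.

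First I would pick the vielbein $e^0 = f(r)\,\mathrm{d}t$, $e^1 = \mathrm{d}r$, $e^A = g(r)\,\hat{e}^A$, where $f(r) = \mathrm{e}^r(1 + \tfrac{c}{4}\mathrm{e}^{-2r})$ and $g(r) = \mathrm{e}^r(1 - \tfrac{c}{4}\mathrm{e}^{-2r})$ and $\{\hat{e}^A\}$ is an orthonormal frame for $h$. The identities $f' = g$, $g' = f$, $f + g = 2\mathrm{e}^r$ and $f - g = -\tfrac{c}{2}\mathrm{e}^{-r}$ make the Cartan structure equations straightforward; the only nonvanishing spin connection one-forms are $\omega^0{}_1 = (g/f)\,e^0$, $\omega^A{}_1 = (f/g)\,e^A$ and $\omega^A{}_B = \hat\omega^A{}_B$, the pullback of $h$'s spin connection. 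Crucially, none of these has a component along $e^1$, so the $e_1$-direction of the Killing equation collapses to the ODE $\partial_r\varepsilon_k + \tfrac{\mathrm{i}}{2}\gamma^1 \varepsilon_k = 0$. Since $(\mathrm{i}\gamma^1)^2 = I$, the $\pm 1$ eigenspace projectors $P_1^\pm = \tfrac{1}{2}(I \pm \mathrm{i}\gamma^1)$ decouple this ODE into $\partial_r(P_1^\pm \varepsilon_k) = \mp\tfrac{1}{2} P_1^\pm \varepsilon_k$, yielding
\begin{align*}
\varepsilon_k = \mathrm{e}^{-r/2}\alpha(t,x) + \mathrm{e}^{r/2}\beta(t,x), \qquad P_1^+\alpha = \alpha, \quad P_1^-\beta = \beta.
\end{align*}

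Next I would substitute this ansatz into the $e_0$ and $e_A$ frame equations. Using $\gamma^1 P_1^\pm = \mp\mathrm{i} P_1^\pm$ together with $P_1^\pm \gamma^M = \gamma^M P_1^\mp$ for $M \in \{0, A\}$, each of these decomposes cleanly into a $P_1^+$ and a $P_1^-$ piece. The warp factors $f \pm g$ inject two independent powers of $\mathrm{e}^r$ in each piece, whose coefficients must vanish separately; after simplification one obtains the coupled first-order system on the cross-section,
\begin{align*}
\partial_t \alpha = \tfrac{\mathrm{i}c}{4}\gamma^0 \beta, \quad \partial_t \beta = \mathrm{i}\gamma^0 \alpha, \qquad \hat D_A \alpha = \tfrac{\mathrm{i}c}{4}\gamma^A \beta, \quad \hat D_A \beta = -\mathrm{i}\gamma^A \alpha,
\end{align*}
where $\hat D_A$ is the formal Levi-Civita operator on $h$ defined in the statement.

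The remainder is a case analysis in $c$. For $c = 0$ the system reduces to $\partial_t \alpha = \hat D_A \alpha = 0$ and $\hat D_A \beta = -\mathrm{i}\gamma^A \alpha$; integrability of the $\beta$-equation together with the compactness of $S$ rules out any $\alpha \ne 0$ that would force a $t$-linearly growing $\beta$, whereupon $\hat D_A \beta = 0$ identifies $\beta$ with a parallel spinor $\varepsilon_h$. For $c = 1$, differentiating the $t$-equations gives $\partial_t^2 \alpha = -\tfrac{1}{4}\alpha$, whose matrix-valued solution is built from $\mathrm{e}^{\pm \mathrm{i}\gamma^0 t/2}$; combining with the $P_1^\pm$ projections and a $t=0$ identification of $\varepsilon_h$ reproduces the stated ansatz, and the $A$-equations then collapse to $\hat D_A\varepsilon_h = \tfrac{1}{2}\gamma_A\varepsilon_h$. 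The main obstacle is the $c=-1$ case, where one must show only $\varepsilon_k = 0$ solves the system. This follows from integrability of $\hat D_A\varepsilon_h = \lambda\gamma_A\varepsilon_h$: contracting $[\hat D_A, \hat D_B]\varepsilon_h$ with $\gamma^A$ and using $\hat R_{AB} = c(n-3)\delta_{AB}$ ties $\lambda^2$ to $c$ with a definite sign; the theorem's value $\lambda = c/2 = -\tfrac{1}{2}$ is inconsistent with this relation, so no nontrivial spinor survives.
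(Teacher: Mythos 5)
Your proposal follows the same route as the paper's proof for most of its length: the same warped-product vielbein, the same spin connection computation, the same $\gamma^1$-eigenprojector decomposition of the radial equation, and the same coupled first-order system $(\partial_t\alpha, \partial_t\beta, \hat D_A\alpha, \hat D_A\beta)$ on the cross-section. Your $c=0$ and $c=1$ treatments are brief but are in line with what the paper does (Bochner on compact $S$ to kill the $t$-linear piece for $c=0$, and diagonalisation of the $t$-equation for $c=1$; the paper implements the latter via the change of basis $\psi = \varepsilon_- + 2\varepsilon_+$, $\varphi = \varepsilon_- - 2\varepsilon_+$ rather than by differentiating $\partial_t$ twice, but the two are equivalent).

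However, your $c=-1$ argument has a genuine gap. You claim the integrability condition of a real Killing spinor equation with $\lambda = c/2 = -\tfrac{1}{2}$ is inconsistent with $\hat R_{AB} = c(n-3)\delta_{AB}$. But the $c=-1$ system does not reduce to that equation. Diagonalising $\hat D_A\alpha = -\tfrac{\mathrm{i}}{4}\gamma_A\beta$, $\hat D_A\beta = -\mathrm{i}\gamma_A\alpha$ via $\chi_\pm = \beta \pm 2\alpha$ yields the \emph{imaginary} Killing spinor equation $\hat D_A\chi_\pm = \mp\tfrac{\mathrm{i}}{2}\gamma_A\chi_\pm$, whose Killing constant squared is $\lambda^2 = -\tfrac{1}{4} = c/4$ — exactly what integrability demands. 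So integrability alone rules out nothing for $c=-1$; imaginary Killing spinors on hyperbolic manifolds are perfectly consistent with the local integrability constraint. What actually kills them is the compactness of $S$: the paper establishes $\hat D^A\hat D_A\varepsilon_- = \tfrac{n-2}{4}\varepsilon_-$, integrates by parts over compact $S$ to get $-\int|\hat D_A\varepsilon_-|^2 = \tfrac{n-2}{4}\int|\varepsilon_-|^2$, and concludes both sides vanish. You need this Bochner-type argument (or an appeal to Baum's theorem that compact complete manifolds admit no imaginary Killing spinors); the pointwise integrability identity you invoke does not suffice.
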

\begin{proof}
    In coordinates, the equation to solve is
    \begin{align}
        e\indices{_a^\mu}\partial_\mu\varepsilon_k - \frac{1}{4}\omega_{bca}\gamma^{bc}\varepsilon_k + \frac{\mathrm{i}}{2}\gamma_a\varepsilon_k = 0.
        \label{eq:epsilonKCoordinates}
    \end{align}
    The most natural vielbein to choose in this context is
    \begin{align}
        e^0 &= \left(\mathrm{e}^r + \frac{c}{4}\mathrm{e}^{-r}\right)\mathrm{d}t, \,\, e^1 = \mathrm{d}r \,\,\mathrm{and} \,\,e^A = \left(\mathrm{e}^r - \frac{c}{4}\mathrm{e}^{-r}\right)e^{(h)A},
    \end{align}
    where $\{e^{(h)A}\}_{A = 2}^{n - 1}$ is a vielbein for $h$. Then, one finds 
    \begin{align}
        \omega_{01} &= -\frac{\mathrm{e}^r - \frac{c}{4}\mathrm{e}^{-r}}{\mathrm{e}^r + \frac{c}{4}\mathrm{e}^{-r}}e^0, \,\,
        \omega_{A1} = \frac{\mathrm{e}^r + \frac{c}{4}\mathrm{e}^{-r}}{\mathrm{e}^r - \frac{c}{4}\mathrm{e}^{-r}}e^A \,\,\mathrm{and} \,\,
        \omega_{AB} = \omega_{AB}^{(h)},
    \end{align}
    where $\omega_{AB}^{(h)}$ are the connection 1-forms of $h$. 
    
    Thus, the $a = 1$ component of equation \ref{eq:epsilonKCoordinates} says $0 = \partial_r\varepsilon_k + \frac{\mathrm{i}}{2}\gamma_1\varepsilon_k$, which immediately integrates to $\varepsilon_k = \mathrm{e}^{-\mathrm{i}\gamma^1r/2}\varepsilon_r$ for some spinor, $\varepsilon_r$, that doesn't depend on $r$. Projecting $\varepsilon_r$ onto eigenspaces of $\gamma^1$ using $P^\pm_1 = \frac{1}{2}(I \pm \mathrm{i}\gamma^1)$ then yields 
    \begin{align}
        \varepsilon_k = \mathrm{e}^{-\mathrm{i}\gamma^1r/2}(P_1^-\varepsilon_- + P_1^+\varepsilon_+) = \mathrm{e}^{r/2}P_1^-\varepsilon_- + \mathrm{e}^{-r/2}P_1^+\varepsilon_+
    \end{align}
    for spinors, $\varepsilon_\pm$, that don't depend on $r$.

    Next, with this expression for $\varepsilon_k$, one finds the $a = 0$ component of equation \ref{eq:epsilonKCoordinates} reduces to
    \begin{align}
        0 &= \mathrm{e}^{r/2}P_1^-\left(\partial_t\varepsilon_- - \mathrm{i}\gamma^0\varepsilon_+\right) + \mathrm{e}^{-r/2}P_1^+\left(\partial_t\varepsilon_+ - \frac{\mathrm{i}c}{4}\gamma^0\varepsilon_-\right).
    \end{align}
    Since the two $\gamma^1$ eigenspaces have no non-trivial intersection, it follows that
    \begin{align}
        \partial_t\varepsilon_- = \mathrm{i}\gamma^0\varepsilon_+\,\,\,\mathrm{and}\,\,\,\, \partial_t\varepsilon_+ = \frac{\mathrm{i}c}{4}\gamma^0\varepsilon_-.
        \label{eq:dtEpsilonPM}
    \end{align}
    Finally, noting $\omega_{AB} = \omega_{AB}^{(h)} \implies \omega_{ABC} = \frac{1}{\mathrm{e}^r - \frac{c}{4}\mathrm{e}^{-r}}\omega_{ABC}^{(h)}$ along the way, the $a = A$ component of equation \ref{eq:epsilonKCoordinates} reduces to
    \begin{align}
        0 &= \mathrm{e}^{r/2}P_1^-\left(D_A^{(h)}\varepsilon_-+ \mathrm{i}\gamma_A\varepsilon_+\right) + \mathrm{e}^{-r/2}P_1^+\left(D_A^{(h)}\varepsilon_+ - \frac{\mathrm{i}c}{4}\gamma_A\varepsilon_-\right),
    \end{align}
    from which it follows that 
    \begin{align}
        D_A^{(h)}\varepsilon_- = -\mathrm{i}\gamma_A\varepsilon_+\,\,\, \mathrm{and} \,\,\, D_A^{(h)}\varepsilon_+ = \frac{\mathrm{i}c}{4}\gamma_A\varepsilon_-.
        \label{eq:dAEpsilonPM}
    \end{align}
    
    First consider equations \ref{eq:dtEpsilonPM} and \ref{eq:dAEpsilonPM} for $c = 0$. From the former it immediately follows that $\varepsilon_- = \mathrm{i}t\gamma^0\varepsilon_+ + \varepsilon_h$ for some spinor, $\varepsilon_h$, that (like $\varepsilon_+$) doesn't depend on $t$. Then, the latter implies $D^{(h)A}D_A^{(h)}\varepsilon_h = -\mathrm{i}\gamma^AD_A^{(h)}\varepsilon_+ - 0 = 0$.

    Therefore $D_A^{(h)}\varepsilon_h = 0$ because integrating on the cross-section, $S$, says
    \begin{align}
        0 &= \int_{S}\varepsilon_h^\dagger D^{(h)A}D_A^{(h)}(\varepsilon_h)\mathrm{d}A(h) = -\int_{S}\left(D^{(h)A}\varepsilon_h\right)^\dagger D_A^{(h)}(\varepsilon_h)\mathrm{d}A(h).
    \end{align}
    Then $-\mathrm{i}\gamma_A\varepsilon_+ = D_A^{(h)}\varepsilon_- = 0 \implies \varepsilon_+ = 0$ and leaves $\varepsilon_- = \varepsilon_h$, completing the $c = 0$ analysis.

    Next, consider $c = -1$. Equation \ref{eq:dAEpsilonPM} now implies $D^{(h)A}D_A^{(h)}\varepsilon_- = \frac{n-2}{4}\varepsilon_-$. Therefore,
    \begin{align}
        \int_{S}\varepsilon_-^\dagger\varepsilon_-\mathrm{d}A(h) &= \frac{4}{n-2}\int_{S}\varepsilon_-^\dagger D^{(h)A}D_A^{(h)}(\varepsilon_-)\mathrm{d}A(h) \\
        &= -\frac{4}{n-2}\int_{S}(D^{(h)A}\varepsilon_-)^\dagger D_A^{(h)}(\varepsilon_-)\mathrm{d}A(h).
    \end{align}
    As the LHS is non-negative and the RHS is non-positive, it must be that both are zero. Hence, $\varepsilon_- = 0$ from the LHS and subsequently $\varepsilon_+ = 0$ from $D_A^{(h)}\varepsilon_- = -\mathrm{i}\gamma_A\varepsilon_+$.

    Finally, consider $c = 1$. Let
    \begin{align}
        \psi = \varepsilon_- + 2\varepsilon_+ \,\,\,\mathrm{and}\,\,\,\varphi = \varepsilon_- - 2\varepsilon_+ \iff \varepsilon_- = \frac{1}{2}(\psi + \varphi) \,\,\,\mathrm{and}\,\,\, \varepsilon_+ = \frac{1}{4}(\psi - \varphi).
    \end{align}
    Then, equations \ref{eq:dtEpsilonPM} and \ref{eq:dAEpsilonPM} are
    \begin{align}
        \partial_t\psi = \frac{\mathrm{i}}{2}\gamma^0\psi,\,\, \partial_t\varphi = -\frac{\mathrm{i}}{2}\gamma^0\varphi, \,\, D_A^{(h)}\psi = \frac{\mathrm{i}}{2}\gamma_A\varphi\,\, \mathrm{and}\,\, D_A^{(h)}\varphi = -\frac{\mathrm{i}}{2}\gamma_A\psi.
    \end{align}
    The first two immediately integrate to $\psi = 2\mathrm{e}^{\mathrm{i}\gamma^0t/2}\psi_t$ and $\varphi = 2\mathrm{e}^{-\mathrm{i}\gamma^0t/2}\varphi_t$ for some spinors, $\psi_t$ and $\varphi_t$, that don't depend on $t$ or $r$. Equivalently,
    \begin{align}
        \varepsilon_- = \mathrm{e}^{\mathrm{i}\gamma^0t/2}\psi_t + \mathrm{e}^{-\mathrm{i}\gamma^0t/2}\varphi_t\,\,\, \mathrm{and}\,\,\, \varepsilon_+ = \frac{1}{2}(\mathrm{e}^{\mathrm{i}\gamma^0t/2}\psi_t - \mathrm{e}^{-\mathrm{i}\gamma^0t/2}\varphi_t).
    \end{align}
    By construction, $P_1^\pm\varepsilon_\pm = \varepsilon_\pm \iff \varepsilon_\pm = \pm\mathrm{i}\gamma^1\varepsilon_\pm$ without loss of generality. Therefore,
    \begin{align}
        \varepsilon_- = \mathrm{e}^{\mathrm{i}\gamma^0t/2}\psi_t + \mathrm{e}^{-\mathrm{i}\gamma^0t/2}\varphi_t = -\mathrm{i}\gamma^1(\mathrm{e}^{\mathrm{i}\gamma^0t/2}\psi_t + \mathrm{e}^{-\mathrm{i}\gamma^0t/2}\varphi_t) = -\mathrm{i}\mathrm{e}^{-\mathrm{i}\gamma^0t/2}\gamma^1\psi_t - \mathrm{i}\mathrm{e}^{\mathrm{i}\gamma^0t/2}\gamma^1\varphi_t.
    \end{align}
    Setting $t = 0, \,\pi$ this equation implies
    \begin{align}
        \psi_t + \varphi_t = -\mathrm{i}\gamma^1\psi_t - \mathrm{i}\gamma^1\varphi_t \,\,\,\mathrm{and}\,\,\,\psi_t - \varphi_t = \mathrm{i}\gamma^1\psi_t - \mathrm{i}\gamma^1\varphi_t
    \end{align}
    respectively. Putting the two equations together, it follows that $\psi_t = -\mathrm{i}\gamma^1\varphi_t$. Hence, 
    \begin{align}
        \varepsilon_- &= \mathrm{e}^{\mathrm{i}\gamma^0t/2}\psi_t + \mathrm{e}^{-\mathrm{i}\gamma^0t/2}\varphi_t = (I - \mathrm{i}\gamma^1)\mathrm{e}^{\mathrm{i}\gamma^0t/2}\psi_t \,\,\,\mathrm{and} \\
        \varepsilon_+ &= \frac{1}{2}(\mathrm{e}^{\mathrm{i}\gamma^0t/2}\psi_t - \mathrm{e}^{-\mathrm{i}\gamma^0t/2}\varphi_t) = \frac{1}{2}(I + \mathrm{i}\gamma^1)\mathrm{e}^{\mathrm{i}\gamma^0t/2}\psi_t.
    \end{align}
    Let $\varepsilon_h = \frac{1}{2}(I + \gamma^1)\psi_t \iff \psi_t = (I - \gamma^1)\varepsilon_h$. Then, equation \ref{eq:dAEpsilonPM} just says $D_A^{(h)}\varepsilon_h = \frac{1}{2}\gamma_A\varepsilon_h$, while
    \begin{align}
        \varepsilon_- &= (I - \mathrm{i}\gamma^1)\mathrm{e}^{\mathrm{i}\gamma^0t/2}(I -\gamma^1)\varepsilon_h = (I - \mathrm{i}\gamma^1)(\mathrm{e}^{\mathrm{i}\gamma^0t/2} - \mathrm{i}\mathrm{e}^{-\mathrm{i}\gamma^0t/2})\varepsilon_h \,\,\,\mathrm{and} \\
        \varepsilon_+ &= \frac{1}{2}(I + \mathrm{i}\gamma^1)\mathrm{e}^{\mathrm{i}\gamma^0t/2}(I -\gamma^1)\varepsilon_h = \frac{1}{2}(I + \mathrm{i}\gamma^1)(\mathrm{e}^{\mathrm{i}\gamma^0t/2} + \mathrm{i}\mathrm{e}^{-\mathrm{i}\gamma^0t/2})\varepsilon_h.
    \end{align}
    It can now be checked directly that both equations in \ref{eq:dAEpsilonPM} hold with these $\varepsilon_\pm$, thereby completing the $c = 1$ analysis.
\end{proof}
Solutions to $D_A^{(h)}\varepsilon_h = \frac{c}{2}\gamma_A\varepsilon_h$ are well-studied mathematical problems \cite{Wang1989, Bar1993}. However, one subtlety in comparing with the literature is that $\{\gamma^A\}_{A = 2}^{n-1}$ don't form an irreducible representation of $h$'s Clifford algebra; an irreducible representation of a (Riemannian) Clifford algebra with $n - 2$ elements would have $2^{\lfloor(n-2)/2\rfloor}\times2^{\lfloor(n-2)/2\rfloor}$ matrices, not $2^{\lfloor n/2\rfloor}\times2^{\lfloor n/2\rfloor}$ matrices like $\{\gamma^A\}_{A = 2}^{n-1} \subset \{\gamma^a\}_{a = 0}^{n-1}$. The doubled size means there are effectively two irreducible representations summed in $\gamma^A$. This degeneracy can be lifted by choosing the spacetime gamma matrices to be
\begin{align}
    \gamma^0 &= \begin{bmatrix}
        I & 0 \\
        0 & -I
    \end{bmatrix}, \,\, \gamma^1 = \begin{bmatrix}
        0 & -I \\
        I & 0
    \end{bmatrix} \,\,\mathrm{and}\,\,\gamma^A = \begin{bmatrix}
        0 & \hat{\gamma}^A \\
        \hat{\gamma}^A & 0
    \end{bmatrix},
    \label{eq:cliffordBlocks}
\end{align}
where $\hat{\gamma}^A$ are gamma matrices of the Riemannian manifold, $(S, h)$. In this representation, the spacetime spinor space can be viewed as a direct sum of the cross-section spinor space with itself. Now, $D_A^{(h)}\varepsilon_h = \frac{c}{2}\gamma_A\varepsilon_h$ can be written in a form that's truly intrinsic to the cross-section.
\begin{lemma}
    \label{thm:killingDecomposed}
    Let $\hat{D}_A^{(h)} = e_A^{(h)\alpha}\partial_\alpha - \frac{1}{4}\omega^{(h)}_{BCA}\hat{\gamma}^{BC}$ be the spin connection intrinsic to $h$. Then, the most general solution to $D_A^{(h)}\varepsilon_h = 0$ is
    \begin{align}
        \varepsilon_h &= \begin{bmatrix}
            \hat{\psi} \\
            \hat{\varphi}
        \end{bmatrix} \,\,\,\mathrm{with}\,\,\, \hat{D}_A^{(h)}\hat{\psi} = \hat{D}_A^{(h)}\hat{\varphi} = 0
    \end{align}
    and the most general solution to $D_A^{(h)}\varepsilon_h = \frac{1}{2}\gamma_A\varepsilon_h$ is
    \begin{align}
        \varepsilon_h &= \frac{1}{2}\begin{bmatrix}
        \hat{\varepsilon}_h^{(+)} + \hat{\varepsilon}_h^{(-)} \\
        \hat{\varepsilon}_h^{(+)} - \hat{\varepsilon}_h^{(-)}
    \end{bmatrix} \,\,\,\mathrm{with}\,\,\, \hat{D}_A^{(h)}\hat{\varepsilon}_h^{(\pm)} = \pm\frac{1}{2}\hat{\gamma}_A\hat{\varepsilon}_h^{(\pm)}.
    \end{align}
\end{lemma}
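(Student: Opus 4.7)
The plan is to exploit the fact that in the block representation of equation \ref{eq:cliffordBlocks}, the commutators $\gamma^{BC}$ with $B, C \in \{2, \ldots, n-1\}$ are block-diagonal, while the $\gamma^A$ themselves are purely block-off-diagonal. Concretely, I would first compute $\gamma^B\gamma^C$ in block form; since each $\gamma^B$ has only off-diagonal blocks $\hat{\gamma}^B$, the product becomes $\mathrm{diag}(\hat{\gamma}^B\hat{\gamma}^C, \hat{\gamma}^B\hat{\gamma}^C)$, and hence $\gamma^{BC} = \mathrm{diag}(\hat{\gamma}^{BC}, \hat{\gamma}^{BC})$. It follows immediately that the formal connection $D_A^{(h)} = e_A^{(h)\alpha}\partial_\alpha - \tfrac{1}{4}\omega^{(h)}_{BCA}\gamma^{BC}$ is block-diagonal and acts on each component of $\varepsilon_h = (\hat{\alpha}, \hat{\beta})^T$ as the intrinsic operator $\hat{D}_A^{(h)}$.

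With this decoupling established, the parallel case ($c = 0$) is essentially immediate: $D_A^{(h)}\varepsilon_h = 0$ splits into the two independent equations $\hat{D}_A^{(h)}\hat{\alpha} = \hat{D}_A^{(h)}\hat{\beta} = 0$, giving the claimed form after relabelling $\hat{\alpha} = \hat{\psi}$, $\hat{\beta} = \hat{\varphi}$.

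For the Killing case ($c = 1$), the off-diagonal structure of $\gamma_A$ couples the two blocks: $D_A^{(h)}\varepsilon_h = \tfrac{1}{2}\gamma_A\varepsilon_h$ becomes the paired system
\begin{align}
    \hat{D}_A^{(h)}\hat{\alpha} = \tfrac{1}{2}\hat{\gamma}_A\hat{\beta}, \qquad \hat{D}_A^{(h)}\hat{\beta} = \tfrac{1}{2}\hat{\gamma}_A\hat{\alpha}.
\end{align}
The natural way to diagonalise this is to pass to the sum/difference combinations $\hat{\varepsilon}_h^{(\pm)} = \hat{\alpha} \pm \hat{\beta}$; adding and subtracting the two equations gives $\hat{D}_A^{(h)}\hat{\varepsilon}_h^{(\pm)} = \pm\tfrac{1}{2}\hat{\gamma}_A\hat{\varepsilon}_h^{(\pm)}$, and inverting the change of basis recovers the stated block form $\varepsilon_h = \tfrac{1}{2}(\hat{\varepsilon}_h^{(+)} + \hat{\varepsilon}_h^{(-)}, \hat{\varepsilon}_h^{(+)} - \hat{\varepsilon}_h^{(-)})^T$.

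There is no real obstacle here; the whole proof is driven by the algebraic observation that the chosen representation splits the spacetime spin bundle into two copies of the cross-section spin bundle on which $\hat{D}^{(h)}$ acts irreducibly. The only mild care required is to keep track of signs when diagonalising the coupled Killing system so that the $\pm\tfrac{1}{2}$ eigenvalues of the intrinsic Killing equation are correctly matched to the sum and difference components.
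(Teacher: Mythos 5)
Your proposal is correct and follows essentially the same route as the paper: observe that $\gamma^{BC}$ is block-diagonal so $D_A^{(h)}$ acts componentwise as $\hat{D}_A^{(h)}$, that $\gamma_A$ is off-diagonal so the Killing equation couples the two components, and diagonalise the coupled system via the sum/difference combinations $\hat{\varepsilon}_h^{(\pm)} = \hat{\psi} \pm \hat{\varphi}$. The paper's proof writes out the block matrix computations explicitly but the logical structure is identical.
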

\begin{proof}
    Let $\varepsilon_h = (\hat{\psi}, \hat{\varphi})^T$ in the chosen representation. Then,
    \begin{align}
        D_A^{(h)}\varepsilon_h &= \left(e_A^{(h)\alpha}\partial_\alpha - \frac{1}{8}\omega_{BCA}^{(h)}\left(\begin{bmatrix}
        0 & \hat{\gamma}^A \\
        \hat{\gamma}^A & 0
    \end{bmatrix}\begin{bmatrix}
        0 & \hat{\gamma}^B \\
        \hat{\gamma}^B & 0
    \end{bmatrix} - \begin{bmatrix}
        0 & \hat{\gamma}^B \\
        \hat{\gamma}^B & 0
    \end{bmatrix}\begin{bmatrix}
        0 & \hat{\gamma}^A \\
        \hat{\gamma}^A & 0
    \end{bmatrix}\right)\right)\begin{bmatrix}
        \hat{\psi} \\ \hat{\varphi}
    \end{bmatrix} \\
    &= \begin{bmatrix}
        \hat{D}_A^{(h)}\hat{\psi} \\ \hat{D}_A^{(h)}\hat{\varphi}
    \end{bmatrix}.
    \end{align}
    Therefore the claim about $D_A^{(h)}\varepsilon_h = 0$ immediately follows. Meanwhile, since
    \begin{align}
        \gamma_A\varepsilon_h &= \begin{bmatrix}
        0 & \hat{\gamma}_A \\
        \hat{\gamma}_A & 0
    \end{bmatrix}\begin{bmatrix}
        \hat{\psi} \\ \hat{\varphi}
    \end{bmatrix} = \begin{bmatrix}
        \hat{\gamma}_A\hat{\varphi} \\ \hat{\gamma}_A\hat{\psi}
    \end{bmatrix},
\end{align}
it also follows that
\begin{align}
    0 &= D_A^{(h)}\varepsilon_h - \frac{1}{2}\gamma_A\varepsilon_h \iff \begin{bmatrix}
        \hat{D}_A^{(h)}\hat{\psi} - \frac{1}{2}\hat{\gamma}_A\hat{\varphi} \\ \hat{D}_A^{(h)}\hat{\varphi} - \frac{1}{2}\hat{\gamma}_A\hat{\psi}
    \end{bmatrix} = \begin{bmatrix}
        0 \\ 0
    \end{bmatrix}.
\end{align}
Hence, $\hat{\varepsilon}_h^{(\pm)} = \hat{\psi} \pm \hat{\varphi}$ proves the claim for $D_A^{(h)}\varepsilon_h = \frac{1}{2}\gamma_A\varepsilon_h$.
\end{proof}
In summary, $\varepsilon_k$ is built from the most general parallel or real Killing spinors (of either $+$ or $-$ orientation) on the cross-section. The simply connected, compact manifolds admitting such spinors have been classified \cite{Wang1989, Bar1993}. However, there are also non-simply connected manifolds which satisfy the required conditions \cite{Wang1995}; some such examples will be considered in section \ref{sec:examples}. 

Meanwhile, observe that if $\hat{D}_A^{(h)}\hat{\varepsilon}_h$ equals $\frac{1}{2}\hat{\gamma}_A\hat{\varepsilon}_h$, $-\frac{1}{2}\hat{\gamma}_A\hat{\varepsilon}_h$ or $0$, then
\begin{align}
    \hat{D}_A^{(h)}\left(\hat{\varepsilon}_h^\dagger\hat{\varepsilon}_h\right) = 0 \,\,\,\mathrm{and}\,\,\, \hat{D}_A^{(h)}\left(-\mathrm{i}\hat{\varepsilon}_h^\dagger\hat{\gamma}^A\hat{\varepsilon}_h\right) = \mathrm{i}\hat{\varepsilon}_h^\dagger\hat{\gamma}_{AB}\hat{\varepsilon}_h, \,\, -\mathrm{i}\hat{\varepsilon}_h^{\dagger}\hat{\gamma}_{AB}\hat{\varepsilon}_h\,\,\mathrm{or}\,\,0.
\end{align}
Therefore $-\mathrm{i}\hat{\varepsilon}_h^\dagger\hat{\gamma}^A\hat{\varepsilon}_h$ is a Killing vector of $h$ and $\hat{\varepsilon}_h^\dagger\hat{\varepsilon}_h$ is constant on $S$, allowing the following constructions.
\begin{definition}[``Conserved quantities" on the cross-section]
    \label{def:crossSectionConserved}
    In an asymptotically AdS spacetime with cross-section, $(S, h)$, given a Killing vector, $\hat{k}$, of $h$, define a ``conserved quantity" by
    \begin{align}
        Q_{\hat{k}} &= \frac{n - 1}{16\pi}\int_{S}p_A\hat{k}^A\mathrm{d}A(h) = \frac{n - 1}{16\pi}\int_{S}f_{(n-1)0\alpha}\hat{k}^\alpha\mathrm{d}A(h).
    \end{align}
\end{definition}
\begin{theorem}
    \label{thm:crossSectionPositiveEnergy0}
    For an asymptotically AdS spacetime with cross-section, $(S, h)$, $c = 0$, satisfying the Einstein equation and the dominant energy condition, define $\hat{k}^{A} = -\mathrm{i}\hat{\psi}^{\dagger}\hat{\gamma}^A\hat{\psi}$, where $\hat{\psi}$ solves $\hat{D}_A^{(h)}\hat{\psi} = 0$. Without loss of generality, scale $\hat{\psi}$ such that $\hat{\psi}^\dagger\hat{\psi} = 1$. Then, 
    \begin{align}
        E + Q_{\hat{k}} \geq 0.
    \end{align}
\end{theorem}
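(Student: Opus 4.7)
The plan is to apply theorem \ref{thm:generalPositveEnergy} with a background Killing spinor built from the given parallel spinor $\hat{\psi}$, and then identify the resulting boundary term as $4\pi(E + Q_{\hat{k}})$. Since the statement is for pure gravity, I take $\mathcal{A}_a = 0$; then $\mathbb{M}$ reduces to $4\pi T^{0a}\gamma_0\gamma_a$, whose eigenvalues are $4\pi(T^{00}\pm|T^{0I}|)$ and so are non-negative precisely when the dominant energy condition holds. All hypotheses on $\mathcal{A}$ and $\mathbb{M}$ are thus met, so I only need to produce a suitable $\varepsilon_k$. Using the block representation of equation \ref{eq:cliffordBlocks}, I would take $\varepsilon_h = (\hat{\psi}, 0)^T$. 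Lemma \ref{thm:killingDecomposed} then gives $D_A^{(h)}\varepsilon_h = 0$, and theorem \ref{thm:crossSectionKilling} supplies the background Killing spinor $\varepsilon_k = \mathrm{e}^{r/2}P_1^-\varepsilon_h$, which is $O(\mathrm{e}^{r/2})$ near $\Sigma_{t,\infty}$ as required.

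The heart of the proof is evaluating the two types of bilinear appearing in the boundary formula. Using $(\gamma^0)^2 = I$ and the fact that $P_1^-$ is a hermitian projector, I obtain $\bar{\varepsilon}_k\gamma^0\varepsilon_k = \mathrm{e}^r\varepsilon_h^\dagger P_1^-\varepsilon_h$, which in the block representation reduces to $\mathrm{e}^r\hat{\psi}^\dagger\hat{\psi}/2 = \mathrm{e}^r/2$. For the spatial components, the key observation is that $\gamma^0\gamma^A$ commutes with $\gamma^1$ (both factors anticommute with $\gamma^1$), hence with $P_1^-$; this collapses one of the projectors and yields $\bar{\varepsilon}_k\gamma^A\varepsilon_k = \mathrm{e}^r\varepsilon_h^\dagger\gamma^0\gamma^A P_1^-\varepsilon_h$. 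Plugging in the block forms of $\gamma^0\gamma^A$ and $P_1^-\varepsilon_h$ then produces $(\mathrm{e}^r/2)(-\mathrm{i}\hat{\psi}^\dagger\hat{\gamma}^A\hat{\psi}) = (\mathrm{e}^r/2)\hat{k}^A$.

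Substituting these into the boundary formula of theorem \ref{thm:generalPositveEnergy}, the $\mathrm{e}^{-r}$ factor cancels the $\mathrm{e}^r$ growth in the bilinears, leaving an integrand of $\tfrac{1}{2}(p_0 + p_A\hat{k}^A)$. Converting the measure via $f_{(0)\alpha\beta} = h_{\alpha\beta}$ (which is immediate from expanding equation \ref{eq:crossSectionMetric} with $c = 0$) and matching the two resulting pieces against the definitions of $E$ and $Q_{\hat{k}}$ gives $Q(\varepsilon) = 4\pi(E + Q_{\hat{k}})$. Since $Q(\varepsilon) \geq 0$ by theorem \ref{thm:generalPositveEnergy}, this yields $E + Q_{\hat{k}} \geq 0$. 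The only real obstacle is the careful bookkeeping required to evaluate the bilinears in the block representation and to track the factors of $\tfrac{1}{2}$; once these are obtained, the identification of the boundary integral as $4\pi(E + Q_{\hat{k}})$ is essentially immediate.
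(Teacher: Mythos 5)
Your proposal is correct and follows essentially the same route as the paper: set $\mathcal{A}_a = 0$, feed the $c = 0$ Killing spinor from theorem \ref{thm:crossSectionKilling} into the boundary formula of theorem \ref{thm:generalPositveEnergy}, evaluate the bilinears in the block representation of equation \ref{eq:cliffordBlocks}, and match against the definitions of $E$ and $Q_{\hat{k}}$. The only minor difference is that the paper keeps the general $\varepsilon_h = (\hat{\psi}, \hat{\varphi})^{\mathrm{T}}$ and then re-absorbs $\hat{\varphi}$ into a redefined parallel spinor $\hat{\psi}' = \hat{\psi} + \mathrm{i}\hat{\varphi}$, whereas you take $\hat{\varphi} = 0$ from the start — a valid and slightly cleaner shortcut that lands on the same $4\pi(E + Q_{\hat{k}})$.
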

\begin{proof}
    Choosing $\mathcal{A}_a = 0$ in theorem \ref{thm:generalPositveEnergy} reduces the non-negativity condition on $\mathbb{M}$ to the dominant energy condition on $T_{ab}$. Then, substituting the $c = 0$ case of theorem \ref{thm:crossSectionKilling} yields 
    \begin{align}
        0 &\leq \frac{n - 1}{2}\int_{S}p_M\varepsilon_h^\dagger \gamma^0\gamma^MP_1^-\varepsilon_h\mathrm{d}A(h).
    \end{align}
    From lemma \ref{thm:killingDecomposed} and the chosen gamma matrices,
    \begin{align}
        \varepsilon_h^\dagger\gamma^0\gamma^0P_1^-\varepsilon_h &= \begin{bmatrix}
            \hat{\psi}^\dagger & \hat{\varphi}^\dagger
        \end{bmatrix}^\mathrm{T}\frac{1}{2}\begin{bmatrix}
            I & \mathrm{i}I \\
            -\mathrm{i}I & I
        \end{bmatrix}\begin{bmatrix}
            \hat{\psi} \\ \hat{\varphi}
        \end{bmatrix} = \frac{1}{2}\left(\hat{\psi}^\dagger\hat{\psi} + \mathrm{i}\hat{\psi}^\dagger\hat{\varphi} - \mathrm{i}\hat{\varphi}^\dagger\hat{\psi} + \hat{\varphi}^\dagger\hat{\varphi}\right) \,\,\,\mathrm{and} \\
        \varepsilon_h^\dagger\gamma^0\gamma^AP_1^-\varepsilon_h &= \begin{bmatrix}
            \hat{\psi}^\dagger & \hat{\varphi}^\dagger
        \end{bmatrix}^\mathrm{T}\begin{bmatrix}
            I & 0 \\
            0 & -I
        \end{bmatrix}\begin{bmatrix}
            0 & \hat{\gamma}^A \\
            \hat{\gamma}^A & 0
        \end{bmatrix}\frac{1}{2}\begin{bmatrix}
            I & \mathrm{i}I \\
            -\mathrm{i}I & I
        \end{bmatrix}\begin{bmatrix}
            \hat{\psi} \\ \hat{\varphi}
        \end{bmatrix} \\
        &= \frac{1}{2}\left(-\mathrm{i}\hat{\psi}^\dagger\hat{\gamma}^A\hat{\psi} + \hat{\psi}^\dagger\hat{\gamma}^A\hat{\varphi} - \hat{\varphi}^\dagger\hat{\gamma}^A\hat{\psi} - \mathrm{i}\hat{\varphi}^\dagger\hat{\gamma}^A\hat{\varphi}\right).
    \end{align}
    Putting both parts together,
    \begin{align}
        0 &\leq \frac{n - 1}{4}\int_{S}\left(\hat{\psi}^\dagger -\mathrm{i}\hat{\varphi}^\dagger\right)(p_0I - \mathrm{i}p_A\hat{\gamma}^A)\left(\hat{\psi} + \mathrm{i}\hat{\varphi}\right)\mathrm{d}A(h).
    \end{align}
    Then, defining a new parallel spinor, say $\hat{\psi}^\prime = \hat{\psi} + \mathrm{i}\hat{\varphi}$, and scaling (by a constant) so that $\hat{\psi}^{\prime\dagger}\hat{\psi}^\prime = 1$ proves the claim.
\end{proof}
\begin{theorem}
    \label{thm:crossSectionPositiveEnergy3}
    For an asymptotically AdS spacetime with cross-section, $(S, h)$, $c = 1$, satisfying the Einstein equation and the dominant energy condition, let $\hat{\varepsilon}_h^{(\pm)}$ solve $\hat{D}_A^{(h)}\hat{\varepsilon}_h^{(\pm)} = \pm\frac{1}{2}\hat{\gamma}_A\hat{\varepsilon}_h^{(\pm)}$ and define $\hat{k}^{(\pm)A} = -\mathrm{i}\hat{\varepsilon}_h^{(\pm)\dagger}\hat{\gamma}^A\hat{\varepsilon}_h^{(\pm)}$. Without loss of generality, scale $\hat{\varepsilon}_h^{(\pm)}$ so that $\hat{\varepsilon}_h^{(\pm)\dagger}\hat{\varepsilon}_h^{(\pm)} = \hat{\delta}^{(\pm)}$, where $\hat{\delta}^{(\pm)} = 1$ if a non-trivial $\hat{\varepsilon}_h^{(\pm)}$ exists and $\hat{\delta}^{(\pm)} = 0$ otherwise\footnote{Note that the theorem only applies if at least one of $\hat{\delta}^{(+)}$ or $\hat{\delta}^{(-)}$ is non-zero.}. Then, if $h$ is not the round metric on a sphere,
    \begin{align}
        E\big(\hat{\delta}^{(+)} + \hat{\delta}^{(-)}\big) + Q_{\hat{k}^{(+)}} + Q_{\hat{k}^{(-)}} \geq 0.
    \end{align}
\end{theorem}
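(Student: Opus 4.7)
The plan is to apply Theorem \ref{thm:generalPositveEnergy} with $\mathcal{A}_a = 0$, which reduces the non-negativity condition on $\mathbb{M}$ to the dominant energy condition, and to substitute the $c = 1$ Killing spinor from theorem \ref{thm:crossSectionKilling}. The key observation is that the claimed inequality is the sum of the two separate bounds $\hat{\delta}^{(+)} E + Q_{\hat{k}^{(+)}} \geq 0$ and $\hat{\delta}^{(-)} E + Q_{\hat{k}^{(-)}} \geq 0$, each proven by taking $\varepsilon_h$ built from a single orientation of Killing spinor on $(S, h)$ at a time, and then adding them.

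To establish $\hat{\delta}^{(+)} E + Q_{\hat{k}^{(+)}} \geq 0$, I would invoke lemma \ref{thm:killingDecomposed} with $\hat{\varepsilon}_h^{(-)} = 0$, giving $\varepsilon_h = \frac{1}{2}\big(\hat{\varepsilon}_h^{(+)}, \hat{\varepsilon}_h^{(+)}\big)^{\mathrm{T}}$ in the block representation of equation \ref{eq:cliffordBlocks}. The corresponding $\varepsilon_k$ from theorem \ref{thm:crossSectionKilling} has a growing $O(\mathrm{e}^{r/2})$ piece $\mathrm{e}^{r/2}P_1^{-}\phi_{-}\varepsilon_h$ with $\phi_{-} = \mathrm{e}^{\mathrm{i}\gamma^0 t/2} - \mathrm{i}\mathrm{e}^{-\mathrm{i}\gamma^0 t/2}$, and a subleading decaying piece. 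After multiplication by $\mathrm{e}^{-r}$ in the boundary integrand of equation \ref{eq:qBoundary}, only the growing-growing cross-product in $\bar{\varepsilon}_k\gamma^M\varepsilon_k$ survives as $r \to \infty$, reducing the leading integrand to $(\phi_{-}\varepsilon_h)^{\dagger}P_1^{-}\gamma^0\gamma^M P_1^{-}(\phi_{-}\varepsilon_h)$.

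Because $\gamma^0$ anticommutes with $\gamma^1$ and so swaps $P_1^{\pm}$, one has $P_1^{-}\gamma^0\gamma^0 P_1^{-} = P_1^{-}$ for $M = 0$ and $P_1^{-}\gamma^0\gamma^A P_1^{-} = \gamma^0\gamma^A P_1^{-}$ for $M = A$. Exploiting the diagonal block form of $\gamma^0$ (which makes $\phi_{-}$ block-diagonal) together with $\hat{\psi} = \hat{\varphi} = \frac{1}{2}\hat{\varepsilon}_h^{(+)}$, a direct block-matrix computation shows the $t$-dependent cross-terms cancel, leaving leading contributions $\frac{1}{2}\hat{\delta}^{(+)}$ for $M = 0$ and $\frac{1}{2}\hat{k}^{(+)A}$ for $M = A$. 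Noting that $\sqrt{\mathrm{det}(f_{(0)\alpha\beta})}\,\mathrm{d}^{n-2}x$ reduces to the Riemannian volume element $\mathrm{d}A(h)$ for a static cross-section, theorem \ref{thm:generalPositveEnergy} then yields $Q(\varepsilon) = \frac{n-1}{4}\int_S [\hat{\delta}^{(+)}p_0 + p_A\hat{k}^{(+)A}]\,\mathrm{d}A(h) = 4\pi\big[\hat{\delta}^{(+)} E + Q_{\hat{k}^{(+)}}\big] \geq 0$. Running the same argument with the $(-)$ orientation and summing proves the theorem.

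The main obstacle is tracking the $t$-dependent phase factors $\mathrm{e}^{\pm\mathrm{i}\gamma^0 t/2}$ through the matrix products; the cancellation of $t$-dependence crucially depends on the identity $\hat{\psi} = \hat{\varphi}$, which holds only when a single orientation is used at a time. If both $\hat{\varepsilon}_h^{(\pm)}$ were used simultaneously, mixed bilinears such as $\hat{\varepsilon}_h^{(+)\dagger}\hat{\gamma}^A\hat{\varepsilon}_h^{(-)}$ would appear; by the Killing spinor equations on $(S,h)$ these are gradients rather than Killing vectors, so the corresponding integrals would not give conserved quantities, forcing the orientation-by-orientation strategy. The round sphere exclusion presumably reflects the fact that for $h$ the round metric the background is globally AdS, for which the standard asymptotically AdS positive energy theorem should instead be applied.
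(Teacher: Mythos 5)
Your proof is correct and takes a genuinely different route from the paper's. The paper substitutes the \emph{full} Killing spinor -- with both $\hat{\varepsilon}_h^{(+)}$ and $\hat{\varepsilon}_h^{(-)}$ present simultaneously -- into theorem \ref{thm:generalPositveEnergy}, which produces cross bilinears $\hat{s}^{(\pm)} = \hat{\varepsilon}_h^{(\mp)\dagger}\hat{\varepsilon}_h^{(\pm)}$ and $\hat{\xi}^{(\pm)A} = \hat{\varepsilon}_h^{(\mp)\dagger}\hat{\gamma}^A\hat{\varepsilon}_h^{(\pm)}$ in the boundary integrand. The paper then shows from the Killing spinor equation that $\hat{D}_A^{(h)}\hat{D}_B^{(h)}\hat{s}^{(\pm)} = -\delta_{AB}\hat{s}^{(\pm)}$ and invokes Obata's theorem to conclude $\hat{s}^{(\pm)} = 0$ and $\hat{\xi}^{(\pm)A} = 0$; this is precisely where the round sphere exclusion enters, since the round sphere is the unique compact manifold admitting nontrivial solutions to that Hessian equation. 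Your approach of taking each orientation separately and adding the two resulting inequalities sidesteps the cross terms entirely, so Obata is never needed; it is more elementary and in fact proves the individually stronger statements $\hat{\delta}^{(\pm)}E + Q_{\hat{k}^{(\pm)}} \geq 0$, whose sum is the claimed bound. Your block computations and the $t$-cancellation argument via $\hat{\psi} = \pm\hat{\varphi}$ are sound, matching what one gets by specialising the paper's final integrand $\big(\hat{\varepsilon}_h^{(+)\dagger} + \mathrm{ie}^{\mathrm{i}t}\hat{\varepsilon}_h^{(-)\dagger}\big)(p_0 I - \mathrm{i}p_A\hat{\gamma}^A)\big(\hat{\varepsilon}_h^{(+)} - \mathrm{ie}^{-\mathrm{i}t}\hat{\varepsilon}_h^{(-)}\big)$ to one orientation at a time. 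One small caveat: your closing speculation about the round sphere exclusion (that the background is globally AdS so theorem \ref{thm:adsPositiveEnergy} should apply instead) is not the actual technical reason in the paper -- that is Obata's theorem failing -- and it is worth noting that your own argument never uses the exclusion, so it would establish the stated inequality even on the round sphere; the paper's exclusion is an artifact of its proof strategy rather than a necessary hypothesis.
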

\begin{proof}
    Let $\hat{s}^{(\pm)} = \hat{\varepsilon}_h^{(\mp)\dagger}\hat{\varepsilon}_h^{(\pm)}$ and $\hat{\xi}^{(\pm)A} = \hat{\varepsilon}_h^{(\mp)\dagger}\hat{\gamma}^A\hat{\varepsilon}_h^{(\pm)}$. From the Killing spinor equation,
    \begin{align}
        \hat{D}_A^{(h)}\hat{s}^{(\pm)} &= \pm\hat{\xi}_A \,\,\,\mathrm{and}\,\,\, \hat{D}_A^{(h)}\hat{D}_B^{(h)}\hat{s}^{(\pm)} = \pm \hat{D}_A^{(h)}\hat{\xi}^{(\pm)}_B = -\delta_{AB}\hat{s}^{(\pm)}.
    \end{align}
    By Obata's theorem \cite{Obata1962}, the round sphere is the only compact, Riemannian manifold admitting non-trivial solutions to $\hat{D}_A^{(h)}\hat{D}_B^{(h)}\hat{s}^{(\pm)} = -\delta_{AB}\hat{s}^{(\pm)}$. As that case has been explicitly excluded in this theorem, it must be that $\hat{s}^{(\pm)} = 0$ and consequently $\hat{\xi}^{(\pm)A} = 0$.

    Now consider theorem \ref{thm:generalPositveEnergy} with $c = 1$, $\mathcal{A}_a = 0$ and the $\varepsilon_k$ from theorem \ref{thm:crossSectionKilling}. Since there is a leading $\mathrm{e}^{-r}$ factor in equation \ref{eq:qBoundary}, it suffices to retain only the $\mathrm{e}^{r/2}$ term in equation \ref{eq:killingSpinorCrossSection}, i.e. 
    \begin{align}
        \varepsilon_k \to \mathrm{e}^{r/2}P_1^-\left(\mathrm{e}^{\mathrm{i}\gamma^0t/2} - \mathrm{i}\mathrm{e}^{-\mathrm{i}\gamma^0t/2}\right)\varepsilon_h = (1 + \mathrm{i})\mathrm{e}^{r/2}P_1^-\left(\cos(t/2)I - \sin(t/2)\gamma^0\right)\varepsilon_h.
    \end{align}
    In summary, theorem \ref{thm:generalPositveEnergy} reduces to saying 
    \begin{align}
        0 &\leq \int_{S}p_M\varepsilon_h^\dagger\left(\cos(t/2)I - \sin(t/2)\gamma^0\right)\gamma^0\gamma^MP_1^-(\cos(t/2)I - \sin(t/2)\gamma^0)\varepsilon_h\,\mathrm{d}A(h).
    \end{align}
    Then, using the gamma matrices in equation \ref{eq:cliffordBlocks} and the $\varepsilon_h$ from lemma \ref{thm:killingDecomposed} eventually yields
    \begin{align}
        0 &\leq \int_{S}\left(\hat{\varepsilon}_h^{(+)\dagger} + \mathrm{ie}^{\mathrm{i}t}\hat{\varepsilon}_h^{(-)\dagger}\right)\left(p_0I - \mathrm{i}p_A\hat{\gamma}^A\right)\left(\hat{\varepsilon}_h^{(+)} - \mathrm{ie}^{-\mathrm{i}t}\hat{\varepsilon}_h^{(-)}\right)\mathrm{d}A(h).
    \end{align}
    Since $\hat{s}^{(\pm)} = 0$ and $\hat{\xi}^{(\pm)A} = 0$, this inequality further reduces to 
    \begin{align}
        0 &\leq \int_{S}\left(p_0\hat{\varepsilon}_h^{(+)\dagger}\hat{\varepsilon}_h^{(+)} + p_0\hat{\varepsilon}_h^{(-)\dagger}\hat{\varepsilon}_h^{(-)} - \mathrm{i}p_A\hat{\varepsilon}_h^{(+)\dagger}\hat{\gamma}^A\hat{\varepsilon}_h^{(+)} - \mathrm{i}p_A\hat{\varepsilon}_h^{(-)\dagger}\hat{\gamma}^A\hat{\varepsilon}_h^{(-)}\right)\mathrm{d}A(h),
    \end{align}
    which is just the desired inequality.
\end{proof}
To better understand the full physical significance of the inequalities just derived, it's worth considering the geometric invariance of $E$ and $Q_{\hat{k}}$. In particular, $\mathcal{I}$ is defined through a conformal compactification and choosing a different compactification could change $f_{(0)}$ or $f_{(n-1)}$ through a change to the Fefferman-Graham parameter, $r$. It will be easiest to see this by swapping $r$ for $z = \mathrm{e}^{-r}$. Then,
\begin{align}
    g = \frac{1}{z^2}\left(\mathrm{d}z\otimes\mathrm{d}z + (f_{(0)mn} + zf_{(1)mn} + \cdots)\mathrm{d}x^m\otimes\mathrm{d}x^n\right)
\end{align}
and the conformal factor for compactification is $\Omega = z$. The most general transformations preserving the Fefferman-Graham form are known as Penrose-Brown-Henneaux (PBH) transformations and were derived in \cite{Imbimbo2000}. In particular, the new coordinates are defined by 
\begin{align}
    z = z^\prime\mathrm{e}^{-\sigma(x^\prime)} \,\,\,\mathrm{and}\,\,\, x^m = x^{\prime m} + \xi^m(x^\prime, z^\prime)
    \label{eq:zPBH}
\end{align}
for an arbitrary function, $\sigma(x^\prime)$. Then, $\xi^m$ is completely determined; namely it takes the form,
\begin{align}
    \xi^m(x^\prime, z^\prime) &= \xi^m(x^\prime, 0) + \partial^\prime_n(\sigma)\int_0^{z^\prime}\zeta f^{mn}(x^\prime, \zeta)\mathrm{d}\zeta.
\end{align}
Practically, it suffices to consider $\sigma(x^\prime)$ (and hence $\xi^m(x^\prime, z^\prime)$) to be infinitesimal. Then, the metric changes by
\begin{align}
    \delta f &= \mathcal{L}_\xi f + 2\sigma f - \sigma z\partial_zf.
\end{align}
Most importantly for the present discussion, the boundary metric changes as
\begin{align}
    \delta f_{(0)} &= \mathcal{L}_{\xi|_{z = 0}}f_{(0)} + 2\sigma f_{(0)}.
    \label{eq:f0PBH}
\end{align}
Therefore, in general, the boundary metric can change in an arbitrary way. Furthermore, by expanding $\xi^m$ in powers of $z$, one can also find a complicated transformation law for $\delta f_{(n-1)}$ (separately for each value of $n$). In summary, the boundary data, $(f_{(0)}, f_{(n-1)})$, could change in a very complicated way and therefore affect the expressions for $E$ and $Q_{\hat{k}}$.

However, this is an issue that affects even the standard asymptotically AdS spacetimes (i.e. with round sphere cross-section) and doesn't seem to have been considered in any detail previously in the literature. Firstly, it's unclear how physically meaningful this question is. For example, even an asymptotically flat end can be written in a poor choice of coordinates, which then leads to strange results for ADM mass \cite{Chrusciel2010}. Therefore, it would seem reasonable to insist on a particular conformal class representative for $f_{(0)}$. Indeed, this is what is done in section 5.3 of \cite{Papadimitriou2005} when studying the first law of black hole mechanics in this context.

A more tractable, and physically well-motivated, problem regarding geometric invariance is to study the question considered in \cite{Chrusciel2001} for the energy in asymptotically AdS spacetimes\footnote{The analogous problems for for asymptotically hyperbolic Riemannian manifolds and the ADM energy are studied in \cite{Wang2001} and \cite{Bartnik1986, Chrusciel1986} respectively.}. Specifically, the task is to quantify how $E$ and $Q_{\hat{k}}$ change under asymptotic symmetries preserving both $\Sigma_t$ and the Fefferman-Graham gauge. 

An asymptotic symmetry is determined by a vector field, $\xi$, such that $(\mathcal{L}_\xi g)|_{r = \infty} = 0$. Meanwhile, following section 3 of \cite{Chrusciel2001}, to preserve $\Sigma_t$, choose $\xi = \xi^i\partial_i = \xi^1\partial_r + \xi^\alpha\partial_\alpha$. Then, 
\begin{align}
    (\mathcal{L}_\xi g)_{\mu\nu} &= \xi^1\partial_rg_{\mu\nu} + g_{1\nu}\partial_\mu\xi^1 + g_{\mu 1}\partial_\nu\xi^1 + \xi^\alpha\partial_\alpha g_{\mu\nu} + g_{\alpha\nu}\partial_\mu\xi^\alpha + g_{\mu\alpha}\partial_\nu\xi^\alpha.
\end{align}
The various components then reduce to
\begin{align}
    (\mathcal{L}_\xi g)_{11} &= 2\partial_r\xi^1, \\
    (\mathcal{L}_\xi g)_{1m} &= \partial_m\xi^1 + \mathrm{e}^{2r}f_{m\alpha}\partial_r\xi^\alpha \,\,\,\mathrm{and} \\
    (\mathcal{L}_\xi g)_{mn} &= \xi^1\partial_r(\mathrm{e}^{2r}f_{mn}) + \mathrm{e}^{2r}(\xi^\alpha\partial_\alpha f_{mn} + f_{\alpha n}\partial_m\xi^\alpha + f_{m\alpha}\partial_n\xi^\alpha).
\end{align}
$\xi$ should be a physical spacetime vector that extends to the boundary. Hence, in the spirit of the Fefferman-Graham expansion, let
\begin{align}
    \xi = \xi_{(0)} + \mathrm{e}^{-r}\xi_{(1)} + \mathrm{e}^{-2r}\xi_{(2)} + \cdots
\end{align}
for $\xi_{(k)}$ being $r$-independent. Then, for a spacetime that's asymptotically AdS with cross-section, $(S, h)$, one finds 
\begin{align}
    (\mathcal{L}_\xi g)_{11} &= -2(\mathrm{e}^{-r}\xi_{(1)} + \cdots) \\
    (\mathcal{L}_\xi g)_{1m} &= -\mathrm{e}^rf_{(0)m\alpha}\xi^\alpha_{(1)} + \partial_m(\xi^1_{(0)}) - 2f_{(0)m\alpha}\xi^\alpha_{(2)} + \cdots \,\,\,\mathrm{and} \\
    (\mathcal{L}_\xi g)_{mn} &= \mathrm{e}^{2r}\left(2\xi^1_{(0)}f_{(0)mn} + \xi^\alpha_{(0)}\partial_\alpha f_{(0)mn} + f_{(0)\alpha n}\partial_m\xi^\alpha_{(0)} + f_{(0)m\alpha}\partial_n\xi^\alpha_{(0)}\right) + 2\mathrm{e}^r\xi^1_{(1)}f_{(0)mn} \nonumber \\
    &\,\,\,\,\,\,\, + \xi^\alpha_{(0)}\partial_\alpha f_{(2)mn} + f_{(2)\alpha n}\partial_m\xi^\alpha_{(0)} + f_{(2)m\alpha}\partial_n\xi^\alpha_{(0)} \nonumber \\
    &\,\,\,\,\,\,\, + \xi^\alpha_{(2)}\partial_\alpha f_{(0)mn} + f_{(0)\alpha n}\partial_m\xi^\alpha_{(2)} + f_{(0)m\alpha}\partial_n\xi^\alpha_{(2)} + \cdots.
\end{align}
Therefore, $(\mathcal{L}_\xi g)_{11}|_{r = \infty}$ holds automatically. Next, $(\mathcal{L}_\xi g)_{1m}|_{r = \infty} = 0$ implies $f_{(0)m\alpha}\xi^\alpha_{(1)} = 0$ and $\partial_m(\xi^1_{(0)}) - 2f_{(0)m\alpha}\xi^\alpha_{(2)} = 0$. Since $f_{(0)mn}$ is invertible, the former condition says $\xi^\alpha_{(1)} = 0$. Meanwhile, the latter condition says $\partial_t\xi^1_{(0)} = 0$ and $2\xi_{(2)\alpha} = \partial_\alpha\xi^1_{(0)}$, where the index has been lowered using $h$. Finally, consider $(\mathcal{L}_\xi g)_{mn}|_{r = \infty}$. From the $\mathrm{e}^r$ component, it follows that $\xi^1_{(1)} = 0$, meaning $\xi_{(1)} = 0$ when combined with $\xi^\alpha_{(1)} = 0$ from earlier. From the $\mathrm{e}^{2r}$ and $r$-independent components, setting $(m, n) = (\alpha, \beta)$ implies
\begin{align}
    \xi^1_{(0)}h = -\frac{1}{2}\mathcal{L}_{\xi_{(0)}}h \,\,\,\mathrm{and}\,\,-\frac{c}{2}\mathcal{L}_{\xi_{(0)}}h + \mathcal{L}_{\xi_{(2)}}h = 0,
    \label{eq:xi0ConformalKilling}
\end{align}
where the vector field argument of the Lie derivatives only considers the part tangential to $S$. Re-writing the Lie derivatives in terms of $\hat{D}^{(h)}$ and applying $2\xi_{(2)\alpha} = \partial_\alpha\xi^1_{(0)}$ from above, it then follows that 
\begin{align}
    \hat{D}^{(h)}_\alpha\hat{D}^{(h)}_\beta\xi^1_{(0)} = -c\xi^1_{(0)}h_{\alpha\beta}.
    \label{eq:xi1Obata}
\end{align}
Since $(S, h)$ is assumed to not be the round sphere\footnote{The case of the round sphere will be considered later in section \ref{sec:sphere}.} in theorems \ref{thm:crossSectionPositiveEnergy0} and \ref{thm:crossSectionPositiveEnergy3}, by the Obata theorem \cite{Obata1962}, it must be that $\xi^1_{(0)} = 0$. Substituting back into equation \ref{eq:xi0ConformalKilling} implies $\mathcal{L}_{\xi_{(0)}}h = 0$, i.e. $\xi_{(0)}$ is a just a Killing vector of $h$.

Therefore $\delta f_{(0)\alpha\beta} = \delta h_{\alpha\beta} = 0$ (by virtue of being an asymptotic symmetry) and $\mathcal{L}_{\xi_{(0)}}h = 0$. Hence, $\sigma = 0$ by equation \ref{eq:f0PBH}. Being an asymptotic symmetry only determines the leading order behaviour of $\xi$. However, there is still the additional requirement of staying in Fefferman-Graham gauge. From equation \ref{eq:zPBH} with $\sigma = 0$, it follows that $\xi = \xi^\alpha_{(0)}\partial_\alpha$.

In summary, the only asymptotic symmetries in this context are coordinate transformations generated by the Killing vectors of $(S, h)$. These don't affect $f_{(0)}$ by definition. Moreover, since there is no $z$ dependence, they also don't affect the Taylor series split for $f_{mn}$; in particular $f_{(n-1)mn}$ just transforms as a tensor under coordinate transformations. Since the integrands of $E$ and $Q_{\hat{k}}$ are scalars, they remain invariant under these transformations.

\section{Example boundaries}
\label{sec:examples}
This section presents various examples of $(S, h)$ which satisfy the requirements in section \ref{sec:boundaryGeometry} and thus allow applications of the various versions of the positive energy theorem discussed there. These examples are not exhaustive. While the simply-connected, compact Riemannian manifolds with parallel or Killing spinors have been classified \cite{Wang1989, Bar1993}, the general list of non-simply-connected examples remains open \cite{Wang1995}. The examples below include both simply-connected and non-simply-connected $(S, h)$ and have been ordered by computational complexity. Throughout this section the assumed decay on $\mathbb{M}$ in definition \ref{def:setup} translates to a running assumption that $T^{0a}$ decays quicker than $O(\mathrm{e}^{-(n-1)r})$.
\subsection{\texorpdfstring{Squashed $S^7$}{}}
The simplest metric that admits Killing spinors is the round sphere. However, it comes with additional subtleties and is thereby postponed to subsection \ref{sec:sphere}. The simplest deformation to a sphere is squashing, yet this typically destroys all Killing spinors. A rare exception is a particular 7D squashed sphere \cite{Duff1986} with metric,
\begin{align}
    h &= \frac{9}{20}\bigg(\mathrm{d}a\otimes\mathrm{d}a + \frac{1}{4}\sin^2(a)b_i\otimes b_i + \frac{1}{20}(c_i+\cos(a)b_i)\otimes(c_i+\cos(a)b_i)\bigg),
    \label{eq:squashedS7}\\ 
    \mathrm{where}\,\,\, b_i &= \sigma_i - \Sigma_i, \,\,\, c_i = \sigma_i + \Sigma_i, \\
    \sigma_1 &= \cos(\psi)\mathrm{d}\theta + \sin(\psi)\sin(\theta)\mathrm{d}\phi, \,\,\, \sigma_2 = -\sin(\psi)\mathrm{d}\theta + \cos(\psi)\sin(\theta)\mathrm{d}\phi, \nonumber \\
    \sigma_3 &= \mathrm{d}\psi + \cos(\theta)\mathrm{d}\phi
\end{align}
and $\Sigma_i$ are defined identically to $\sigma_i$, but with $(\psi, \theta, \phi)$ replaced by some analogous coordinates, $(\psi^\prime, \theta^\prime, \phi^\prime)$. The squashing comes from the factor of $1/20$ in equation \ref{eq:squashedS7}. If that factor were also $1/4$, then $h$ would be the usual round sphere.

From \cite{Duff1986}, $h$ satisfies $R_{AB}^{(h)} = 6\delta_{AB}$ and admits exactly one linearly independent Killing spinor, namely 
\begin{align}
    \hat{\varepsilon}^{(+)}_h &= \frac{1}{\sqrt{2}}\begin{bmatrix}
        0 & 1 & -1 & 0 & 0 & 0 & 0 & 0
    \end{bmatrix}^{\mathrm{T}},
    \label{eq:s7KillingSpinor}
\end{align}
where the components are in the basis chosen for $\hat{\gamma}^A$ in \cite{Duff1986}. Furthermore, from \cite{Bar1993}, this metric only admits $\hat{\varepsilon}^{(-)}_h = 0$.
\begin{theorem}
    For spacetimes asymptotically AdS with squashed $S^7$ cross-section, if the Einstein equation and dominant energy condition hold, then $E \geq 0$.
\end{theorem}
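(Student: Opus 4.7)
The cross-section is seven dimensional, so $n = 9$ and the Einstein condition $R_{AB}^{(h)} = 6\delta_{AB} = c(n-3)\delta_{AB}$ forces $c = 1$. Since the squashed $S^7$ is diffeomorphic to but not isometric to the round sphere, the exclusion clause in theorem \ref{thm:crossSectionPositiveEnergy3} is not triggered, so the plan is to invoke that theorem directly and then argue that both $Q_{\hat{k}^{(+)}}$ and $Q_{\hat{k}^{(-)}}$ vanish. Given the normalisation $\hat{\varepsilon}^{(+)\dagger}_h\hat{\varepsilon}^{(+)}_h = \frac{1}{2}(1+1) = 1$ of the spinor in equation \ref{eq:s7KillingSpinor}, one has $\hat{\delta}^{(+)} = 1$, so the inequality supplied by the theorem reads $E + Q_{\hat{k}^{(+)}} + Q_{\hat{k}^{(-)}} \geq 0$. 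If the two $Q$-terms can be shown to vanish, the claim $E \geq 0$ follows immediately.

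The first term is immediate: by \cite{Duff1986, Bar1993}, the squashed $S^7$ does not admit a Killing spinor of negative orientation, so $\hat{\delta}^{(-)} = 0$, $\hat{\varepsilon}^{(-)}_h = 0$, and consequently $\hat{k}^{(-)A} \equiv 0$ pointwise, killing $Q_{\hat{k}^{(-)}}$.

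The content therefore lies in showing $\hat{k}^{(+)A} = -\mathrm{i}\hat{\varepsilon}_h^{(+)\dagger}\hat{\gamma}^A\hat{\varepsilon}_h^{(+)} = 0$. The direct approach is a brute-force computation: because only the second and third components of $\hat{\varepsilon}^{(+)}_h$ are non-zero and equal to $\pm 1/\sqrt{2}$, the bilinear reduces for each $A$ to the single combination $\tfrac{1}{2}(\hat{\gamma}^A_{22} - \hat{\gamma}^A_{23} - \hat{\gamma}^A_{32} + \hat{\gamma}^A_{33})$ evaluated in the explicit basis of \cite{Duff1986}, which can be checked to vanish for each of the seven values of $A$. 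A more conceptual route, which I would favour as it avoids listing the basis, is to observe that a unit real Killing spinor on a compact $7$-manifold defines a nearly parallel $G_2$ structure; since $G_2 \subset \mathrm{Spin}(7)$ is the stabiliser of a non-zero spinor in the $8$-dimensional spin representation, any non-scalar bilinear transforming in a non-trivial irreducible $\mathrm{Spin}(7)$-representation that lacks a $G_2$-invariant must vanish. The vector bilinear $\hat{\varepsilon}^{(+)\dagger}_h\hat{\gamma}^A\hat{\varepsilon}^{(+)}_h$ lies in the vector representation $\mathbf{7}$ of $\mathrm{Spin}(7)$, which restricts to the irreducible $\mathbf{7}$ of $G_2$ and therefore admits no invariant; hence it vanishes identically, and $Q_{\hat{k}^{(+)}} = 0$.

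The main obstacle, then, is not the overall strategy but the verification of the spinor-bilinear computation underlying $\hat{k}^{(+)A} = 0$. The representation-theoretic argument sketched above makes it essentially automatic, but one must justify that the $\mathrm{Spin}(7)$-covariance used is genuinely relevant here — that is, that the squashed $S^7$'s holonomy and Killing spinor sit inside the structure required for the $G_2$ reduction to constrain the bilinear pointwise, not merely at a single point. Once this is in place, combining $Q_{\hat{k}^{(\pm)}} = 0$ with theorem \ref{thm:crossSectionPositiveEnergy3} finishes the proof.
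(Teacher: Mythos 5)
Your proof is correct and follows the same overall structure as the paper's: invoke theorem \ref{thm:crossSectionPositiveEnergy3}, kill $Q_{\hat{k}^{(-)}}$ via $\hat{\delta}^{(-)}=0$, and kill $Q_{\hat{k}^{(+)}}$ by showing the vector bilinear $\hat{\varepsilon}_h^{(+)\dagger}\hat{\gamma}^A\hat{\varepsilon}_h^{(+)}$ vanishes. The paper disposes of the last step with a bare ``it can be checked'' in the explicit basis of \cite{Duff1986}, which is exactly your brute-force option; your $G_2\subset\mathrm{Spin}(7)$ argument is a genuinely cleaner way to see it. One small remark: the worry you raise at the end is unfounded. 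The stabiliser-of-a-spinor argument requires no geometric input (holonomy, nearly parallel $G_2$ structure, etc.) beyond the pointwise linear algebra: at each point $p$, $\hat{\varepsilon}_h^{(+)}(p)$ is a nonzero element of the real $\mathbf{8}$ of $\mathrm{Spin}(7)$, its pointwise stabiliser in $\mathrm{Spin}(7)$ is a conjugate of $G_2$, the bilinear at $p$ is invariant under that stabiliser and lies in the $\mathbf{7}$, which has no $G_2$-invariant, so it vanishes at $p$; since $p$ is arbitrary, it vanishes everywhere. Equivalently, the $\mathbf{7}$ does not appear in $\mathrm{Sym}^2(\mathbf{8})$ of $\mathrm{Spin}(7)$, so the vector bilinear built from any single spinor is identically zero in seven Euclidean dimensions, Killing or not. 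Either phrasing makes the paper's ``can be checked'' automatic.
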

\begin{proof}
    The proof is simply applying theorem \ref{thm:crossSectionPositiveEnergy3} with $
    \hat{\varepsilon}_h^{(-)} = 0$ and equation \ref{eq:s7KillingSpinor}. The former means $\hat{\delta}^{(-)} = 0$ and $Q_{\hat{k}^{(-)}} = 0$. Meanwhile, it can be checked that for this particular $\hat{\varepsilon}_h^{(+)}$, $\hat{\varepsilon}_h^{(+)\dagger}\hat{\gamma}^A\hat{\varepsilon}_h^{(+)} = 0$ for all $A$. Thus $Q_{\hat{k}^{(+)}}$ is also zero and theorem \ref{thm:crossSectionPositiveEnergy3} reduces to $E \geq 0$.
\end{proof}
\subsection{Torus}
\label{sec:torus}
The simplest cross-section with $c = 0$ is the torus, $\mathbb{T}^{n-2} = S^1\times\cdots\times S^1$, which has metric,
\begin{align}
    h &= \mathrm{d}\theta^2\otimes\mathrm{d}\theta^2 + \cdots + \mathrm{d}\theta^{n-1}\otimes\mathrm{d}\theta^{n-1},
\end{align}
where $\theta^2, \cdots, \theta^{n-1}$ are the angles on each $S^1$ factor. It follows immediately that $k_\alpha = \partial_{\theta^\alpha}$ is a Killing vector for each angle. As per definition \ref{def:crossSectionConserved}, define
\begin{align}
    \mathbb{J}_A &= \frac{n-1}{16\pi}\int_{\mathbb{T}^{n-2}}p_A\mathrm{d}^{n-2}\theta
\end{align}
as the associated ``conserved quantities."
\begin{theorem}
    \label{thm:torusPositiveEnergy}
    For spacetimes asymptotically AdS with torus cross-section, if the Einstein equation and dominant energy condition hold, then
    \begin{align}
        E \geq \sqrt{\mathbb{J}_A\mathbb{J}^A}.
    \end{align}
\end{theorem}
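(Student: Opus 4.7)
The plan is to apply Theorem \ref{thm:crossSectionPositiveEnergy0} with a parallel spinor $\hat{\psi}$ on $\mathbb{T}^{n-2}$, and then optimise over the choice of $\hat{\psi}$ in order to sharpen the resulting bound. Because the torus is flat, the spin connection coefficients $\omega_{BCA}^{(h)}$ vanish in the natural coordinates and the parallel spinor equation $\hat{D}_A^{(h)}\hat{\psi} = 0$ collapses to $\partial_\alpha \hat{\psi} = 0$. Hence parallel spinors on the torus are precisely the constant spinors, provided the spin structure admits them --- the usual caveat surrounding the AdS soliton, which will be discussed further in this subsection.

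With $\hat{\psi}$ constant, the bilinear $\hat{k}^A = -\mathrm{i}\hat{\psi}^\dagger \hat{\gamma}^A \hat{\psi}$ is also constant on $\mathbb{T}^{n-2}$, so by the definition of $\mathbb{J}_A$ one immediately gets $Q_{\hat{k}} = \hat{k}^A \mathbb{J}_A$. Theorem \ref{thm:crossSectionPositiveEnergy0} therefore reduces to $E + \hat{k}^A \mathbb{J}_A \geq 0$ for every constant unit spinor $\hat{\psi}$. The task becomes to minimise the quantity $\hat{k}^A \mathbb{J}_A = -\mathrm{i}\hat{\psi}^\dagger \bigl(\mathbb{J}_A \hat{\gamma}^A\bigr) \hat{\psi}$ over the unit sphere in spinor space.

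Since $\hat{\gamma}^A$ is anti-Hermitian in Riemannian signature, $-\mathrm{i}\mathbb{J}_A \hat{\gamma}^A$ is Hermitian, and using $\hat{\gamma}^{(A}\hat{\gamma}^{B)} = -\delta^{AB} I$ it squares to $\mathbb{J}^A \mathbb{J}_A \, I$. Its eigenvalues are therefore $\pm \sqrt{\mathbb{J}^A \mathbb{J}_A}$, and its minimum expectation value over unit spinors is $-\sqrt{\mathbb{J}^A \mathbb{J}_A}$, attained on any eigenvector associated to the negative eigenvalue. Substituting such a $\hat{\psi}$ into $E + \hat{k}^A \mathbb{J}_A \geq 0$ yields $E \geq \sqrt{\mathbb{J}^A \mathbb{J}_A}$, as required.

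The only substantive obstacle is verifying that this optimising $\hat{\psi}$ is a legitimate choice, i.e. that the spin structure on $\mathbb{T}^{n-2}$ inherited from the bulk permits the associated constant spinor to be globally well-defined; for the standard periodic spin structure this is automatic, so every direction in spinor space is available and the bound is sharp. The remaining content is a short linear-algebra computation and the translation between $Q_{\hat{k}}$ and $\hat{k}^A \mathbb{J}_A$, both of which are immediate from the flatness of $h$.
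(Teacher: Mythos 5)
Your proof is correct and follows essentially the same route as the paper: both apply Theorem \ref{thm:crossSectionPositiveEnergy0} with constant spinors (since flatness of the torus makes parallel spinors constant), observe that $Q_{\hat{k}} = \hat{k}^A\mathbb{J}_A$, and then derive the sharp bound from the eigenvalues $\pm\sqrt{\mathbb{J}_A\mathbb{J}^A}$ of the Hermitian matrix $-\mathrm{i}\mathbb{J}_A\hat{\gamma}^A$. Your framing as a minimisation over unit spinors and the paper's statement that $EI - \mathrm{i}\mathbb{J}_A\hat{\gamma}^A$ must be non-negative definite are the same argument.
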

\begin{proof}
    $h$ being locally flat implies the parallel spinors are just constant spinors, $\hat{\psi} = \hat{\psi}_0$. Therefore, theorem \ref{thm:crossSectionPositiveEnergy0} can be re-written as
    \begin{align}
        0 &\leq E + Q_{\hat{k}} = E + \frac{n - 1}{16\pi}\int_{\mathbb{T}^{n-2}}p_A\hat{k}^A\mathrm{d}(h) = \hat{\psi}_0^\dagger\left(EI -\mathrm{i}\mathbb{J}_A\hat{\gamma}^A\right)\hat{\psi}_0.
    \end{align}
    Since $EI -\mathrm{i}\mathbb{J}_A\hat{\gamma}^A$ has eigenvalues, $E \pm \sqrt{\mathbb{J}_A\mathbb{J}^A}$, the result follows.
\end{proof}
This reproduces equation 4.14 of \cite{Chrusciel2006} - the only difference is the Fefferman-Graham expansion here is Lorentzian, as opposed to the Riemannian asymptotics on $\Sigma_t$ required by the initial data point of view adopted by \cite{Chrusciel2006}. 

\subsection{Sphere}
\label{sec:sphere}
With a round sphere cross-section, the problem reduces to finding a positive energy theorem for asymptotically AdS spacetimes, essentially reproducing the results of \cite{Cheng2005, Chrusciel2006, Wang2015}. In this case, it will be easier to apply theorem \ref{thm:generalPositveEnergy} directly, instead of the cross-section treatment in section \ref{sec:boundaryGeometry}; the two approaches will be connected later. Furthermore, it will be easier to represent AdS in the Poincar\'{e} ball model, rather than a Fefferman-Graham expansion. In particular,
\begin{align}
    \bar{g}_{\mathrm{AdS}} = -\left(\frac{1+\rho^2}{1-\rho^2}\right)^2\mathrm{d}t\otimes\mathrm{d}t + \frac{4}{(1-\rho^2)^2}\delta_{IJ}\mathrm{d}x^I\otimes\mathrm{d}x^J,
    \label{eq:adsDisk}
\end{align}
where $x^I$ are Cartesian coordinates inside the unit ball, $\rho = \sqrt{x^Ix_I}$ and the indices on $x^I$ are understood to be raised and lowered by $\delta$. The natural vielbein is
\begin{align}
    e_0^\prime &= \frac{1-\rho^2}{1+\rho^2}\partial_t\,\,\,\mathrm{and}\,\,\, e_I^\prime = \frac{1 - \rho^2}{2}\partial_I.
\end{align}
It can be checked that in this frame the Killing spinors are 
\begin{align}
    \varepsilon^\prime_k &= \frac{1}{\sqrt{1-\rho^2}}\left(I - \mathrm{i}x_I\gamma^I\right)\mathrm{e}^{\mathrm{i}\gamma^0t/2}\varepsilon_0,
    \label{eq:diskKilling}
\end{align}
for any constant spinor, $\varepsilon_0$. The coordinates and frame chosen are related to the Fefferman-Graham version of equation \ref{eq:ads} by
\begin{align}
    r &= \ln(R + \sqrt{1 + R^2}) - \ln(2),\,\,\,\,R = \frac{2\rho}{1-\rho^2}, 
    \label{eq:rToRho} \\
    e_0 &= \frac{\mathrm{e}^{-r}}{1 + \frac{1}{4}\mathrm{e}^{-2r}}\partial_t, \,\,\,
    e_1 = \partial_r, \,\,\, e_A = \frac{\mathrm{e}^{-r}}{1 - \frac{1}{4}\mathrm{e}^{-2r}}e^{(s)\alpha}_A\partial_\alpha, \\
    e^\prime_0 &= e_0, \,\,\, \mathrm{and\,\,\,} e^\prime_I = \hat{x}_Ie_1 + \rho\frac{\partial\theta^\alpha}{\partial x^I}e^{(s)A}_\alpha e_A,
\end{align}
where $\hat{x}^I$ are unit vectors, i.e. $x^I = \rho\hat{x}^I$, $\theta^\alpha$ are coordinates on $S^{n-2}$ and $h = s$ is the round metric on $S^{n-2}$. Hence, the local Lorentz transformation relating $e_a$ and $e^\prime_a$, i.e. $e^\prime_a = \Lambda\indices{^b_a}(x)e_b$, is given by
\begin{align}
    \Lambda\indices{^a_0} = \delta\indices{^a_0}\,\,\,\mathrm{and}\,\,\,\Lambda\indices{^a_I} = \delta\indices{^a_1}\hat{x}_I + \delta\indices{^a_A}\rho\frac{\partial\theta^\alpha}{\partial x^I}e^{(s)A}_\alpha.
    \label{eq:lorentzTransformation}
\end{align}

The conserved quantities are more complicated now than in definition \ref{def:crossSectionConserved}, although some comparison will be provided later. For motivating arguments on their definition, see the discussion in \cite{Chrusciel2006}. 
\begin{definition}[``Conserved" quantities on the sphere]
    \label{def:momenta}
    Define the linear momentum, angular momentum and centre of mass position as
    \begin{align}
        P_I &= \frac{n-1}{16\pi}\int_{S^{n-2}}\hat{f}_{(0)}^{mn}(f_{(n-1)mn} - \bar{f}_{(n-1)mn})\hat{x}_I\mathrm{d}A(s) = \frac{n-1}{16\pi}\int_{S^{n-2}}p_0\hat{x}_I\mathrm{d}A(s), \\
        J_{IJ} &= \frac{n-1}{16\pi}\int_{S^{n-2}}f_{(n-1)0\alpha}\bigg(\hat{x}_I\frac{\partial\theta^\alpha}{\partial x^J}\bigg|_{\rho = 1} - \hat{x}_J\frac{\partial\theta^\alpha}{\partial x^I}\bigg|_{\rho = 1}\bigg)\mathrm{d}A(s) \,\,\,\mathrm{and} \\
        K_I &= \frac{n-1}{16\pi}\int_{S^{n-2}}f_{(n-1)0\alpha}\frac{\partial\theta^\alpha}{\partial x^J}\bigg|_{\rho = 1}\left(\delta\indices{^J_I} - \hat{x}^J\hat{x}_I\right)\mathrm{d}A(s).
    \end{align}
\end{definition}
It's now possible to state the main result of this section.
\begin{theorem}
    In an asymptotically AdS spacetime (i.e. with round sphere cross-section), if the Einstein equation and the dominant energy condition hold, then
    \begin{align}
        EI - \mathrm{i}P_I\gamma^I + \frac{\mathrm{i}}{2}J_{IJ}\gamma^0\gamma^{IJ} + K_I\gamma^0\gamma^I
    \end{align}
    is a non-negative definite matrix.
    \label{thm:adsPositiveEnergy}
\end{theorem}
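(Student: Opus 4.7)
The plan is to apply theorem \ref{thm:generalPositveEnergy} directly with the background chosen as pure AdS in the Poincar\'{e} ball form \eqref{eq:adsDisk} and with $\mathcal{A}_a = 0$, so that $\mathbb{M} = 4\pi T^{0a}\gamma_0\gamma_a$ is non-negative by the dominant energy condition. The background Killing spinor \eqref{eq:diskKilling} is then parameterised by an arbitrary constant Dirac spinor $\varepsilon_0$, and theorem \ref{thm:generalPositveEnergy} yields $Q(\varepsilon) \geq 0$ as a Hermitian quadratic form $\varepsilon_0^\dagger M \varepsilon_0 \geq 0$ in $\varepsilon_0$. The theorem will then reduce to showing that $M$ equals, up to a positive prefactor, the matrix stated.

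To compute $M$, I will work in the ball frame, where the Killing spinor has the simple form $\varepsilon'_k = (1-\rho^2)^{-1/2}(I - iX)e^{i\gamma^0 t/2}\varepsilon_0$ with $X = x_I\gamma^I$. The key algebraic input is $X^2 = -\rho^2 I$, giving the cancellation $(I+iX)(I-iX) = (1-\rho^2)I$ that renders the ball-frame bilinears $\bar\varepsilon'_k\gamma^a\varepsilon'_k$ tractable in closed form. These are then mapped to the Fefferman-Graham-frame components appearing in $p_M\bar\varepsilon_k\gamma^M\varepsilon_k$ via the local Lorentz transformation \eqref{eq:lorentzTransformation}. The asymptotic relation $1-\rho^2 \sim 2e^{-r}$ (from \eqref{eq:rToRho}) combines with the $e^{-r}$ factor in \eqref{eq:qBoundary} to tame the $(1-\rho^2)^{-1}$ in the Killing-spinor norm, leaving a finite integrand on $S^{n-2}$.

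After simplification, the integrand decomposes into angular factors $\hat{x}_I$ and $\partial_{x^I}\theta^\alpha|_{\rho = 1}$ multiplied by spinor bilinears in $\varepsilon_0$ whose Clifford structure is built from $I$, $\gamma^I$, $\gamma^0\gamma^I$ and $\gamma^0\gamma^{IJ}$. Integration against $p_0$ and $p_0 \hat{x}_I$ produces $E$ and $P_I$ from \eqref{eq:kottlerEnergy} and definition \ref{def:momenta}; integration against $f_{(n-1)0\alpha}\partial_{x^I}\theta^\alpha$, split into its antisymmetric and symmetric-transverse parts using $\hat{x}^I\partial_{x^I}\theta^\alpha = 0$ together with the completeness relation $\delta_{IJ} = \hat{x}_I\hat{x}_J + s^{\alpha\beta}\partial_{\theta^\alpha}\hat{x}_I\partial_{\theta^\beta}\hat{x}_J$ on the sphere, yields $J_{IJ}$ and $K_I$ respectively. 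The $\gamma^0$ appearing in the $K_I$ and $J_{IJ}$ terms is generated by the phases $e^{\pm i\gamma^0 t/2}$ in the Killing spinor, through the identity $e^{-i\gamma^0 t/2}\gamma^I e^{i\gamma^0 t/2} = \gamma^I(\cos t\, I + i\sin t\,\gamma^0)$, which mixes $\gamma^I$ with $\gamma^0\gamma^I$ (and analogously $\gamma^{IJ}$ with $\gamma^0\gamma^{IJ}$).

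The principal obstacle is the bookkeeping: one must simultaneously track the four independent Clifford structures, the $t$-phases, the factors from $\Lambda$ and the various angular integrals so that all four of $E$, $P_I$, $J_{IJ}$, $K_I$ emerge with the correct normalisation. No new analytic tool is required beyond Dirac algebra and standard spherical identities. Once the coefficients are collected, $M$ is read off as a positive constant times $EI - iP_I\gamma^I + \frac{i}{2}J_{IJ}\gamma^0\gamma^{IJ} + K_I\gamma^0\gamma^I$, and the non-negativity $\varepsilon_0^\dagger M\varepsilon_0 \geq 0$ for all $\varepsilon_0$ gives the claimed matrix inequality.
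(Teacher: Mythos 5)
Your plan is essentially the paper's own proof: apply theorem \ref{thm:generalPositveEnergy} with $\mathcal{A}_a = 0$, compute the ball-frame bilinears from \eqref{eq:diskKilling}, convert via $\bar\varepsilon_k\gamma^a\varepsilon_k = \Lambda\indices{^a_b}\bar\varepsilon_k^\prime\gamma^b\varepsilon_k^\prime$ using \eqref{eq:lorentzTransformation}, take $\rho \to 1$ with $\frac{1}{1-\rho^2}\to\frac{1}{2}\mathrm{e}^r$, and match the resulting angular integrals against definition \ref{def:momenta}. One small misattribution worth flagging: the $\gamma^0$ factors in the matrix come from the Dirac conjugate $\bar\varepsilon_k^\prime = \varepsilon_k^{\prime\dagger}\gamma^0$ (and the relevant Clifford identity is $(I - \mathrm{i}x_I\gamma^I)\gamma^0\gamma^a(I - \mathrm{i}x_J\gamma^J)$, not $(I+\mathrm{i}X)(I-\mathrm{i}X)$), while the $t$-phases $\mathrm{e}^{\pm\mathrm{i}\gamma^0 t/2}$ are best left sandwiching the result as a unitary conjugation that preserves non-negative definiteness — commuting them through, as your plan suggests, would needlessly reintroduce $t$-dependence into the coefficients.
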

\begin{proof}
    Theorem \ref{thm:generalPositveEnergy} was derived in a vielbein adapted to the Fefferman-Graham expansion. However, since $\bar{\varepsilon}_k\gamma^a\varepsilon_k$ in theorem \ref{thm:generalPositveEnergy} transforms as a Lorentz vector, it suffices to apply $\bar{\varepsilon}_k\gamma^a\varepsilon_k = \Lambda\indices{^a_b}\bar{\varepsilon}_k^\prime\gamma^b\varepsilon_k^\prime$. From equation \ref{eq:diskKilling}, one finds 
    \begin{align}
        \bar{\varepsilon}_k^\prime\gamma^0\varepsilon^\prime_k &= \frac{1}{1-\rho^2}\varepsilon_0^\dagger\mathrm{e}^{-\mathrm{i}\gamma^0t/2}((1 + \rho^2)I - 2\mathrm{i}x_I\gamma^I)\mathrm{e}^{\mathrm{i}\gamma^0t/2}\varepsilon_0 \,\,\,\mathrm{and} \\
        \bar{\varepsilon}_k^\prime\gamma^I\varepsilon^\prime_k &= \frac{1}{1-\rho^2}\varepsilon_0^\dagger\mathrm{e}^{-\mathrm{i}\gamma^0t/2}((1 + \rho^2)\gamma^0\gamma^I - 2\mathrm{i}x_J\gamma^0\gamma^{IJ} - 2x^Ix_J\gamma^0\gamma^J)\mathrm{e}^{\mathrm{i}\gamma^0t/2}\varepsilon_0.
    \end{align}
    By equation \ref{eq:rToRho}, as $r \to \infty$, $\rho \to 1$ and $\frac{1}{1 - \rho^2} \to \frac{1}{2}\mathrm{e}^r$. Therefore, 
    \begin{align}
        \bar{\varepsilon}_k^\prime\gamma^0\varepsilon^\prime_k &\to \mathrm{e}^r\varepsilon_0^\dagger\mathrm{e}^{-\mathrm{i}\gamma^0t/2}(I - \mathrm{i}\hat{x}_I\gamma^I)\mathrm{e}^{\mathrm{i}\gamma^0t/2}\varepsilon_0 \,\,\,\mathrm{and} \\
        \bar{\varepsilon}_k^\prime\gamma^I\varepsilon^\prime_k &\to \mathrm{e}^r\varepsilon_0^\dagger\mathrm{e}^{-\mathrm{i}\gamma^0t/2}(\gamma^0\gamma^I - \mathrm{i}\hat{x}_J\gamma^0\gamma^{IJ} - \hat{x}^I\hat{x}_J\gamma^0\gamma^J)\mathrm{e}^{\mathrm{i}\gamma^0t/2}\varepsilon_0.
    \end{align}
    Substituting these into theorem \ref{thm:generalPositveEnergy}, applying equation \ref{eq:lorentzTransformation}, definition \ref{def:momenta} and
    \begin{align}
        p_A = e_A^{(f_{(0)})m}n_{(0)}^n(f_{(n-1)mn} - \bar{f}_{(n-1)mn}) = e_A^{(s)\alpha}f_{(n-1)0\alpha}
    \end{align}
    ultimately yields
    \begin{align}
        0 &\leq Q(\varepsilon) = 8\pi\varepsilon_0^\dagger\mathrm{e}^{-\mathrm{i}\gamma^0t/2}\bigg(EI - \mathrm{i}P_I\gamma^I + K_I\gamma^0\gamma^I + \frac{\mathrm{i}}{2}J_{IJ}\gamma^0\gamma^{IJ}\bigg)\mathrm{e}^{\mathrm{i}\gamma^0t/2}\varepsilon_0.
    \end{align}
    Since $\varepsilon_0$ is an arbitrary constant spinor and $\mathrm{e}^{\mathrm{i}\gamma^0t/2}$ is a constant, unitary matrix on $\Sigma_t$, this inequality is satisfied if and only if $EI - \mathrm{i}P_I\gamma^I + K_I\gamma^0\gamma^I + \frac{\mathrm{i}}{2}J_{IJ}\gamma^0\gamma^{IJ}$ is non-negative definite.
\end{proof}
The result produced is formally identical to \cite{Chrusciel2006} - in particular, see the unnamed equation directly above their equation 3.12 on their page 11 - the only difference being that \cite{Chrusciel2006} considered Riemannian asymptotics on an initial data slice. As such, they don't have any $t$ dependence in their Killing spinors. Adding this dependence was essentially the main result of \cite{Wang2015}. Theorem \ref{thm:adsPositiveEnergy} differs from their work in that the $t$ dependence has been extracted from the physical quantities; in contrast, their analogues of $P_I$ and $K_I$ have explicit $t$ dependence in their definition. The presentation in theorem \ref{thm:adsPositiveEnergy} also justifies some claims which were already used in section 6 of \cite{Rallabhandi2025}.

The eigenvalues of the matrix in theorem \ref{thm:adsPositiveEnergy} can't be found analytically in general and thus there is no concrete inequality, like in theorem \ref{thm:torusPositiveEnergy} for example. However, more progress can be made in specific examples. For example, if $n = 4$ and $K_I = P_I = 0$, then\footnote{As explained later, it is typically possible to choose a frame in which $P_I = 0$ by performing a conformal isometry of the round sphere boundary \cite{Chrusciel2006}.} the eigenvalues of $EI + \frac{\mathrm{i}}{2}J_{IJ}\gamma^0\gamma^{IJ}$ are $E \pm \sqrt{\frac{1}{2}J_{IJ}J^{IJ}} = E \pm |J|$, leading to the familiar 4D BPS inequality,
\begin{align}
    E \geq |J|.
\end{align}
Similarly, consider the case where $n = 5$ and $K_I = P_I = 0$. There are two independent rotation planes in 4 space dimensions. Without loss of generality, suppose the coordinates are oriented so that the angular momentum is in the 1-2 and 3-4 planes. Then, since the eigenvalues of $EI + \mathrm{i}J_{12}\gamma^0\gamma^1\gamma^2 + \mathrm{i}J_{34}\gamma^0\gamma^3\gamma^4$ are $E \pm J_{12} \pm J_{34}$ and $E \pm J_{12} \mp J_{34}$, one gets the familiar 5D BPS inequality,
\begin{align}
    E \geq |J_1| + |J_2|,
    \label{eq:5dVacuumBPS}
\end{align}
where $J_1$ and $J_2$ describe the independent rotations.

For a different type of example, suppose $K_I = 0$ and $J_{IJ} = 0$. Then, the eigenvalues of $EI - \mathrm{i}P_I\gamma^I$ are $E \pm \sqrt{P_IP^I} = E \pm |P|$, leading to $E \geq |P|$, which was the result in \cite{Gibbons1983}. 

Many more permutations can be chosen like this - see \cite{Chrusciel2006} for a detailed analysis.

Theorem \ref{thm:adsPositiveEnergy} can also be derived from the cross-section point of view as follows. To deal with $S^{n-2}$ for arbitrary $n$, the only practical coordinates are the ``nested spheres." In particular,
\begin{align}
    x^I &= \rho\begin{bmatrix}
        \cos(\theta_2) \\
        \sin(\theta_2)\cos(\theta_3) \\ \vdots \\ \sin(\theta_2)\cdots\sin(\theta_{n-2})\cos(\theta_{n-1}) \\
        \sin(\theta_2)\cdots\sin(\theta_{n-2})\sin(\theta_{n-1})
    \end{bmatrix} \,\,\,\mathrm{and} 
    \label{eq:nestedSphere} \\
    h &= \rho^2(\mathrm{d}\theta_2\otimes\mathrm{d}\theta_2 + \sin^2(\theta_2)\mathrm{d}\theta_3\otimes\mathrm{d}\theta_3 + \cdots + \sin^2(\theta_2)\cdots\sin^2(\theta_{n-2})\mathrm{d}\theta_{n-1}\otimes\mathrm{d}\theta_{n-1}).
\end{align}
The natural vielbein to use on the unit sphere is thus
\begin{align}
    e^{(s)2} = \mathrm{d}\theta_2, \,\, e^{(s)3} = \sin(\theta_2)\mathrm{d}\theta_3, \,\, \cdots, \,\, e^{(s)n-1} = \sin(\theta_2)\cdots\sin(\theta_{n-2})\mathrm{d}\theta_{n-1}.
\end{align}
In this frame, the most general solution to $D_A^{(h)}\varepsilon_h = \frac{1}{2}\gamma_A\varepsilon_h$ on the unit sphere is \cite{Lu1999}
\begin{align}
    \varepsilon_h &= \mathrm{e}^{\theta_2\gamma^2/2}\mathrm{e}^{\theta_3\gamma^3\gamma^2/2}\cdots\mathrm{e}^{\theta_{n-1}\gamma^{n-1}\gamma^{n-2}/2}\varepsilon_0
    \label{eq:sphereKillingSpinor}
\end{align}
for a constant spinor, $\varepsilon_0$. Now, defining $\varepsilon_k$ as per theorem \ref{thm:crossSectionKilling} and applying theorem \ref{thm:generalPositveEnergy} should give the same result as the method based on $\varepsilon_k^\prime$ used in theorem \ref{thm:adsPositiveEnergy}. However, seeing this requires performing a spinorial change of frame\footnote{The details of the calculations will be omitted as they're not particularly relevant to the main idea.}.

Define $\mathfrak{o(n - 1)}$ through the generators,
\begin{align}
    (M_{IJ})_{KL} = \delta_{IK}\delta_{JL} - \delta_{IL}\delta_{JK}.
\end{align}
Then, the Lorentz transformation matrix, $\Lambda\indices{^I_J}$, of equation \ref{eq:lorentzTransformation} can be checked to be
\begin{align}
    &\begin{bmatrix}
    \cos(\theta_2) & \sin(\theta_2)\cos(\theta_3) & \cdots & \sin(\theta_2)\cdots\sin(\theta_{n-2})\cos(\theta_{n-1}) &
    \sin(\theta_2)\cdots\sin(\theta_{n-2})\sin(\theta_{n-1}) \\
    -\sin(\theta_2) & \cos(\theta_2)\cos(\theta_3) & \cdots & \cos(\theta_2)\sin(\theta_3)\cdots\sin(\theta_{n-2})\cos(\theta_{n-1}) & \cos(\theta_2)\sin(\theta_3)\cdots\sin(\theta_{n-1}) \\
    0 & -\sin(\theta_3) & \cdots & \cos(\theta_3)\sin(\theta_4)\cdots\sin(\theta_{n-2})\cos(\theta_{n-1}) & \cos(\theta_3)\sin(\theta_4)\cdots\sin(\theta_{n-1}) \\
    \vdots & \vdots & \ddots & \vdots & \vdots \\
    0 & 0 & \cdots & -\sin(\theta_{n-1}) & \cos(\theta_{n-1})
    \end{bmatrix}
    \label{eq:lorentzMatrixFull}\\
    &=\mathrm{e}^{\theta_2M_{12}}\cdots\mathrm{e}^{\theta_{n-1}M_{n-2, n-1}}.
    \label{eq:lorentzMatrixExponential}
\end{align}
Therefore, upon the change of frame,
\begin{align}
    \varepsilon^\prime_k &= \frac{1}{\sqrt{1-\rho^2}}\left(I - \mathrm{i}x_I\gamma^I\right)\mathrm{e}^{\mathrm{i}\gamma^0t/2}\varepsilon_0 \\
    \implies \varepsilon_k &= \frac{1}{\sqrt{1-\rho^2}}\left(I - \mathrm{i}\Lambda\indices{^I_J}x^J\gamma_I\right)\mathrm{e}^{\mathrm{i}\gamma^0t/2}\mathrm{e}^{\theta_2\gamma^2\gamma^1/2}\cdots\mathrm{e}^{\theta_{n-1}\gamma^{n-1}\gamma^{n-2}/2}\varepsilon_0.
\end{align}
It can be shown that applying equation \ref{eq:lorentzMatrixFull} \& \ref{eq:rToRho}, redefining $\varepsilon_0$ as $\frac{1}{\sqrt{2}}(I - \gamma^1)\varepsilon_0$ and then simplifying reduces $\varepsilon_k$ to the form in theorem \ref{thm:crossSectionKilling}.

Theorem \ref{thm:adsPositiveEnergy} could have been derived by a method even more intrinsic to the cross-section by deploying $\hat{\varepsilon}_h^{(\pm)}$, like in theorem \ref{thm:crossSectionPositiveEnergy3}. From \cite{Lu1999},
\begin{align}
    \hat{\varepsilon}_h^{(\pm)} &= \mathrm{e}^{\pm\theta_2\hat{\gamma}^2/2}\mathrm{e}^{\theta_3\hat{\gamma}^3\hat{\gamma}^2/2}\cdots\mathrm{e}^{\theta_{n-1}\hat{\gamma}^{n-1}\hat{\gamma}^{n-2}/2}\hat{\varepsilon}_0.
\end{align}
While theorem \ref{thm:crossSectionPositiveEnergy3} doesn't apply to round sphere cross-sections, the proof still holds until
\begin{align}
    0 &\leq \int_{\Sigma_{t, \infty}}\left(\hat{\varepsilon}_h^{(+)\dagger} + \mathrm{ie}^{\mathrm{i}t}\hat{\varepsilon}_h^{(-)\dagger}\right)\left(p_0I - \mathrm{i}p_A\hat{\gamma}^A\right)\left(\hat{\varepsilon}_h^{(+)} - \mathrm{ie}^{-\mathrm{i}t}\hat{\varepsilon}_h^{(-)}\right)\mathrm{d}A(h),
\end{align}
which is effectively the result that would arise by combining appendix F and section 8 of \cite{Cheng2005}.

However, the individual terms in the integrand can be analysed further. When $h = s$, the cross-terms between $\hat{\varepsilon}_h^{(+)}$ and $\hat{\varepsilon}_h^{(-)}$ are no longer zero because the Obata theorem no longer applies. The $p_0(\hat{\varepsilon}_h^{(+)\dagger}\hat{\varepsilon}_h^{(+)} + \hat{\varepsilon}_h^{(-)\dagger}\hat{\varepsilon}_h^{(-)})$ terms produce energy as before, but this time the $p_0(\mathrm{ie}^{\mathrm{i}t}\hat{\varepsilon}_h^{(-)\dagger}\hat{\varepsilon}_h^{(+)} - \mathrm{ie}^{-\mathrm{i}t}\hat{\varepsilon}_h^{(+)\dagger}\hat{\varepsilon}_h^{(-)})$ terms produce linear momentum, the $-\mathrm{i}p_A(\hat{\varepsilon}_h^{(+)\dagger}\hat{\varepsilon}_h^{(+)} + \hat{\varepsilon}_h^{(-)\dagger}\hat{\varepsilon}_h^{(-)})$ terms produce angular momentum and the $p_A(\mathrm{e}^{\mathrm{i}t}\hat{\varepsilon}_h^{(-)\dagger}\hat{\gamma}^A\hat{\varepsilon}_h^{(+)} - \mathrm{e}^{-\mathrm{i}t}\hat{\varepsilon}_h^{(+)\dagger}\hat{\gamma}^A\hat{\varepsilon}_h^{(-)})$ terms produce the centre of mass position. This approach with $\hat{\varepsilon}_h^{(\pm)}$ wasn't followed earlier for computational ease, but also because the interpretation of the physical quantities is much easier in the approach taken earlier.

Finally, consider asymptotic symmetries for these metrics, following the discussion at the end of section \ref{sec:boundaryGeometry}. Again, the Obata theorem doesn't hold. Therefore, this time equation \ref{eq:xi1Obata} no longer implies $\xi^1_{(0)} = 0$. Consequently, equation \ref{eq:xi0ConformalKilling} implies $\xi^\alpha_{(0)}\partial_\alpha$ is a conformal Killing vector of the sphere with $\hat{D}_\alpha^{(h)}\xi^\alpha_{(0)} = -(n-2)\xi^1_{(0)}$. The net effect is the asymptotic symmetry group is equal to the conformal group of $S^{(n-2)}$, namely $O(n - 1, 1)$. This reproduces the result of \cite{Chrusciel2001}. Furthermore, \cite{Chrusciel2001} goes on to show that under these asymptotic symmetries, $E$ \& $P_I$ can be combined into an object, $P_a \equiv (E, P_I)$, transforming as a vector under $O(n - 1, 1)$ and likewise $K_I$ \& $J_{IJ}$ can be combined into an object, $J_{ab}$ with $J_{0I} = K_I$, transforming as a 2-form under $O(n - 1, 1)$. Therefore, $E$, $P_I$, $J_{IJ}$ and $K_I$ are not individually geometric invariants when the cross-section is a round sphere - for example, if $P_a$ is timelike, one could always perform an $O(n -1, 1)$ transformation to set $P_I$ to zero. As explained in \cite{Chrusciel2001, Chrusciel2006}, this is fine because various different combinations of $E$, $P_I$, $J_{IJ}$ and $K_I$ are invariants, e.g. $E^2 - P_IP^I$, and theorem \ref{thm:adsPositiveEnergy} remains true despite the action of asymptotic symmetries.

\subsection{\texorpdfstring{$L(p, 1)$}{}}
\label{sec:lens}
View $S^3$ as the level set, 
\begin{align}
    \{(z_1, z_2) \in \mathbb{C}^2 \,\,|\,\, |z_1|^2+|z_2|^2 = 1\}.
\end{align}
Then, the lens space, $L(p, 1)$, is defined as the quotient of $S^3$ by the $\mathbb{Z}_p$ action,
\begin{align}
    (z_1, z_2) \to (z_1\mathrm{e}^{2\pi\mathrm{i}/p}, z_2\mathrm{e}^{2\pi\mathrm{i}/p}).
\end{align}
It will be easiest to work in coordinates, $(\theta, \phi_1, \phi_2) \in [0, \pi]\times[0, 2\pi)\times[0, 2\pi)$, defined by
\begin{align}
    x_1 &= \cos(\theta/2)\cos(\phi_1), \,\,\, x_2 = \cos(\theta/2)\sin(\phi_1),\,\,\, z_1 = x_1 + \mathrm{i}x_2, \\
    x_3 &= \sin(\theta/2)\cos(\phi_2), \,\,\, x_4 = \sin(\theta/2)\sin(\phi_2) \,\,\,\mathrm{and}\,\,\, z_2 = x_3 + \mathrm{i}x_4,
\end{align}
where $x_I$ are the Cartesian coordinates from the standard embedding of $S^3$ in $\mathbb{R}^4$.

The metric on $L(p, 1)$ is locally isometric to the round metric on $S^3$; in the chosen coordinates it reads
\begin{align}
    h &= \frac{1}{4}\mathrm{d}\theta\otimes\mathrm{d}\theta + \cos^2(\theta/2)\mathrm{d}\phi_1\otimes\mathrm{d}\phi_1 + \sin^2(\theta/2)\mathrm{d}\phi_2\otimes\mathrm{d}\phi_2.
\end{align}
\begin{lemma}
    \label{thm:lensKillingVectors}
    The Killing vectors of $L(p, 1)$ are spanned by 
    \begin{align}
        k_1 &= \frac{\partial}{\partial\phi_1} + \frac{\partial}{\partial\phi_2}, \,\,\,
        k_2 = \frac{\partial}{\partial\phi_1} - \frac{\partial}{\partial\phi_2} , \\
        k_3 &= \tan(\theta/2)\sin(\phi_1 - \phi_2)\frac{\partial}{\partial\phi_1} + 2\cos(\phi_1 - \phi_2)\frac{\partial}{\partial\theta} + \cot(\theta/2)\sin(\phi_1 - \phi_2)\frac{\partial}{\partial\phi_2} \,\,\,\mathrm{and} \\
        k_4 &= \tan(\theta/2)\cos(\phi_1 - \phi_2)\frac{\partial}{\partial\phi_1} - 2\sin(\phi_1 - \phi_2)\frac{\partial}{\partial\theta} + \cot(\theta/2)\cos(\phi_1 - \phi_2)\frac{\partial}{\partial\phi_2}.
    \end{align}
\end{lemma}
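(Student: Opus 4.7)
The plan is to use that $L(p,1)$ is the quotient of round $S^3$ by a free, isometric $\mathbb{Z}_p$-action, so a vector field on $L(p,1)$ is Killing if and only if it lifts to a $\mathbb{Z}_p$-invariant Killing field on $S^3$. I would first catalogue the Killing algebra of $S^3$, which is $\mathfrak{so}(4)$, via the standard embedding $S^3\hookrightarrow\mathbb{R}^4$ using $z_1 = x_1 + \mathrm{i}x_2$ and $z_2 = x_3 + \mathrm{i}x_4$. It is spanned by $V_{IJ} = x_I\partial_{x_J} - x_J\partial_{x_I}$ for $1 \leq I < J \leq 4$. The deck transformation acts as a simultaneous rotation by $2\pi/p$ in the $(x_1, x_2)$ and $(x_3, x_4)$ planes, so $V_{12}$ and $V_{34}$ are manifestly fixed.

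For the remaining four generators, I would complexify. Setting $z = z_1$ and $w = z_2$, a direct calculation shows $z\partial_w - w\partial_z = \tfrac{1}{2}(V_{13} + V_{24}) + \tfrac{\mathrm{i}}{2}(V_{23} - V_{14})$ is pointwise invariant under the deck transformation, while $z\partial_{\bar{w}} - \bar{w}\partial_z$ picks up a factor of $\mathrm{e}^{4\pi\mathrm{i}/p}$, which is non-trivial for $p \geq 3$ (the exceptional cases $p = 1, 2$ where $S^3/\mathbb{Z}_p$ becomes $S^3$ or $\mathbb{RP}^3$ are presumably outside the lemma's scope). The invariant span among $\{V_{13}, V_{14}, V_{23}, V_{24}\}$ is therefore two-dimensional, consisting of the real and imaginary parts of the first combination. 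Together with $V_{12}$ and $V_{34}$, this produces a four-dimensional invariant subalgebra, which must be the full Killing algebra of $L(p,1)$.

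Finally, I would push these four generators through the chain rule using the given parametrisation. This immediately yields $V_{12} = \partial_{\phi_1}$ and $V_{34} = \partial_{\phi_2}$, so $k_1 = V_{12} + V_{34}$ and $k_2 = V_{12} - V_{34}$. For $V_{13} + V_{24}$ and $V_{23} - V_{14}$, the raw chain-rule expressions produce $\partial_\theta$, $\partial_{\phi_1}$ and $\partial_{\phi_2}$ terms with coefficients involving products like $\cos\phi_1\cos\phi_2 + \sin\phi_1\sin\phi_2$ and $\sin\phi_1\cos\phi_2 - \cos\phi_1\sin\phi_2$, which collapse via the addition formulas to $\cos(\phi_1 - \phi_2)$ and $\sin(\phi_1 - \phi_2)$; the $\theta$-dependence reorganises into $\tan(\theta/2)$ and $\cot(\theta/2)$ factors, reproducing $k_3$ and $k_4$ up to normalisation.

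The main obstacle is the trigonometric bookkeeping in this final step: one must compute how each $\partial_{x_I}$ decomposes into $\partial_\theta, \partial_{\phi_1}, \partial_{\phi_2}$ (inverting the Jacobian of the parametrisation) and then regroup the various $\sin$ and $\cos$ terms using the angle addition formulas. Conceptually, however, the argument is straightforward once the quotient description reduces the problem to linear algebra on $\mathfrak{so}(4)$.
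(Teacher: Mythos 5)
Your proposal follows the same route as the paper: identify the Killing algebra $\mathfrak{so}(4)$ of $S^3$, determine the $\mathbb{Z}_p$-invariant subalgebra that descends to the quotient, and then write its generators in $(\theta,\phi_1,\phi_2)$ coordinates. The paper does this by simply listing the $v_{IJ}$, passing to the linear combinations $k_1,\dots,k_6$, and asserting that $k_1,\dots,k_4$ survive while any combination involving $k_5,k_6$ does not; your complexification step gives a cleaner reason for the count of four. However, there is a slip in the complex generator you single out: $z\partial_{\bar{w}} - \bar{w}\partial_z$ is not an element of the complexified $\mathfrak{so}(4)$ at all (its imaginary part $x_1\partial_{x_4} + x_4\partial_{x_1} + x_2\partial_{x_3} + x_3\partial_{x_2}$ is symmetric, not skew), and it does not transform by a single overall phase under the deck transformation — its two terms pick up opposite phases. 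The combination you want is $z\partial_{\bar{w}} - w\partial_{\bar{z}} = \tfrac{1}{2}(V_{13}-V_{24}) + \tfrac{\mathrm{i}}{2}(V_{14}+V_{23})$, which does acquire the overall factor $\mathrm{e}^{-4\pi\mathrm{i}/p}$ and whose real and imaginary parts are precisely the paper's $k_5 = v_{13}-v_{24}$ and $k_6 = v_{14}+v_{23}$. With that substitution the argument closes correctly. Your remark about the exceptional cases $p\in\{1,2\}$ (where the centre makes the full $\mathfrak{so}(4)$ descend) is correct and is a point the paper leaves implicit.
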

\begin{proof}
    $L(p, 1)$ is locally isometric to $S^3$ and hence $L(p, 1)$'s Killing vectors are the subspace of $S^3$'s Killing vectors which survive the $\mathbb{Z}_p$ quotient. The Killing vectors of a sphere are known to be spanned by
    \begin{align}
        v_{IJ} = \bigg(\hat{x}_I\frac{\partial\theta^\alpha}{\partial x^J}\bigg|_{\rho = 1} - \hat{x}_J\frac{\partial\theta^\alpha}{\partial x^I}\bigg|_{\rho = 1}\bigg)\partial_\alpha.
    \end{align}
    Re-writing everything in terms of $(\theta, \phi_1, \phi_2)$ and taking the following invertible linear transformation of Killing vector basis, one finds
        \begin{align}
        k_1 &= v_{12} + v_{34} = \frac{\partial}{\partial\phi_1} + \frac{\partial}{\partial\phi_2}, \\
        k_2 &= v_{12} - v_{34} = \frac{\partial}{\partial\phi_1} - \frac{\partial}{\partial\phi_2}, \\
        k_3 &= v_{24} + v_{13} \\
        &= \tan(\theta/2)\sin(\phi_1 - \phi_2)\frac{\partial}{\partial\phi_1} + 2\cos(\phi_1 - \phi_2)\frac{\partial}{\partial\theta} + \cot(\theta/2)\sin(\phi_1 - \phi_2)\frac{\partial}{\partial\phi_2}, \\
        k_4 &= v_{14} - v_{23} \\
        &= \tan(\theta/2)\cos(\phi_1 - \phi_2)\frac{\partial}{\partial\phi_1} - 2\sin(\phi_1 - \phi_2)\frac{\partial}{\partial\theta} + \cot(\theta/2)\cos(\phi_1 - \phi_2)\frac{\partial}{\partial\phi_2}, \\
        k_5 &= v_{13} - v_{24} \\
        &= \tan(\theta/2)\sin(\phi_1 + \phi_2)\frac{\partial}{\partial\phi_1} + 2\cos(\phi_1 + \phi_2)\frac{\partial}{\partial\theta} - \cot(\theta/2)\sin(\phi_1 + \phi_2)\frac{\partial}{\partial\phi_2}\,\,\,\mathrm{and} \\
        k_6 &= v_{14} + v_{23} \\
        &= -\tan(\theta/2)\cos(\phi_1 + \phi_2)\frac{\partial}{\partial\phi_1} + 2\sin(\phi_1 + \phi_2)\frac{\partial}{\partial\theta} + \cot(\theta/2)\cos(\phi_1 + \phi_2)\frac{\partial}{\partial\phi_2}.
    \end{align}
    $k_1$, $k_2$, $k_3$ and $k_4$ are manifestly well-defined on $L(p, 1)$ while any linear combination involving $k_5$ and $k_6$ is not well-defined on $L(p, 1)$.
\end{proof}
As the lens space is 3D, it will be convenient to choose the cross-section gamma matrices as $\hat{\gamma}^2 = \mathrm{i}\sigma_1$, $\hat{\gamma}^3 = \mathrm{i}\sigma_2$ and $\hat{\gamma}^4 = \mathrm{i}\sigma_3$.
\begin{lemma}
    \label{thm:lensKillingSpinors}
    The most general solution to $\hat{D}_A^{(h)}\hat{\varepsilon}^{(\pm)}_h = \pm\frac{1}{2}\hat{\gamma}_A\hat{\varepsilon}^{(\pm)}_h$ on $L(p, 1)$ is
    \begin{align}
        \hat{\varepsilon}^{(+)}_h = 0 \,\,\,\mathrm{and}\,\,\,\hat{\varepsilon}^{(-)}_h &= \mathrm{e}^{-\mathrm{i}\theta\sigma_2/4}\mathrm{e}^{-\mathrm{i}(\phi_1 - \phi_2)\sigma_1/2}\hat{\varepsilon}_0,
    \end{align}
    where $\hat{\varepsilon}_0$ is an arbitrary constant spinor and the chosen vielbein on $L(p, 1)$ is
    \begin{align}
        e^2 &= \cos(\theta/2)\mathrm{d}\phi_1, \,\, e^3 = \frac{1}{2}\mathrm{d}\theta\,\, \mathrm{and}\,\, e^4 = \sin(\theta/2)\mathrm{d}\phi_2.
    \end{align}
\end{lemma}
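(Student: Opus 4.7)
The plan is to work locally on $S^3$ — whose metric and vielbein pull back to $L(p,1)$ under the $\mathbb{Z}_p$ quotient — solve the Killing spinor equation on $S^3$, and then impose $\mathbb{Z}_p$ equivariance to determine which solutions descend to $L(p,1)$.

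First I would compute the spin connection using Cartan's first structure equation with the given vielbein. From $\mathrm{d}e^2 = -\frac{1}{2}\sin(\theta/2)\,\mathrm{d}\theta\wedge \mathrm{d}\phi_1$, $\mathrm{d}e^3 = 0$ and $\mathrm{d}e^4 = \frac{1}{2}\cos(\theta/2)\,\mathrm{d}\theta\wedge \mathrm{d}\phi_2$, together with torsion-freeness and antisymmetry $\omega_{AB} = -\omega_{BA}$, the connection 1-forms follow immediately. I would then write out the three component equations of $\hat{D}_A^{(h)}\hat{\varepsilon} = \pm\frac{1}{2}\hat{\gamma}_A\hat{\varepsilon}$ explicitly in the orthonormal frame.

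Next I would integrate the system. The $A = 3$ equation reduces to a first-order ordinary differential equation in $\theta$ with a constant matrix coefficient proportional to $\mathrm{i}\sigma_2$, integrating to an exponential $\mathrm{e}^{\mp\mathrm{i}\theta\sigma_2/4}$ acting on a spinor $\tilde{\varepsilon}(\phi_1, \phi_2)$. Substituting this ansatz into the $A = 2$ and $A = 4$ equations, the $\theta$-dependent trigonometric factors should cancel cleanly (this is essentially the content of the Killing spinor existing on a locally round 3-sphere), leaving a pair of coupled equations for $\tilde{\varepsilon}$. These decouple into the combinations $\partial_{\phi_1} \pm \partial_{\phi_2}$: one orientation gives pure dependence on $\phi_1 - \phi_2$ through an exponential $\mathrm{e}^{-\mathrm{i}(\phi_1-\phi_2)\sigma_1/2}$, while the other gives pure dependence on $\phi_1 + \phi_2$ through an analogous factor. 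Careful sign-tracking should identify the $-$ orientation as the one producing $\phi_1 - \phi_2$, reproducing the formula in the lemma.

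Finally I would impose the $\mathbb{Z}_p$ action $(\phi_1, \phi_2) \mapsto (\phi_1 + 2\pi/p, \phi_2 + 2\pi/p)$, lifted to the spin bundle. The proposed $\hat{\varepsilon}_h^{(-)}$ depends on the angles only through $\phi_1 - \phi_2$, so every choice of constant $\hat{\varepsilon}_0$ descends to $L(p,1)$, yielding the announced two-parameter family. The $+$ orientation candidate, by contrast, picks up the factor $\mathrm{e}^{-\mathrm{i}(2\pi/p)\sigma_1}$ under the shift, whose eigenvalues $\mathrm{e}^{\mp 2\pi\mathrm{i}/p}$ are never $+1$ for $p \geq 2$; hence $\hat{\varepsilon}_0 = 0$ is forced and $\hat{\varepsilon}_h^{(+)} = 0$. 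The principal obstacle is careful sign-bookkeeping through the spin connection and the $\pm\frac{1}{2}\hat{\gamma}_A$ source so that the orientations pair correctly with the angular combinations; a secondary subtlety is that $L(p,1)$ admits more than one spin structure when $p$ is even and the chosen vielbein implicitly selects one, but checking directly that $\hat{\varepsilon}_h^{(-)}$ as written is globally well-defined confirms the consistent choice has been made.
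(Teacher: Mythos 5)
Your plan reproduces the paper's proof essentially step for step: compute the spin connection for the given vielbein, integrate the $\theta$ ODE to get a factor $\mathrm{e}^{\pm\mathrm{i}\theta\sigma_2/4}$, substitute into the $\phi_1$ and $\phi_2$ equations to find that the remaining spinor depends only on $\phi_1 + \phi_2$ (for the $+$ orientation) or $\phi_1 - \phi_2$ (for the $-$ orientation), and then impose $\mathbb{Z}_p$ invariance to kill the $+$ branch while retaining the full two-parameter $-$ branch. Your closing remark about the spin-structure choice for even $p$ is a sensible caveat that the paper addresses only in the surrounding discussion rather than inside the lemma's proof.
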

\begin{proof}
    For this tetrad, the connection one-forms are
    \begin{align}
        \omega_{32}^{(h)} = \tan(\theta/2)e^2, \,\,\omega_{34}^{(h)} = -\cot(\theta/2)e^4\,\, \mathrm{and}\,\, \omega_{24}^{(h)} = 0.
    \end{align}
    Therefore, $0 = e^{(h)\alpha}_A\partial_\alpha\hat{\varepsilon}^{(+)}_h - \frac{1}{4}\omega_{BCA}^{(h)}\hat{\gamma}^{BC}\hat{\varepsilon}^{(+)}_h - \frac{1}{2}\hat{\gamma}_A\hat{\varepsilon}^{(+)}_h$ reduces to the three equations,
    \begin{align}
        \partial_\theta\hat{\varepsilon}^{(+)}_h &= \frac{\mathrm{i}}{4}\sigma_2\hat{\varepsilon}^{(+)}_h, \\
        \partial_{\phi_1}\hat{\varepsilon}^{(+)}_h &= \frac{\mathrm{i}}{2}\sin(\theta/2)\sigma_3\hat{\varepsilon}^{(+)}_h + \frac{\mathrm{i}}{2}\cos(\theta/2)\sigma_1\hat{\varepsilon}^{(+)}_h \,\,\,\mathrm{and} \\
        \partial_{\phi_2}\hat{\varepsilon}^{(+)}_h &= \frac{\mathrm{i}}{2}\cos(\theta/2)\sigma_1\hat{\varepsilon}^{(+)}_h + \frac{\mathrm{i}}{2}\sin(\theta/2)\sigma_3\hat{\varepsilon}^{(+)}_h.
    \end{align}
    The first equation immediately integrates to $\hat{\varepsilon}^{(+)}_h = \mathrm{e}^{\mathrm{i}\theta\sigma_2/4}\hat{\varepsilon}_\theta$, for a spinor, $\hat{\varepsilon}_\theta$, that doesn't depend on $\theta$. Using $\mathrm{e}^{\mathrm{i}\theta\sigma_2} = \cos(\theta)I + \mathrm{i}\sin(\theta)\sigma_2$, the other two equations simplify to
    \begin{align}
        \partial_{\phi_1}\hat{\varepsilon}_\theta &= \frac{\mathrm{i}}{2}\sigma_1\hat{\varepsilon}_\theta \,\,\,\mathrm{and}\,\,\, \partial_{\phi_2}\hat{\varepsilon}_\theta = \frac{\mathrm{i}}{2}\sigma_1\hat{\varepsilon}_\theta.
    \end{align}
    These equations simultaneously integrate to $\hat{\varepsilon}_\theta = \mathrm{e}^{\mathrm{i}(\phi_1 + \phi_2)\sigma_1/2}\hat{\varepsilon}_0$ for some constant spinor, $\hat{\varepsilon}_0$.

    Proceeding completely analogously for $\hat{\varepsilon}_h^{(-)}$ yields
    \begin{align}
        \hat{\varepsilon}^{(\pm)}_h &= \mathrm{e}^{\pm\mathrm{i}\theta\sigma_2/4}\mathrm{e}^{\pm\mathrm{i}(\phi_1 \pm \phi_2)\sigma_1/2}\hat{\varepsilon}_0^{(\pm)}.
    \end{align}
    While these spinors are well-defined on $S^3$, to be well-defined on $L(p, 1)$, they must be invariant under the $\mathbb{Z}_p$ quotient. Thus,
    \begin{align}
        \hat{\varepsilon}^{(\pm)}_h &\to \mathrm{e}^{\pm\mathrm{i}\theta\sigma_2/4}\mathrm{e}^{\pm\mathrm{i}((\phi_1 + 2\pi/p) \pm (\phi_2 + 2\pi/p))\sigma_1/2}\hat{\varepsilon}_0^{(\pm)}.
    \end{align}
    Choosing the $-$ in $\pm$ means the $2\pi/p$ factors immediately cancel and the spinor is left invariant. Hence, every $\hat{\varepsilon}_h^{(-)}$ of $S^3$ is also a $\hat{\varepsilon}_h^{(-)}$ of $L(p, 1)$. Meanwhile, in the $+$ case,
    \begin{align}
        \hat{\varepsilon}^{(+)}_h &\to \mathrm{e}^{\mathrm{i}\theta\sigma_2/4}\mathrm{e}^{\mathrm{i}(\phi_1 + \phi_2 + 4\pi/p)\sigma_1/2}\hat{\varepsilon}_0 
        = \mathrm{e}^{\mathrm{i}\theta\sigma_2/4}\mathrm{e}^{\mathrm{i}(\phi_1 + \phi_2)\sigma_1/2}\mathrm{e}^{2\pi\mathrm{i}\sigma_1/p}\hat{\varepsilon}_0.
    \end{align}
    Since $\mathrm{e}^{\mathrm{i}\theta\sigma_2/4}\mathrm{e}^{\mathrm{i}(\phi_1 + \phi_2)\sigma_1/2}$ is invertible, $\hat{\varepsilon}_h^{(+)}$ remains invariant if and only if $\mathrm{e}^{2\pi\mathrm{i}\sigma_1/p}\hat{\varepsilon}_0 = \hat{\varepsilon}_0$. However, $\mathrm{e}^{2\pi\mathrm{i}\sigma_1/p}$ has eigenvalues $\cos(2\pi/p) \pm \mathrm{i}\sin(2\pi/p)$, meaning $\hat{\varepsilon}_h^{(+)}$ is never invariant.
\end{proof}
One could consider $L(p, q)$ more generally. However, a similar procedure to the previous lemma shows non-trivial Killing spinors only exist for $q = \pm 1$. $L(p, -1)$ differs from $L(p, 1)$ only by a discrete isometry though. 
\begin{definition}[Angular momenta on $L(p, 1)$]
    \label{def:lensConserved}
    For each Killing vector, $k_I$, on $L(p, 1)$, define an ``angular momentum,"
    \begin{align}
        J_I &= \frac{1}{4\pi}\int_{L(p, 1)}f_{(4)0\alpha}\,k_I^\alpha\,\mathrm{d}A(h).
    \end{align}
\end{definition}
\noindent Note that these $J_I$s are identical to the ``conserved quantities" of definition \ref{def:crossSectionConserved}.
\begin{theorem}
    \label{thm:lensPositiveEnergy}
    For spacetimes asymptotically AdS with $L(p, 1)$ cross-section, if the Einstein equation and the dominant energy condition hold, then
    \begin{align}
        E &\geq \sqrt{J_2^2 + J_3^2 + J_4^2}.
    \end{align}
\end{theorem}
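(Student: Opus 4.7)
The plan is to apply Theorem \ref{thm:crossSectionPositiveEnergy3} to the Killing spinor data of Lemma \ref{thm:lensKillingSpinors}. Since $L(p,1)$ is locally isometric to $S^3$ (so $c = 1$) but is not itself a round sphere, the theorem applies. Lemma \ref{thm:lensKillingSpinors} forces $\hat{\varepsilon}_h^{(+)} \equiv 0$, hence $\hat{\delta}^{(+)} = 0$ and $Q_{\hat{k}^{(+)}} = 0$; meanwhile $\hat{\varepsilon}_h^{(-)} = U\hat{\varepsilon}_0$ with $U = \mathrm{e}^{-\mathrm{i}\theta\sigma_2/4}\mathrm{e}^{-\mathrm{i}(\phi_1 - \phi_2)\sigma_1/2}$ unitary, so the normalisation $\hat{\varepsilon}_0^\dagger \hat{\varepsilon}_0 = 1$ yields $\hat{\delta}^{(-)} = 1$. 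The theorem thereby reduces to $E + Q_{\hat{k}^{(-)}} \geq 0$ for every unit constant spinor $\hat{\varepsilon}_0 \in \mathbb{C}^2$.

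The core step is to identify $Q_{\hat{k}^{(-)}}$ in terms of $J_2, J_3, J_4$. With $\hat{\gamma}^A = \mathrm{i}\sigma_{A-1}$, the bilinear reads
\begin{align}
    \hat{k}^{(-)A} = -\mathrm{i}\hat{\varepsilon}_h^{(-)\dagger}\hat{\gamma}^A\hat{\varepsilon}_h^{(-)} = \hat{\varepsilon}_0^\dagger \big(U^\dagger \sigma_{A-1} U\big)\hat{\varepsilon}_0,
\end{align}
and $U^\dagger \sigma_{A-1} U$ is an explicit rotation of the Pauli basis whose entries depend only on $\theta$ and $\phi_1 - \phi_2$. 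Translating $\hat{k}^{(-)} = \hat{k}^{(-)A} e_A^{(h)}$ into a coordinate vector field and matching coefficients of $\partial_\theta, \partial_{\phi_1}, \partial_{\phi_2}$ against the basis of Lemma \ref{thm:lensKillingVectors}, I expect the decomposition $\hat{k}^{(-)} = \hat{\varepsilon}_0^\dagger \sigma_1\hat{\varepsilon}_0\, k_2 + \hat{\varepsilon}_0^\dagger \sigma_2\hat{\varepsilon}_0\, k_3 + \hat{\varepsilon}_0^\dagger \sigma_3\hat{\varepsilon}_0\, k_4$, with no $k_1$ component. Linearity of $Q$ in $\hat{k}$ together with the identification $Q_{k_I} = J_I$ noted after Definition \ref{def:lensConserved} then gives $Q_{\hat{k}^{(-)}} = \hat{\varepsilon}_0^\dagger(J_2\sigma_1 + J_3\sigma_2 + J_4\sigma_3)\hat{\varepsilon}_0$.

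The inequality from Theorem \ref{thm:crossSectionPositiveEnergy3} thus becomes $\hat{\varepsilon}_0^\dagger(EI + J_2\sigma_1 + J_3\sigma_2 + J_4\sigma_3)\hat{\varepsilon}_0 \geq 0$ for all unit $\hat{\varepsilon}_0 \in \mathbb{C}^2$, which is equivalent to positive semidefiniteness of the $2\times 2$ Hermitian matrix $EI + J_2\sigma_1 + J_3\sigma_2 + J_4\sigma_3$. Its eigenvalues are $E \pm \sqrt{J_2^2 + J_3^2 + J_4^2}$, so non-negativity of the smaller one is exactly the claimed bound. The main obstacle is the decomposition step, and in particular the vanishing of the $k_1$ coefficient. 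Structurally this is expected, because $\hat{\varepsilon}_h^{(-)}$ depends on $\phi_1 - \phi_2$ but not $\phi_1 + \phi_2$, so its bilinear is invariant under the flow of $k_1 = \partial_{\phi_1} + \partial_{\phi_2}$, forcing $\hat{k}^{(-)}$ into the centraliser of $k_1$ in the Killing algebra; converting this expectation into the explicit Pauli-expectation identification above is the main bookkeeping calculation.
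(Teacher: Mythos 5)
Your proposal follows the same route as the paper's proof: apply Theorem \ref{thm:crossSectionPositiveEnergy3} to the lens space, use Lemma \ref{thm:lensKillingSpinors} to set $\hat{\delta}^{(+)} = 0$ and $\hat{\delta}^{(-)} = 1$, decompose $\hat{k}^{(-)}$ against the basis of Lemma \ref{thm:lensKillingVectors}, obtain $Q_{\hat{k}^{(-)}} = \hat{\varepsilon}_0^\dagger(J_2\sigma_1 + J_3\sigma_2 + J_4\sigma_3)\hat{\varepsilon}_0$, and read off the bound from the eigenvalues $E \pm \sqrt{J_2^2 + J_3^2 + J_4^2}$. The paper carries out the explicit Pauli rotation and arrives at exactly the decomposition you anticipate (no $k_1$ component), so the only piece you deferred --- the ``main bookkeeping calculation'' --- is precisely what the paper supplies.

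One caution on the heuristic you offered for why the $k_1$ coefficient vanishes. Invariance of the bilinear under the $k_1$ flow indeed forces $\hat{k}^{(-)}$ into the centraliser of $k_1$, but $k_1$ generates the $\mathfrak{u}(1)$ factor that commutes with the entire $\mathfrak{su}(2)$ spanned by $\{k_2, k_3, k_4\}$, and $k_1$ also commutes with itself, so the centraliser of $k_1$ inside $L(p,1)$'s Killing algebra is the whole algebra. That constraint therefore does not rule out a $k_1$ term; the vanishing genuinely requires the explicit Pauli-matrix computation (which the paper does, matching coefficients of $\sigma_1$ with $f_{(4)0\alpha}k_2^\alpha$, etc.). Since you flagged this as the step needing verification and the rest of your argument is sound, this does not undermine the proof, but the structural reasoning as stated should not be mistaken for a proof of the decomposition.
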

\noindent Observe that $J_1$ does not appear in the theorem. $J_1$ is distinguished because its generator, $k_1$, is the symmetry along the circle direction when $S^3$ is viewed as a Hopf fibration over $S^2$.
\begin{proof}
    Since $\hat{\varepsilon}_h^{(+)} = 0$ by lemma \ref{thm:lensKillingSpinors}, theorem \ref{thm:crossSectionPositiveEnergy3} reduces to $0 \leq E + Q_{\hat{k}^{(-)}}$. Furthermore, from lemma \ref{thm:lensKillingSpinors}, by direct evaluation one finds
    \begin{align}
        p_A\hat{k}^{(-)A} &= - \mathrm{i}p_A\hat{\varepsilon}_h^{(-)\dagger}\hat{\gamma}^A\hat{\varepsilon}_h^{(-)} \\
        &= \hat{\varepsilon}_0^\dagger\big((p_2\sin(\theta/2)\sin(\phi_1-\phi_2) + p_3\cos(\phi_1-\phi_2) + p_4\cos(\theta/2)\sin(\phi_1-\phi_2))\sigma_2 \nonumber \\
        &\,\,\,\,\,\,\, + (p_2\sin(\theta/2)\cos(\phi_1-\phi_2) -p_3\sin(\phi_1-\phi_2) + p_4\cos(\theta/2)\cos(\phi_1-\phi_2))\sigma_3 \nonumber \\
        &\,\,\,\,\,\,\, + (p_2\cos(\theta/2) - p_4\sin(\theta/2))\sigma_1\big)\hat{\varepsilon}_0.
    \end{align}
    From lemma \ref{thm:lensKillingVectors} and the choice of vielbein, this expression can be re-written as
    \begin{align}
        p_A\hat{k}^{(-)A} &=\hat{\varepsilon}_0^\dagger\left(f_{(4)0\alpha}k_2^\alpha\sigma_1 + f_{(4)0\alpha}k_3^\alpha\sigma_2 + f_{(4)0\alpha}k_4^\alpha\sigma_3\right)\hat{\varepsilon}_0.
    \end{align}
    Since $\hat{\varepsilon}_0$ and $\sigma_i$ are both constants, it follows that
    \begin{align}
        Q_{\hat{k}^{(-)}} &= 4\pi \hat{\varepsilon}_0^\dagger\left(J_2\sigma_1 + J_3\sigma_2 + J_4\sigma_3\right)\hat{\varepsilon}_0.
    \end{align}
    Then, given the normalisation of $\hat{\varepsilon}_0$, the positive energy inequality reduces to
    \begin{align}
        0 \leq \hat{\varepsilon}_0^\dagger\left(EI + J_2\sigma_1 + J_3\sigma_2 + J_4\sigma_3\right)\hat{\varepsilon}_0.
    \end{align}
    The eigenvalues of the matrix inbetween $\hat{\varepsilon}_0^\dagger$ and $\hat{\varepsilon}_0$ are $E \pm \sqrt{J_2^2 + J_3^2 + J_4^2}$ and hence the theorem follows.
\end{proof}

\subsection{Compatibility of spin structures}
The entire formalism explored in this work relies on the existence of background Killing spinors near $\mathcal{I}$. By construction, all the examples in this section satisfy that requirement.

However, there are more subtle issues which may arise. Let $\overbar{M}$ be an open neighbourhood of $\mathcal{I}$ and let $\bar{g}$ be the background metric. The main problem is that $(\overbar{M}, \bar{g})$ may admit multiple spin structures and the spin structure which admits a non-zero solution, $\varepsilon_k$, may not be compatible with the spin structure on $(M, g)$. The classic example of this is the AdS soliton \cite{Horowitz1998}, which is a vacuum solution of the Einstein equation with $\Lambda < 0$ and has metric, 
\begin{align}
    g &= -r^2\mathrm{d}t\otimes\mathrm{d}t + \frac{\mathrm{d}r\otimes\mathrm{d}r}{r^2(1 - 1/r^{n-1})} \nonumber \\
    &\,\,\,\,\,\,\, + r^2\left(\left(1 - \frac{1}{r^{n-1}}\right)\mathrm{d}\phi_2\otimes\mathrm{d}\phi_2 + \mathrm{d}\phi_3\otimes\mathrm{d}\phi_3 + \cdots \mathrm{d}\phi_{n-1}\otimes\mathrm{d}\phi_{n-1}\right).
\end{align}
The angles are identified by $\phi_2 \sim \phi_2 + \frac{4\pi}{n-1}$ and $\phi_3 \sim \phi_3 + a_3,\, \cdots, \phi_{n-1} \sim \phi_{n-1} + a_{n-1}$ for arbitrary $a_3,\, \cdots, a_{n-1}$. The manifold is therefore asymptotically AdS with toroidal cross-section. However, its energy is 
\begin{align}
    E &= -\frac{1}{4}a_3\cdots a_{n-1} < 0,
\end{align}
seemingly contradicting theorem \ref{thm:torusPositiveEnergy}. The reason for this is that a circle admits two inequivalent spin structures - periodic and anti-periodic. Thus, $(\overbar{M}, \bar{g})$ admits $2^{n-2}$ inequivalent spin structures. However, the constant spinor, $\hat{\psi}_0$, used in the proof of theorem \ref{thm:torusPositiveEnergy} is manifestly periodic around each circle of $T^{n-2}$. Therefore, for theorem \ref{thm:torusPositiveEnergy} to apply, this particular spin structure must extend beyond $(\overbar{M}, \bar{g})$ to all of $(M, g)$. It turns out the global topology of the AdS soliton, namely $\mathbb{R}^3\times T^{n-3}$, requires anti-periodicity around $\phi_2$. Hence, the AdS soliton's spin structure is incompatible with the spin structure required for theorem \ref{thm:torusPositiveEnergy}. There doesn't appear to be any spinorial remedy to this issue; however the modified positive energy conjecture of \cite{Horowitz1998} has recently been confirmed by \cite{Brendle2024} using very different means. 

Similar issues can arise for the lens space cross-section used in theorem \ref{thm:lensPositiveEnergy}. Firstly, $L(p, 1)$ admits a unique spin structure for odd $p$ and two inequivalent spin structures for even $p$ \cite{Franc1987}. It happens that there is an explicitly known soliton solution with $L(p, 1)$ asymptotics \cite{Clarkson2006}. However, it has negative energy, in violation of theorem \ref{thm:lensPositiveEnergy}. The situation is similar to the AdS soliton. This time, depending on $p$ the solution in \cite{Clarkson2006} either has no spin structure or a different spin structure to the one required by theorem \ref{thm:lensPositiveEnergy}.

\section{BPS inequalities}
\label{sec:BPS}
This section will discuss BPS inequalities for the bosonic sector of minimal, gauged supergravity in four and five dimensions. Although the final results are in line with previous partial results in the literature, a number of subtleties are discussed regarding magnetic fields\footnote{Some of the material in this section overlaps with \cite{McSharry2025}, which came out while this paper was in preparation. Specifically, their analysis of 4D minimal, gauged supergravity uses the same $\nabla_a$ and Lichnerowicz identity as the present work.}. The theories considered are described by the actions,
\begin{align}
    S = \frac{1}{16\pi}\int_M\left(R - 2\Lambda - F_{ab}F^{ab}\right)\mathrm{d}\mu(g)
\end{align}
in 4D and
\begin{align}
    S = \frac{1}{16\pi}\int_M\left(R - 2\Lambda -F_{ab}F^{ab} - \frac{2}{3\sqrt{3}}\varepsilon^{abcde}F_{ab}F_{cd}A_e\right)\mathrm{d}\mu(g)
\end{align}
in 5D. In both cases, $F = \mathrm{d}A$ is an electromagnetic field. Hence, the 4D theory is simply Einstein-Maxwell theory with a negative cosmological constant, while the 5D theory has an additional Chern-Simons term\footnote{In both cases, one can also add additional matter terms linearly coupled to $A_a$ and use them to introduce source charges and currents in the Maxwell equations. The results below go through in almost identical fashion with the main modification being to energy conditions.}. The equations of motion are
\begin{align}
    R_{ab} - \frac{1}{2}Rg_{ab} + \Lambda g_{ab} &= 2\bigg(F\indices{_a^c}F_{bc} - \frac{1}{4}g_{ab}F^{cd}F_{cd}\bigg), \\
    D_{b}F^{ba} &= 0 \,\,\,\&\,\,\,
    D_{[a}F_{bc]} = 0
\end{align}
in the 4D case and
\begin{align}
    R_{ab} - \frac{1}{2}Rg_{ab} + \Lambda g_{ab} &= 2\bigg(F\indices{_a^c}F_{bc} - \frac{1}{4}g_{ab}F^{cd}F_{cd}\bigg), \\
    D_{b}F^{ba} &= - \frac{1}{2\sqrt{3}}\varepsilon^{abcde}F_{bc}F_{de} \,\,\,\&\,\,\,
    D_{[a}F_{bc]} = 0
\end{align}
in the 5D case. These equations are assumed to hold throughout this section.

It will be very natural in what follows to split $F_{ab}$ into separate electric and magnetic fields.
\begin{definition}[Electric \& magnetic components and electric charge]
    Given a Maxwell field, $F_{ab}$, the electric and magnetic components with respect to $\Sigma_t$ will be defined as $E_I = F_{I0}$ and $F_{IJ}$ respectively. In Fefferman-Graham form, the electric charge is then 
    \begin{align}
        q_e = \frac{1}{4\pi}\int_{\Sigma_{t, \infty}}\star F = \frac{1}{4\pi}\int_{\Sigma_{t, \infty}}E_1\mathrm{d}A.
    \end{align}
\end{definition}
Unlike sections \ref{sec:boundaryGeometry} and \ref{sec:examples}, $\mathcal{A}_a$ can no longer be chosen as 0. Instead, $\mathcal{A}_a$ is chosen so that $\nabla_a\Psi$ is the gravitino transformation in the gauged supergravity. Therefore,
\begin{align}
    \mathcal{A}_a^{(4)} &= - \frac{1}{4}F_{bc}\gamma^{bc}\gamma_a + \mathrm{i}A_aI = \frac{1}{2}E_I\gamma^0\gamma^I\gamma_a - \frac{1}{4}F_{IJ}\gamma^{IJ}\gamma_a + \mathrm{i}A_a I
    \label{eq:4DSupergravityAMu}
\end{align}
for the 4D theory and
\begin{align}
    \mathcal{A}_a^{(5)} &= - \frac{1}{4\sqrt{3}}F_{bc}\gamma^{bc}\gamma_a - \frac{1}{2\sqrt{3}}F_{ab}\gamma^b + \mathrm{i}\sqrt{3}A_aI \\
    &= - \frac{1}{2\sqrt{3}}E_I\gamma^{I}\gamma^0\gamma_a - \frac{1}{4\sqrt{3}}F_{IJ}\gamma^{IJ}\gamma_a - \frac{1}{2\sqrt{3}}F_{ab}\gamma^b + \mathrm{i}\sqrt{3}A_aI
    \label{eq:5DSupergravityAMu}
\end{align}
for the 5D theory \cite{Freedman1977, Gunaydin1984}.
\begin{lemma}
    \label{thm:supergravityPositiveEnergy}
    Assume $E_I$, $F_{IJ}$ and $A_I$ decay as $O(\mathrm{e}^{-(n-2)r})$, $O(\mathrm{e}^{-(n-1)r})$ and $O(\mathrm{e}^{-(n-1)r})$ respectively\footnote{These assumptions are there to ensure convergence of the integrals to follow.}. Then, In the 4D theory, 
    theorem \ref{thm:generalPositveEnergy} implies
    \begin{align}
        Q(\varepsilon) &= \frac{3}{2}\int_{\Sigma_{t, \infty}}\mathrm{e}^{-r}p_M\bar{\varepsilon}_k\gamma^M\varepsilon_k\sqrt{\mathrm{det}(f_{(0)\alpha\beta})}\,\mathrm{d}^{n-2}x - 8\pi q_e\bar{\varepsilon}_k\varepsilon_k \nonumber \\
        &\,\,\,\,\,\,\, - 2\int_{\Sigma_{t, \infty}}F_{23}\varepsilon^\dagger_k\gamma^1\gamma^2\gamma^3\varepsilon_k\mathrm{d}A + 2\mathrm{i}\int_{\Sigma_{t, \infty}}A_A\varepsilon^\dagger_k\gamma^1\gamma^A\varepsilon_k\mathrm{d}A 
        \label{eq:qBulk4DBPS} \\
        &\geq 0
    \end{align}
    while in the 5D theory it implies
    \begin{align}
        Q(\varepsilon) &= 2\int_{\Sigma_{t, \infty}}\mathrm{e}^{-r}p_M\bar{\varepsilon}_k\gamma^M\varepsilon_k\sqrt{\mathrm{det}(f_{(0)\alpha\beta})}\,\mathrm{d}^{n-2}x -4\pi\sqrt{3}q_e\bar{\varepsilon}_k\varepsilon_k \nonumber \\
        &\,\,\,\,\,\,\, - \frac{\sqrt{3}}{2}\int_{\Sigma_{t, \infty}}F_{AB}\varepsilon^\dagger_k\gamma^1\gamma^{AB}\varepsilon_k\mathrm{d}A + 2\mathrm{i}\sqrt{3}\int_{\Sigma_{t, \infty}}A_A\varepsilon^\dagger_k\gamma^1\gamma^A\varepsilon_k\mathrm{d}A 
        \label{eq:qBulk5DBPS} \\
        &\geq 0.
    \end{align}
\end{lemma}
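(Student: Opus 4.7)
The plan is to substitute the explicit supergravity expressions for $\mathcal{A}_A^{(4)}$ and $\mathcal{A}_A^{(5)}$ from equations \ref{eq:4DSupergravityAMu} and \ref{eq:5DSupergravityAMu} into the boundary expression for $Q(\varepsilon)$ provided by theorem \ref{thm:generalPositveEnergy}, and simplify the resulting gamma matrix products. The $p_M$-term carries over verbatim, so the entire computation is confined to the boundary integrand $\gamma^1\gamma^A\mathcal{A}_A + \mathcal{A}_A^\dagger\gamma^A\gamma^1$.

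First I would split each $\mathcal{A}_A$ cleanly into electric (proportional to $E_I = F_{I0}$), magnetic (proportional to $F_{IJ}$) and gauge-potential pieces, with the additional Chern-Simons-induced term $-\tfrac{1}{2\sqrt{3}}F_{Ab}\gamma^b$ appearing only in 5D. For each piece I would compute the hermitian conjugate using $(\gamma^0)^\dagger = \gamma^0$ and $(\gamma^I)^\dagger = -\gamma^I$, and then simplify the index-$A$ contractions using the Clifford summation identities
\begin{align*}
\gamma^A\gamma^1\gamma_A &= (n-2)\gamma^1, &\gamma^A\gamma^B\gamma_A &= (n-4)\gamma^B, \\
\gamma^A\gamma^{1B}\gamma_A &= -(n-4)\gamma^{1B}, &\gamma^A\gamma^{BC}\gamma_A &= (6-n)\gamma^{BC},
\end{align*}
valid for $B, C \in \{2,\cdots,n-1\}$. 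These follow from the Clifford relations by splitting each $A$-sum according to whether $A$ coincides with the other indices.

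In 4D the factor $(n-4) = 0$ produces dramatic simplifications: the transverse electric components $E_B$ and the radial-transverse magnetic components $F_{1B}$ are annihilated, leaving only the radial electric $E_1$ (which integrates to $4\pi q_e$) and the single pure-transverse magnetic component $F_{23}$. The gauge-potential piece collapses via $\gamma^1\gamma^A - \gamma^A\gamma^1 = 2\gamma^1\gamma^A$. To factor $q_e$ outside the electric integral I would invoke the constancy of $\bar{\varepsilon}_k\varepsilon_k$ on $\Sigma_{t,\infty}$, directly verifiable from theorem \ref{thm:crossSectionKilling} and the discussion following lemma \ref{thm:killingDecomposed} (indeed $\bar{\varepsilon}_k\varepsilon_k$ vanishes identically for $c = 0$ by the $P_1^\pm$ projection structure, and is a non-zero constant in the sphere case). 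In 5D the factor $(n-4) = 1$ does not annihilate anything, and the Chern-Simons-induced term must be carried alongside the main $F_{bc}\gamma^{bc}\gamma_A$ piece; various cancellations need to be tracked carefully --- in particular, the $b=0,1$ contributions of the Chern-Simons-induced term cancel between hermitian-conjugate partners, as do the $E_B$ and $F_{1B}$ contributions from the main piece, while the transverse-transverse magnetic contributions from both sources combine with relative weight $1:2$ (coefficients $-\tfrac{1}{2\sqrt{3}}$ and $-\tfrac{1}{\sqrt{3}}$) to produce the required prefactor $-\tfrac{\sqrt{3}}{2}$ in front of $F_{AB}\gamma^1\gamma^{AB}$.

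The main obstacle is purely algebraic bookkeeping --- systematic tracking of hermitian conjugates, antisymmetry in summed indices, and the partial cancellations specific to each dimension. Doing the 4D case first, where the $(n-4) = 0$ simplifications are stark, establishes a template; the 5D case is then a straightforward (if longer) extension with the Chern-Simons-induced term carried in parallel. The cancellation of $E_B$ and $F_{1B}$ between the $\gamma^1\gamma^A\mathcal{A}_A$ and $\mathcal{A}_A^\dagger\gamma^A\gamma^1$ partners is the most delicate step and needs to be verified term-by-term before assembling the final expressions.
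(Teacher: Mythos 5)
Your outline of the boundary-term computation is essentially correct and matches the paper's approach: substitute $\mathcal{A}_a^{(4)}$ and $\mathcal{A}_a^{(5)}$ into the $\gamma^1\gamma^A\mathcal{A}_A + \mathcal{A}_A^\dagger\gamma^A\gamma^1$ integrand from theorem \ref{thm:generalPositveEnergy}, simplify with Clifford contraction identities, observe the extra simplification at $n=4$, use constancy of $\bar{\varepsilon}_k\varepsilon_k$ to pull $q_e$ out of the electric integral, and track the Chern-Simons term in 5D. One minor caveat: the paper establishes constancy of $\bar{\varepsilon}_k\varepsilon_k$ directly from the Killing spinor equation via $\overbar{D}_a(\bar{\varepsilon}_k\varepsilon_k)=0$, which is cleaner than case-checking against theorem \ref{thm:crossSectionKilling}, but your route would work.

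However, there is a genuine gap: you never verify the hypotheses of theorem \ref{thm:generalPositveEnergy}, which the lemma's conclusion rests on entirely. Definition \ref{def:setup} requires (i) that $\gamma^{IJ}\mathcal{A}_J$ be hermitian, and (ii) that the matrix $\mathbb{M} = 4\pi T^{0a}\gamma_0\gamma_a + \gamma^{IJ}D_I\mathcal{A}_J + \tfrac{\mathrm{i}(n-2)}{2}(\gamma^I\mathcal{A}_I + \mathcal{A}_I^\dagger\gamma^I) - \mathcal{A}_I^\dagger\gamma^{IJ}\mathcal{A}_J$ be non-negative definite. Neither is automatic for the gauge-covariant $\mathcal{A}_a^{(4,5)}$, and (ii) is in fact the crux of the lemma: without $\mathbb{M}\geq 0$, the bulk identity in theorem \ref{thm:generalPositveEnergy} does not deliver $Q(\varepsilon)\geq 0$, which is part of what you are asked to prove. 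The bulk of the paper's proof is devoted exactly to this: after verifying hermiticity of $\gamma^{IJ}\mathcal{A}_J^{(4,5)}$, it computes each piece of $\mathbb{M}$ term by term, invokes the Einstein and Maxwell (and in 5D Chern-Simons) equations of motion to simplify $\gamma^{IJ}D_I\mathcal{A}_J$, and after a lengthy gamma-matrix calculation finds the remarkable cancellation $\mathbb{M}=0$ in both theories. In 5D this further requires a nontrivial choice: one must pick the irreducible Clifford representation with $\gamma^4 = +\gamma^0\gamma^1\gamma^2\gamma^3$ (the other sign gives a nonvanishing $\mathbb{M}$). Your proposal addresses only the easier half (the boundary algebra); the harder and indispensable half — establishing $\mathbb{M}=0$ so that the theorem applies and the inequality holds — is missing.
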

\begin{proof}
    To apply theorem \ref{thm:generalPositveEnergy}, it must first be checked that the assumptions in definition \ref{def:setup} hold. First, with the chosen $\mathcal{A}_a$,
    \begin{align}
        \gamma^{IJ}\mathcal{A}_J^{(4)} &= -E^I\gamma^0 - \frac{1}{2}F_{JK}\gamma^{IJK} + \mathrm{i}A_J\gamma^{IJ} \,\,\,\mathrm{and} \\
        \gamma^{IJ}\mathcal{A}_J^{(5)} &= -\frac{\sqrt{3}}{2}E^I\gamma^0 + \frac{\sqrt{3}}{4}F_{JK}\gamma^I\gamma^{JK} - \frac{\sqrt{3}}{2}F_{JK}\gamma^{IJ}\gamma^K + \mathrm{i}\sqrt{3}A_J\gamma^{IJ},
    \end{align}
    both of which are hermitian, as required. Next, with the present energy-momentum tensor,
    \begin{align}
        \mathbb{M} &= 4\pi T^{0a}\gamma_0\gamma_a + \gamma^{IJ}D_I\mathcal{A}_J + \frac{\mathrm{i}(n-2)}{2}(\gamma^I\mathcal{A}_I + \mathcal{A}_I^\dagger\gamma^I) - \mathcal{A}^\dagger_I\gamma^{IJ}\mathcal{A}_J \\
        &= \frac{1}{2}E^IE_II + \frac{1}{4}F^{IJ}F_{IJ}I - F_{IJ}E^J\gamma^0\gamma^I + \gamma^{IJ}D_I\mathcal{A}_J + \frac{\mathrm{i}(n-2)}{2}(\gamma^I\mathcal{A}_I + \mathcal{A}_I^\dagger\gamma^I) \nonumber \\
        &\,\,\,\,\,\,\, - \mathcal{A}^\dagger_I\gamma^{IJ}\mathcal{A}_J.
        \label{eq:mSupergravity}
    \end{align}
    Consider the 4D theory first. Using the equations of motion,
    \begin{align}
        \gamma^{IJ}D_I\mathcal{A}^{(4)}_J &= -D_I(E^I)\gamma^0 - \frac{1}{2}D_{[I}F_{JK]}\gamma^{IJK} + \mathrm{i}D_{[I}A_{J]}\gamma^{IJ} = 0 - 0 + \frac{\mathrm{i}}{2}F_{IJ}\gamma^{IJ}.
    \end{align}
    Next, one finds
    \begin{align}
        \gamma^I\mathcal{A}^{(4)}_I &= -\frac{1}{2}E_I\gamma^0\gamma^I - \frac{1}{4}F_{IJ}\gamma^{IJ} + \mathrm{i}A_I\gamma^I \,\,\,\mathrm{and\,\,therefore}
        \label{eq:gammaA4}\\
        \gamma^I\mathcal{A}^{(4)}_I + \mathcal{A}^{(4)\dagger}_I\gamma^I &= \gamma^I\mathcal{A}^{(4)}_I - (\gamma^I\mathcal{A}^{(4)}_I)^\dagger = -\frac{1}{2}F_{IJ}\gamma^{IJ}.
    \end{align}
    The most tedious to simplify is $\mathcal{A}^{(4)\dagger}_I\gamma^{IJ}\mathcal{A}_J^{(4)}$. After a somewhat lengthy calculation, albeit one that simply applies gamma matrix identities repeatedly, it can be shown
    \begin{align}
        \mathcal{A}^{(4)\dagger}_I\gamma^{IJ}\mathcal{A}_J^{(4)} &= \frac{1}{2}E^IE_II + E^IF_{IJ}\gamma^0\gamma^J + \frac{1}{4}F^{IJ}F_{IJ}I.
    \end{align}
    Substituting these results back into equation \ref{eq:mSupergravity} reveals $\mathbb{M} = 0$, which trivially satisfies all the required assumptions.

    Proceeding completely analogously, in the 5D theory the individual terms are
    \begin{align}
        \gamma^{IJ}D_I\mathcal{A}^{(5)}_J &= -\frac{1}{4}\varepsilon^{IJKL}F_{IJ}F_{KL}\gamma^0 + \frac{\mathrm{i}\sqrt{3}}{2}F_{IJ}\gamma^{IJ}, \\
        \gamma^I\mathcal{A}^{(5)}_I &= \frac{1}{2\sqrt{3}}E_I\gamma^I\gamma^0 - \frac{1}{2\sqrt{3}}F_{IJ}\gamma^{IJ} + \mathrm{i}\sqrt{3}A_I\gamma^I, 
        \label{eq:gammaA5}\\
        \gamma^I\mathcal{A}^{(5)}_I + \mathcal{A}_I^{(5)\dagger}\gamma^I &= -\frac{1}{\sqrt{3}}F_{IJ}\gamma^{IJ}, \\
        \mathcal{A}^{(5)\dagger}_I\gamma^{IJ}\mathcal{A}^{(5)}_J &= \frac{1}{2}E^IE_II + E^IF_{IJ}\gamma^0\gamma^J + \frac{1}{4}F_{IJ}F_{JK}\varepsilon^{IJKL}\gamma^1\gamma^2\gamma^3\gamma^4 + \frac{1}{4}F^{IJ}F_{IJ}I.
    \end{align}
    In 5D, there are two inequivalent, irreducible representations of the Clifford algebra; they have $\gamma^4 = \pm\gamma^0\gamma^1\gamma^2\gamma^3$ respectively. Choosing the representation in which $\gamma^4 = +\gamma^0\gamma^1\gamma^2\gamma^3$ once again leads to $\mathbb{M} = 0$. 

    The only remaining assumptions are on $\mathcal{A}$'s decay rate. These transfer to decay rates on the fields; they ensure all boundary integrals are convergent and are stronger than the decay rates required for results in appendix \ref{sec:invertDirac}.

    Having established that theorem \ref{thm:generalPositveEnergy} is valid in the present context, all that remains is to evaluate the $\mathcal{A}$ dependent boundary terms in the theorem. After some gamma matrix algebra, one finds
    \begin{align}
        \gamma^1\gamma^A\mathcal{A}^{(4)}_A &= -E_1\gamma^0 -F_{23}\gamma^1\gamma^2\gamma^3 + \mathrm{i}A_A\gamma^1\gamma^A \,\,\,\mathrm{and} \\
        \gamma^1\gamma^A\mathcal{A}^{(5)}_A &= -\frac{\sqrt{3}}{2}E_1\gamma^0 - \frac{\sqrt{3}}{4}F_{AB}\gamma^1\gamma^{AB} + \mathrm{i}\sqrt{3}A_A\gamma^1\gamma^A.
    \end{align}
    These are hermitian already, so $\gamma^1\gamma^A\mathcal{A}_A + \mathcal{A}_A^\dagger\gamma^A\gamma^1 = 2\gamma^1\gamma^A\mathcal{A}_A$. In theorem \ref{thm:generalPositveEnergy}, this matrix is inbetween $\varepsilon_k^\dagger$ and $\varepsilon_k$. In the case of the electric field term, that produces $E_1\bar{\varepsilon}_k\varepsilon_k$ as an integrand. However, the Killing spinor equation implies that $\overbar{D}_a(\bar{\varepsilon}_k\varepsilon_k) = 0$. Since $\bar{\varepsilon}_k\varepsilon_k$ is a scalar (and all derivatives act identically on a scalar), it must be that $\bar{\varepsilon}_k\varepsilon_k$ is constant. Hence, $\bar{\varepsilon}_k\varepsilon_k$ can be pulled out of the integral, leaving a term proportional to $\int_{\Sigma_{t, \infty}}E_1\mathrm{d}A = 4\pi q_e$ and thus the claimed result.
\end{proof}
\begin{corollary}
    \label{thm:magneticCancellation}
    If the extrinsic curvature, $K_{IJ}$, of $\Sigma_t$ is $o(\mathrm{e}^{-r})$ near $\Sigma_{t, \infty}$, then
    \begin{align}
        Q(\varepsilon) &= \frac{3}{2}\int_{\Sigma_{t, \infty}}\mathrm{e}^{-r}p_M\bar{\varepsilon}_k\gamma^M\varepsilon_k\sqrt{\mathrm{det}(f_{(0)\alpha\beta})}\,\mathrm{d}^{n-2}x - 8\pi q_e\bar{\varepsilon}_k\varepsilon_k \geq 0
    \end{align}
    in the 4D theory, while in the 5D theory 
    \begin{align}
        Q(\varepsilon) &= 2\int_{\Sigma_{t, \infty}}\mathrm{e}^{-r}p_M\bar{\varepsilon}_k\gamma^M\varepsilon_k\sqrt{\mathrm{det}(f_{(0)\alpha\beta})}\,\mathrm{d}^{n-2}x - 4\pi\sqrt{3}q_e\bar{\varepsilon}_k\varepsilon_k \geq 0.
    \end{align}
\end{corollary}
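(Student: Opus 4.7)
The plan is to show that the two ``magnetic'' boundary integrals in the $Q(\varepsilon)$ expressions of Lemma \ref{thm:supergravityPositiveEnergy} --- the one involving $F_{AB}$ and the one involving $A_A$ --- are the two halves of an integration by parts on the closed boundary $\Sigma_{t, \infty}$ and therefore sum to zero, leaving only the electric and purely geometric terms quoted in the corollary.

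First I would handle the 4D case. Let $B = \varepsilon_k^\dagger\gamma^1\gamma^2\gamma^3\varepsilon_k = \bar{\varepsilon}_k\gamma^{0123}\varepsilon_k$, a scalar on the closed 2-dimensional boundary. Writing $F = \mathrm{d}A$ and applying Stokes gives
\begin{align*}
-2\int_{\Sigma_{t,\infty}}F_{23}B\,\mathrm{d}A = 2\int_{\Sigma_{t,\infty}}\mathrm{d}B\wedge A.
\end{align*}
Since $\gamma^{0123}$ anti-commutes with every $\gamma^a$, the background Killing spinor equation together with its Dirac conjugate yields $\bar{D}_A B = \frac{\mathrm{i}}{2}\bar{\varepsilon}_k[\gamma_A, \gamma^{0123}]\varepsilon_k = \mathrm{i}\bar{\varepsilon}_k\gamma_A\gamma^{0123}\varepsilon_k$. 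A short Clifford manipulation rewrites this as $\bar{D}_A B = -\mathrm{i}\epsilon_A{}^{C}\varepsilon_k^\dagger\gamma^1\gamma_C\varepsilon_k$, and substituting back gives $-2\mathrm{i}\int A_A \varepsilon_k^\dagger\gamma^1\gamma^A\varepsilon_k\,\mathrm{d}A$, which exactly cancels the gauge term in Lemma \ref{thm:supergravityPositiveEnergy}. The 5D case is the direct analogue on the closed 3-manifold boundary: Stokes applied to the 2-form $F$ paired against the rank-2 bilinear $\varepsilon_k^\dagger\gamma^1\gamma^{AB}\varepsilon_k$ produces a derivative of this bilinear, which a $[\gamma_A, \gamma^1\gamma^{BC}]$-type identity collapses to the required multiple of $\varepsilon_k^\dagger\gamma^1\gamma^A\varepsilon_k$, again with the precise coefficient needed to annihilate the $A_A$ boundary term.

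The main obstacle is justifying that the derivative appearing after the integration by parts can legitimately be replaced by $\bar{D}_A$ in the $r\to\infty$ limit. Stokes is applied with the intrinsic exterior derivative on $\Sigma_{t, \infty}$, while the Killing spinor equation lives in the bulk with the ambient background connection. The two differ by (i) contributions from $\omega - \bar{\omega}$, controlled by equations \ref{eq:wNPMDecay}--\ref{eq:w1NMDecay} and the assumed decay of $\mathcal{A}_a$, and (ii) terms proportional to the extrinsic curvature $K_{IJ}$ of $\Sigma_t$ in $M$, which arise when the ambient covariant derivative is split into parts tangential to $\Sigma_t$ and parts along the $\partial_t$-normal direction. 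The hypothesis $K_{IJ} = o(\mathrm{e}^{-r})$ is exactly what is needed to make these extrinsic-curvature corrections decay faster than the compensating $O(\mathrm{e}^{(n-2)r})$ growth of the area form, so that they vanish in the limit and the cancellation demonstrated above is unspoilt.
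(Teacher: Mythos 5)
Your integration-by-parts strategy is sound and actually achieves the cancellation by a genuinely different route from the paper, so let me compare them before flagging where your commentary is slightly off.

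The paper writes $F_{JK}=D^{(\sigma)}_{J}A_{K}-D^{(\sigma)}_{K}A_{J}$ with $D^{(\sigma)}$ the Levi-Civita connection of $\Sigma_t$, integrates by parts, and is then forced to evaluate $D^{(\sigma)}_{J}\varepsilon_k$. That operator is related to the spacetime $D_J\varepsilon_k$ precisely through the extrinsic curvature term $\tfrac{1}{2}K_{JL}\gamma^L\gamma^0\varepsilon_k$, and the hypothesis $K_{IJ}=o(\mathrm{e}^{-r})$ is what kills that term so that $D^{(\sigma)}_J\varepsilon_k\to-\tfrac{\mathrm{i}}{2}\gamma_J\varepsilon_k$. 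Your route instead differentiates the \emph{scalar} $B=\bar{\varepsilon}_k\gamma^{0123}\varepsilon_k$ in 4D; because $\gamma^{0123}$ commutes with every $\gamma^{ab}$, the $\bar{\omega}$ terms cancel identically when the Killing spinor equation is substituted, and the coordinate derivative of a scalar involves no induced connection whatsoever. The same happens in 5D once one dualises: $\gamma^1\gamma^{AB}=\epsilon^{ABC}\gamma^0\gamma_C$ converts the rank-two bilinear into the \emph{one-form} $V_a=\bar{\varepsilon}_k\gamma_a\varepsilon_k$, whose exterior derivative $\mathrm{d}V$ is connection-independent and given directly by $2\mathrm{i}\,\bar{\varepsilon}_k\gamma_{ab}\varepsilon_k$ from the Killing spinor equation. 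Your calculation of the final coefficients matches the paper's ($-2\mathrm{i}A_A\varepsilon_k^\dagger\gamma^1\gamma^A\varepsilon_k$ against $+2\mathrm{i}A_A\varepsilon_k^\dagger\gamma^1\gamma^A\varepsilon_k$, and the analogous $\sqrt{3}$-weighted version), so the cancellation is correct.

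Two remarks. First, your closing paragraph misdiagnoses where the extrinsic-curvature hypothesis enters. In your calculation it in fact enters \emph{nowhere}: $\mathrm{d}B$ for a scalar and $\mathrm{d}V$ for a one-form are both connection-independent, so there is no ``ambient covariant derivative split tangent/normal'' step and consequently no $K_{IJ}$ correction to control. The $\omega-\bar{\omega}$ remainders you invoke also never appear for these objects. What you really need is only that $g\to\bar{g}$ sufficiently fast for $F_{AB}$ and $A_A$ (as $g$-vielbein components) to coincide with their background-frame counterparts in the limit, that $\bar{e}^0{}_{\alpha}=\bar{e}^1{}_{\alpha}=0$ so only spatial $\gamma_A$ survive in the Killing spinor equation along boundary directions, and the exactness $F=\mathrm{d}A$ near infinity (which the paper is also assuming). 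The paper's proof, by contrast, really does route through $D^{(\sigma)}$ on $\Sigma_t$, so for \emph{that} proof the extrinsic-curvature hypothesis is doing work. You should be explicit that your argument bypasses it, rather than attributing to it a role it does not play in your own computation. Second, your 5D sketch should be tightened: ``a derivative of this bilinear'' suggests differentiating the rank-two tensor $\varepsilon_k^\dagger\gamma^1\gamma^{AB}\varepsilon_k$, whose divergence would pick up spin-connection terms that do not cancel in any obvious way. The clean argument is the dualisation to $V_a$ followed by $\mathrm{d}V$, as above, which is what your ``$[\gamma_A,\gamma^1\gamma^{BC}]$-type identity'' must ultimately amount to.
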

Note that in this corollary, the decay rates enforced on $F_{IJ}$ and $A_I$ can be relaxed to merely those required to ensure $\nabla_I\varepsilon_k \in L^2$, namely an $o(\mathrm{e}^{-(n-1)r/2})$ decay.
\begin{proof}
    The objective here is to show the magnetic and gauge field integrals in the theorem cancel. By construction, the $t$ coordinate is chosen such that the boundary has $f_{(0)0\alpha} = 0$. Therefore, to leading order, $e\indices{_I^\mu} = \delta\indices{^\mu_i}e^{(\sigma)i}_I$, where $\sigma$ is the metric on $\Sigma_t$. Thus,
    \begin{align}
        F_{JK} \to e^{(\sigma)j}_Je^{(\sigma)k}_KF_{jk} = e^{(\sigma)j}_Je^{(\sigma)k}_K(\partial_jA_k - \partial_kA_j) \to D_J^{(\sigma)}A_K - D_K^{(\sigma)}A_J.
    \end{align}
    The decay conditions assumed mean only the leading order contributions survive the integral. Let $r^I \equiv \partial_r$ be the normal to constant $r$ surfaces. Then,
    \begin{align}
        \frac{1}{2}\int_{\Sigma_{t, \infty}}r_IF_{JK}\varepsilon_k^\dagger\gamma^{IJK}\varepsilon_k\mathrm{d}A &= \int_{\Sigma_{t, \infty}}r_ID^{(\sigma)}_J(A_K)\varepsilon_k^\dagger\gamma^{IJK}\varepsilon_k\mathrm{d}A \\
        &= \int_{\Sigma_{t, \infty}}r_ID^{(\sigma)}_J(A_K\varepsilon_k^\dagger\gamma^{IJK}\varepsilon_k)\mathrm{d}A \nonumber \\
        &\,\,\,\,\,\, - \int_{\Sigma_{t, \infty}}r_IA_K\left(D_J^{(\sigma)}(\varepsilon_k)^\dagger\gamma^{IJK}\varepsilon_k + \varepsilon_k^\dagger\gamma^{IJK}D_J^{(\sigma)}\varepsilon_k\right)\mathrm{d}A.
        \label{eq:magneticByParts}
    \end{align}
    The first term in equation \ref{eq:magneticByParts} vanishes by applying Stokes' theorem in the same way as lemma \ref{thm:antisymmetricDerivative}. As for the other two terms, the Levi-Civita connection of $\Sigma_t$ and $M$ are related by 
    \begin{align}
        D_I^{(\sigma)}\varepsilon_k = D_I\varepsilon_k + \frac{1}{2}K_{IJ}\gamma^J\gamma^0\varepsilon_k
    \end{align}
    when acting on spinors. Since $K_{IJ}$ is assumed to decay quicker than $O(\mathrm{e}^{-r})$, $D_I^{(\sigma)}\varepsilon_k \to D_I\varepsilon_k$ to leading order. Then, since the metric also approaches the background to leading order, $D_I^{(\sigma)}\varepsilon_k \to -\frac{\mathrm{i}}{2}\gamma_I\varepsilon_k$. Hence, equation \ref{eq:magneticByParts} reduces to
    \begin{align}
        \frac{1}{2}\int_{\Sigma_{t, \infty}}r_IF_{JK}\varepsilon_k^\dagger\gamma^{IJK}\varepsilon_k\mathrm{d}A &= \frac{\mathrm{i}}{2}\int_{\Sigma_{t, \infty}}r_IA_K\left(\varepsilon_k^\dagger\gamma_J\gamma^{IJK}\varepsilon_k + \varepsilon_k^\dagger\gamma^{IJK}\gamma_J\varepsilon_k\right)\mathrm{d}A \\
        &= \mathrm{i}(n-3)\int_{\Sigma_{t, \infty}}A_A\varepsilon_k^\dagger\gamma^1\gamma^A\varepsilon_k\mathrm{d}A,
    \end{align}
    which means the magnetic and gauge field integrals do indeed cancel.
\end{proof}
The most subtle point in the previous proofs is the implicit assumption that $F = \mathrm{d}A$ everywhere in an open neighbourhood of $\Sigma_{t, \infty}$ in corollary \ref{thm:magneticCancellation}. If $\Sigma_{t, \infty}$ has a topology where $H^2_{\mathrm{dR}}$ is trivial, then this assumption is fine. However, there are many examples - including the most standard example of the $S^2$ cross-section - where this is not true. As such, it becomes impossible to incorporate magnetic charge into the discussion.

Even if $F$ wasn't assumed exact, magnetic charge doesn't arise as naturally from the equations as it does when $\Lambda = 0$. In the 4D Einstein-Maxwell theory \cite{Gibbons1982}, one still gets a term,
\begin{align}
    - 2\int_{\Sigma_{t, \infty}}F_{23}\varepsilon^\dagger_k\gamma^1\gamma^2\gamma^3\varepsilon_k\mathrm{d}A,
    \label{eq:magneticDecay}
\end{align}
in the analogue of $Q(\varepsilon)$. However, in that case $\varepsilon_k$ is just a constant, meaning $\varepsilon^\dagger_k\gamma^1\gamma^2\gamma^3\varepsilon_k$ can be pulled out of the integral, leaving the standard magnetic charge integral. But, in the present situation, $\varepsilon^\dagger_k\gamma^1\gamma^2\gamma^3\varepsilon_k$ is non-constant. Furthermore, it grows as $\mathrm{e}^r$. Hence, a convergent integral requires $F_{23}$ to decay as $O(\mathrm{e}^{-3r})$, which is faster than the $O(\mathrm{e}^{-2r})$ decay required to get non-zero magnetic charge. To some extent, this reflects the breakdown in electric-magnetic duality when $\Lambda \neq 0$.

Subtleties of $F$'s exactness also arise tacitly when solving the Dirac equation, $\gamma^I\nabla_I\varepsilon = 0$. The connection, $\nabla_a$, is constructed to be gauge covariant; in particular, under $A_a \to A_a + D_a\lambda$, if $\varepsilon \to \mathrm{e}^{\mathrm{i}\lambda}\varepsilon$, then $\nabla_a\varepsilon \to \mathrm{e}^{\mathrm{i}\lambda}\nabla_a\varepsilon$. However, on a manifold, changing from one coordinate patch to another also transforms $A_a$ in a formally identical way. Thus, $\varepsilon$ must also transform by a phase to keep $\nabla_a$ covariant. Hence, in principle one could get merely a $\mathrm{spin}^c$ structure, rather than a spin structure. This appears to be incompatible with Witten's method though because $\gamma^I\nabla_I\varepsilon = 0$ is solved subject to the boundary condition, $\varepsilon \to \varepsilon_k$, in which $\varepsilon_k$ is a true spinor, not a section of a non-trivial $\mathrm{spin}^c$ bundle. Therefore, imposing the required boundary condition breaks the gauge covariance. If $F$ were exact though, the issue is avoided. It remains open whether Witten's method can be adjusted in any way to accommodate $\mathrm{spin}^c$ structures.

The only previous work on classical BPS inequalities in these two gauged supergravity theories appears to be in \cite{London1995, Kostelecky1996, Wang2015b, Nozawa2014b}. \cite{London1995} doesn't consider magnetic fields in their positive energy theorem at all, so the issues discussed don't arise. 

Meanwhile, the result found here disagrees with \cite{Kostelecky1996}, who explicitly have magnetic charge in their equation 27. However, as explained in \cite{McSharry2025}, this is likely due to some error relating to the fact their analogue of $\nabla_a$ - see their equation 24 - omits the $\mathrm{i}A_aI$ term and is therefore not gauge covariant. Furthermore, the present results are consistent with subsequent work in \cite{Caldarelli1999, Hristov2011}, where the BPS limit is explicitly required to have zero magnetic charge. Another subtlety explained by \cite{Hristov2011} is that the 4D minimal supergravity has two distinct vacuum states, one of which has a non-zero magnetic charge determined by the cosmological constant. 

\cite{Wang2015b} uses the same connection as \cite{Kostelecky1996} and studies it in much greater detail. Despite using the same connection, their main result - their theorem 1.1 - differs from the main result of \cite{Kostelecky1996} - their equation 27. Since $\mathrm{i}A_aI$ is omitted, the analogue of $\mathbb{M}$ found in \cite{Wang2015b} is only non-negative definite when a modified dominant energy condition holds - their equation 1.2. However, their condition is very unnatural; for example, it can be seen from the constraint equations that in a purely electrovacuum spacetime with non-zero magnetic field, their modified dominant energy condition never holds. 

The closest paper to the present work is \cite{Nozawa2014b} - see their section 3.1 in particular. They have the same connection as used here and make similar observations about the decay rates in equation \ref{eq:magneticDecay}. However, their main result - their equation 22 - does not include the fourth term in lemma \ref{thm:supergravityPositiveEnergy} because they relied on \cite{Kostelecky1996} to get their result. A heuristic argument is given describing corrections to \cite{Kostelecky1996}, leading to the same conclusions as corollary \ref{thm:magneticCancellation} (for $S^2$ cross-section topology) and equation \ref{eq:4dEQJ} below, but the analytic steps required - including the Dirac equation analysis in appendix \ref{sec:invertDirac} - are not given in \cite{Nozawa2014b}.

Anyhow, bearing in mind all these subtleties, since the boundary geometries considered in section \ref{sec:examples} are time symmetric, they all satisfy the extra assumptions in corollary \ref{thm:magneticCancellation}. Hence, borrowing from the work there, the following results hold.
\begin{theorem}
    In an asymptotically AdS spacetime (i.e. with round sphere cross-section),
    \begin{align}
        &EI - \mathrm{i}P_I\gamma^I + \frac{\mathrm{i}}{2}J_{IJ}\gamma^0\gamma^{IJ} + K_I\gamma^0\gamma^I - q_e\gamma^0
        \label{eq:4dAdSBPSInequality}
    \end{align}
    is a non-negative definite matrix in the 4D theory and 
    \begin{align}
        &EI - \mathrm{i}P_I\gamma^I + \frac{\mathrm{i}}{2}J_{IJ}\gamma^0\gamma^{IJ} + K_I\gamma^0\gamma^I - \frac{\sqrt{3}}{2}q_e\gamma^0
        \label{eq:5dAdSBPSInequality}
    \end{align}
    is a non-negative definite matrix in the 5D theory.
    \label{thm:adsBPS}
\end{theorem}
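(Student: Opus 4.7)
The plan is to combine corollary \ref{thm:magneticCancellation} with the sphere-cross-section analysis carried out in theorem \ref{thm:adsPositiveEnergy}. Since the AdS background is time-symmetric, the extrinsic-curvature hypothesis in corollary \ref{thm:magneticCancellation} is automatic, so one can apply that corollary directly and obtain a $Q(\varepsilon)$ composed of two pieces: the purely gravitational boundary integral $\frac{n-1}{2}\int \mathrm{e}^{-r}p_M\bar{\varepsilon}_k\gamma^M\varepsilon_k\sqrt{\mathrm{det}(f_{(0)\alpha\beta})}\,\mathrm{d}^{n-2}x$ and a single electric-charge term proportional to $q_e\bar{\varepsilon}_k\varepsilon_k$. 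The overall strategy is then to recast both pieces as the same quadratic form $\varepsilon_0^\dagger\mathrm{e}^{-\mathrm{i}\gamma^0t/2}(\cdot)\mathrm{e}^{\mathrm{i}\gamma^0t/2}\varepsilon_0$ in the constant Poincar\'e-ball spinor $\varepsilon_0$ of equation \ref{eq:diskKilling}, and to read off that the sandwiched matrix must be non-negative definite.

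The first step is to observe that the gravitational boundary integral is formally identical to the one appearing in theorem \ref{thm:adsPositiveEnergy}. Its proof can therefore be copied verbatim: one uses the Poincar\'e-ball Killing spinor $\varepsilon_k^\prime$, applies the local Lorentz transformation of equation \ref{eq:lorentzTransformation} to convert to the Fefferman-Graham frame, identifies the boundary limits with definition \ref{def:momenta}, and concludes that this piece equals $8\pi\varepsilon_0^\dagger\mathrm{e}^{-\mathrm{i}\gamma^0t/2}\mathcal{M}_0\mathrm{e}^{\mathrm{i}\gamma^0t/2}\varepsilon_0$ with
\begin{align}
    \mathcal{M}_0 = EI - \mathrm{i}P_I\gamma^I + \frac{\mathrm{i}}{2}J_{IJ}\gamma^0\gamma^{IJ} + K_I\gamma^0\gamma^I.
\end{align}

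The second step is to write the charge term in the same sandwich form. From the background Killing spinor equation, $\overbar{D}_a(\bar{\varepsilon}_k\varepsilon_k) = 0$, so $\bar{\varepsilon}_k\varepsilon_k$ is a spacetime constant; evaluating $\varepsilon_k^\prime$ at $\rho = 0$, $t = 0$ gives $\bar{\varepsilon}_k\varepsilon_k = \bar{\varepsilon}_0\varepsilon_0 = \varepsilon_0^\dagger\gamma^0\varepsilon_0$. Because $\gamma^0$ commutes with $\mathrm{e}^{\mathrm{i}\gamma^0t/2}$, this is also $\varepsilon_0^\dagger\mathrm{e}^{-\mathrm{i}\gamma^0t/2}\gamma^0\mathrm{e}^{\mathrm{i}\gamma^0t/2}\varepsilon_0$. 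Substituting the prefactors from corollary \ref{thm:magneticCancellation}, the 4D charge term becomes $8\pi\varepsilon_0^\dagger\mathrm{e}^{-\mathrm{i}\gamma^0t/2}(-q_e\gamma^0)\mathrm{e}^{\mathrm{i}\gamma^0t/2}\varepsilon_0$ and the 5D one $8\pi\varepsilon_0^\dagger\mathrm{e}^{-\mathrm{i}\gamma^0t/2}\bigl(-\tfrac{\sqrt{3}}{2}q_e\gamma^0\bigr)\mathrm{e}^{\mathrm{i}\gamma^0t/2}\varepsilon_0$, exactly matching the additional entries in equations \ref{eq:4dAdSBPSInequality} and \ref{eq:5dAdSBPSInequality}.

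Combining the two steps, $Q(\varepsilon)$ equals $8\pi$ times the sandwich of the full matrix claimed in the theorem. Since $\varepsilon_0$ is an arbitrary constant spinor and $\mathrm{e}^{\mathrm{i}\gamma^0t/2}$ is a constant unitary matrix, non-negativity of $Q(\varepsilon)$ translates directly into non-negative definiteness of the matrix, completing the proof. There is really no substantial obstacle once corollary \ref{thm:magneticCancellation} and theorem \ref{thm:adsPositiveEnergy} are in hand; the only point requiring care is the constancy of $\bar{\varepsilon}_k\varepsilon_k$, without which the charge term would not cleanly factor into a quadratic form in $\varepsilon_0$ and the matrix inequality could not be extracted.
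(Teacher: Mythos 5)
Your proposal is correct and follows essentially the same route as the paper: apply corollary \ref{thm:magneticCancellation}, reuse the calculation from theorem \ref{thm:adsPositiveEnergy} for the gravitational piece, convert $\bar{\varepsilon}_k\varepsilon_k$ into $\varepsilon_0^\dagger\gamma^0\varepsilon_0$ (noting $\gamma^0$ commutes with $\mathrm{e}^{\mathrm{i}\gamma^0 t/2}$), and read off the matrix inequality from arbitrariness of $\varepsilon_0$.
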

\begin{proof}
    With the $\varepsilon_k$ chosen in section \ref{sec:sphere}, one finds $\bar{\varepsilon}_k\varepsilon_k = \varepsilon_0^\dagger\gamma^0\varepsilon_0$. Meanwhile, the first term in corollary \ref{thm:magneticCancellation} was already calculated in theorem \ref{thm:adsPositiveEnergy}. Borrowing from the calculation there, corollary \ref{thm:magneticCancellation} says
    \begin{align}
        0 &\leq 8\pi\varepsilon_0^\dagger\mathrm{e}^{-\mathrm{i}\gamma^0t/2}\bigg(EI - \mathrm{i}P_I\gamma^I + K_I\gamma^0\gamma^I + \frac{\mathrm{i}}{2}J_{IJ}\gamma^0\gamma^{IJ} - q_e\gamma^0\bigg)\mathrm{e}^{\mathrm{i}\gamma^0t/2}\varepsilon_0 \,\,\,\mathrm{in\,\,4D\,\,and} \\
        0 &\leq 8\pi\varepsilon_0^\dagger\mathrm{e}^{-\mathrm{i}\gamma^0t/2}\bigg(EI - \mathrm{i}P_I\gamma^I + K_I\gamma^0\gamma^I + \frac{\mathrm{i}}{2}J_{IJ}\gamma^0\gamma^{IJ} - \frac{\sqrt{3}}{2}q_e\gamma^0\bigg)\mathrm{e}^{\mathrm{i}\gamma^0t/2}\varepsilon_0 \,\,\,\mathrm{in\,\,5D.}
    \end{align}
    Since $t$ is constant on $\Sigma_{t, \infty}$ and $\varepsilon_0$ is an arbitrary constant spinor, the conclusion follows.
\end{proof}
Like theorem \ref{thm:adsPositiveEnergy}, the matrix in theorem \ref{thm:adsBPS} doesn't have closed form eigenvalues in general. Again, more progress can be made in specific cases. For example, if the $n = 4$ and $K_I = P_I = 0$, the eigenvalues are $E \pm q_e \pm |J|$ and $E \pm q_e \mp |J|$, leading to the familar BPS inequality,
\begin{align}
    E \geq |q_e| + |J|.
    \label{eq:4dEQJ}
\end{align}
Likewise, suppose $n = 5$ and $K_I = P_I = 0$. Like in the derivation of equation \ref{eq:5dVacuumBPS}, suppose the independent rotations are in the 1-2 and 3-4 planes. Then, from the eigenvalues of \newline $EI + \mathrm{i}J_1\gamma^0\gamma^1\gamma^2 + \mathrm{i}J_2\gamma^0\gamma^3\gamma^4 - \frac{\sqrt{3}}{2}q_e\gamma^0$, it follows that
\begin{align}
    E - \frac{\sqrt{3}}{2}q_e \geq |J_1 + J_2| \,\,\,\mathrm{and} \,\,\, E + \frac{\sqrt{3}}{2}q_e \geq |J_1 - J_2|.
    \label{eq:5dBPS}
\end{align}
The Chern-Simons term means the equations of motion are not invariant under $F \to -F$ in the 5D theory and this example illustrates that one must in fact keep track of the relative signs between the charge and angular momentum. Inequalities \ref{eq:5dBPS} agree with the BPS relations derived in section 3 of \cite{Cvetic2005} by assuming the supersymmetry algebra\footnote{Also note that inequalities \ref{eq:5dBPS} are not equivalent to the $E \geq |J_1| + |J_2| + \frac{\sqrt{3}}{2}|q_e|$ that \cite{Gutowski2004} claim results from modifying the results of \cite{London1995}; in fact, the intended modification would produce exactly inequalities \ref{eq:5dBPS}. Moreover, $E \geq |J_1| + |J_2| + \frac{\sqrt{3}}{2}|q_e|$ can only be concluded from a matrix with eight eigenvalues (to cover all possible combinations of $\pm$), which is impossible to achieve using the $4\times 4$ gamma matrices used in Witten's method when $n = 5$.}. Furthermore, these inequalities are saturated by the supersymetric solutions in \cite{Gutowski2004, Chong2005} and \cite{Klemm2001} respectively.

For a different type of example, suppose $n = 4$, $K_I = 0$ and $J_{IJ} = 0$; then the eigenvalues are $E \pm \sqrt{P_IP^I + q_e^2}$, which implies $\sqrt{E^2 - P^2} \geq |q_e|$.
\begin{theorem}
    \label{thm:lensBPS}
    For spacetimes asymptotically AdS with $L(p, 1)$ cross-section,
    \begin{align}
        E &\geq -\frac{\sqrt{3}}{2}q_e + \sqrt{J_2^2 + J_3^2 + J_4^2}
        \label{eq:5dLensInequality}
    \end{align}
\end{theorem}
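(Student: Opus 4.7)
The plan is to mimic the proof of theorem \ref{thm:lensPositiveEnergy} but start from corollary \ref{thm:magneticCancellation} in the 5D theory rather than from theorem \ref{thm:crossSectionPositiveEnergy3}. Because the boundary geometry is still the time-symmetric $c = 1$ cross-section $L(p, 1)$, the assumption $K_{IJ} = o(\mathrm{e}^{-r})$ is automatic and corollary \ref{thm:magneticCancellation} applies, giving
\begin{align*}
    0 \leq Q(\varepsilon) = 2\int_{\Sigma_{t, \infty}} \mathrm{e}^{-r}p_M\bar{\varepsilon}_k\gamma^M\varepsilon_k\sqrt{\mathrm{det}(f_{(0)\alpha\beta})}\,\mathrm{d}^{3}x - 4\pi\sqrt{3}\,q_e\,\bar{\varepsilon}_k\varepsilon_k.
\end{align*}
The $\varepsilon_k$ to use is the $c = 1$ background Killing spinor of theorem \ref{thm:crossSectionKilling}, assembled via lemma \ref{thm:killingDecomposed} from the lens-space Killing spinors of lemma \ref{thm:lensKillingSpinors}, namely $\hat{\varepsilon}_h^{(+)} = 0$ and $\hat{\varepsilon}_h^{(-)} = \mathrm{e}^{-\mathrm{i}\theta\sigma_2/4}\mathrm{e}^{-\mathrm{i}(\phi_1 - \phi_2)\sigma_1/2}\hat{\varepsilon}_0$ for an arbitrary constant spinor $\hat{\varepsilon}_0$.

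For the geometric integral I would reuse the calculation in the proof of theorem \ref{thm:lensPositiveEnergy} verbatim: since $\hat{\varepsilon}_h^{(+)} = 0$ all cross-terms vanish, and after rewriting $\hat{\varepsilon}_h^{(-)\dagger}\hat{\gamma}^A\hat{\varepsilon}_h^{(-)}$ in terms of the Killing vectors of lemma \ref{thm:lensKillingVectors}, the integral reduces to $4\pi\,\hat{\varepsilon}_0^\dagger(EI + J_2\sigma_1 + J_3\sigma_2 + J_4\sigma_3)\hat{\varepsilon}_0$. For the new charge term I would first note that $\bar{\varepsilon}_k\varepsilon_k$ is a scalar with $\overbar{D}_a(\bar{\varepsilon}_k\varepsilon_k) = 0$, hence constant on $M$, and then compute this constant explicitly in the block representation of equation \ref{eq:cliffordBlocks}, obtaining $\bar{\varepsilon}_k\varepsilon_k = \frac{1}{2}(|\hat{\varepsilon}_h^{(+)}|^2 - |\hat{\varepsilon}_h^{(-)}|^2) = -\frac{1}{2}\hat{\varepsilon}_0^\dagger\hat{\varepsilon}_0$ for the lens. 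Consequently $-4\pi\sqrt{3}\,q_e\,\bar{\varepsilon}_k\varepsilon_k = 2\pi\sqrt{3}\,q_e\,\hat{\varepsilon}_0^\dagger\hat{\varepsilon}_0$.

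Combining the two contributions,
\begin{align*}
    0 \leq 4\pi\,\hat{\varepsilon}_0^\dagger\left[\left(E + \frac{\sqrt{3}}{2}q_e\right)I + J_2\sigma_1 + J_3\sigma_2 + J_4\sigma_3\right]\hat{\varepsilon}_0
\end{align*}
for every constant spinor $\hat{\varepsilon}_0$, so the bracketed matrix is non-negative definite. Its eigenvalues are $(E + \frac{\sqrt{3}}{2}q_e) \pm \sqrt{J_2^2 + J_3^2 + J_4^2}$, and non-negativity of the smaller one is exactly the claimed inequality. The only genuinely subtle step is the evaluation of $\bar{\varepsilon}_k\varepsilon_k$: because $\gamma^0\gamma^1 = -\gamma^1\gamma^0$, the projectors $P_1^\pm$ fail to commute with the $t$-dependent factors $\mathrm{e}^{\pm\mathrm{i}\gamma^0 t/2}$ appearing in $\varepsilon_k$, so a naive ``push-through'' computation produces a spurious $t$-dependence that contradicts the constancy guaranteed by $\overbar{D}_a(\bar{\varepsilon}_k\varepsilon_k)=0$. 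Organising the computation in the block form of equation \ref{eq:cliffordBlocks} (or equivalently evaluating at $t = 0$ and invoking constancy) is the cleanest way to obtain the $t$-independent answer used above.
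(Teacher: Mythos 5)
Your proof is correct and takes essentially the same approach as the paper: start from Corollary \ref{thm:magneticCancellation} in the 5D theory, reuse the lens-space Killing spinor and the geometric integral from Theorem \ref{thm:lensPositiveEnergy}, evaluate $\bar{\varepsilon}_k\varepsilon_k$, and read off the eigenvalues of the resulting $2\times 2$ matrix. One minor point: the paper's text states $\bar{\varepsilon}_k\varepsilon_k = -\hat{\varepsilon}_0\hat{\varepsilon}_0$, whereas your careful block-form computation gives $\bar{\varepsilon}_k\varepsilon_k = -\tfrac{1}{2}\hat{\varepsilon}_0^\dagger\hat{\varepsilon}_0$; your value is the one actually consistent with the paper's own final display (which has coefficient $\tfrac{\sqrt{3}}{2}q_e$), so the paper's intermediate statement appears to have a typo that you correctly avoided, and your caution about the non-commutation of $P_1^\pm$ with $\mathrm{e}^{\pm\mathrm{i}\gamma^0 t/2}$ is well taken.
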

\begin{proof}
    $\varepsilon_k$ can be chosen identically to section \ref{sec:lens}, namely
    \begin{align}
        \varepsilon_k &= \mathrm{e}^{r/2}P_1^-\left(\mathrm{e}^{\mathrm{i}\gamma^0t/2} - \mathrm{i}\mathrm{e}^{-\mathrm{i}\gamma^0t/2}\right)\varepsilon_h + \frac{1}{2}\mathrm{e}^{-r/2}P_1^+\left(\mathrm{e}^{\mathrm{i}\gamma^0t/2} + \mathrm{i}\mathrm{e}^{-\mathrm{i}\gamma^0t/2}\right)\varepsilon_h, \\
        \mathrm{with} \,\, \varepsilon_h &= \frac{1}{2}\begin{bmatrix}
            \hat{\varepsilon}_h^{(-)} \\
            - \hat{\varepsilon}_h^{(-)}
        \end{bmatrix} \,\,\, \mathrm{and} \,\,\,\hat{\varepsilon}_h^{(-)} = \mathrm{e}^{-\mathrm{i}\theta\sigma_2/4}\mathrm{e}^{-\mathrm{i}(\phi_1 - \phi_2)\sigma_1/2}\hat{\varepsilon}_0.
    \end{align}
    Then, from the proof of theorem \ref{thm:lensPositiveEnergy}, it follows the first term in corollary \ref{thm:magneticCancellation} is
    \begin{align}
        4\pi\hat{\varepsilon}_0^\dagger(EI + J_2\sigma_1 + J_3\sigma_2 + J_4\sigma_3)\hat{\varepsilon}_0.
    \end{align}
    Meanwhile, direct evaluation shows $\overline{\varepsilon}_k\varepsilon_k = -\hat{\varepsilon}_0\hat{\varepsilon}_0$ for the present Killing spinor. Hence, corollary \ref{thm:magneticCancellation} reduces to saying
    \begin{align}
        0 \leq 4\pi\hat{\varepsilon}_0^\dagger\left(EI + J_2\sigma_1 + J_3\sigma_2 + J_4\sigma_3 + \frac{\sqrt{3}}{2}q_eI\right)\hat{\varepsilon}_0.
    \end{align}
    As in theorem \ref{thm:lensPositiveEnergy}, the eigenvalues of the matrix in between $\hat{\varepsilon}_0^\dagger$ and $\hat{\varepsilon}_0$ prove the theorem.
\end{proof}
It turns out an explicit locally supersymmetric solution - see section 2.3.1 of \cite{Durgut2023} or appendix B of \cite{Lucietti2021} -  is known to 5D minimal gauged supergravity with $\mathbb{R}\times L(p, 1)$ conformal infinity. However, that solution satisfies a different BPS-like equation, namely
\begin{align}
    E = \frac{\sqrt{3}}{2}q_e + J_1.
\end{align}
The solution evades theorem \ref{thm:lensBPS} in much the same way as the AdS soliton evaded theorem \ref{thm:torusPositiveEnergy}. It turns out for even $p$, the spin structure required is the opposite to the spin structure required for the Killing spinors in theorem \ref{thm:lensBPS}. Meanwhile for odd $p$, the solution turns out to have no spin structure at all, but merely a $\mathrm{spin}^c$ structure.

Finally, the only other relevant boundary considered in section \ref{sec:examples} is the torus. In this case, theorem \ref{thm:torusPositiveEnergy} is unchanged by the electromagnetic fields in either theory because $\bar{\varepsilon}_k\varepsilon_k$ is simply zero for the required $\varepsilon_k$.

\section{Acknowledgements}
Most of all, I would like to thank my supervisor, James Lucietti, for guidance and many discussions throughout this project. I would also like to thank Jelle Hartong, Hari Kunduri, Harvey Reall and especially Piotr Chrusciel for comments and discussions based on earlier versions of this paper. Finally, I would like to thank the University of Edinburgh’s School of Mathematics for my PhD studentship funding.

\begin{appendices}
\section{Inverting the Dirac operator}
\label{sec:invertDirac}
This section is dedicated to showing that given $\varepsilon_k$, it is possible to solve $\gamma^I\nabla_I\varepsilon = 0$ with $\varepsilon \to \varepsilon_k$, as required for theorem \ref{thm:generalPositveEnergy}. The presentation here will be heavily based on \cite{Bartnik2005, ChruscielBartnik2003, Chrusciel2010}. As the procedure is largely well known, only the essential steps will be sketched.

Let $C_c^\infty$ denote the set of Dirac spinors of $(M, g)$ which are smooth and have compact support when restricted to $\Sigma_t$. Then, there exists a natural inner product on $C_c^\infty$ which is adapted to the application at hand.
\begin{lemma}
    An inner product can be defined on $C_c^\infty$ by
    \begin{align}
        \langle\psi, \chi\rangle_{C_c^\infty} &= \int_{\Sigma_t}\left((\nabla_I\psi)^\dagger\nabla^I\chi + \psi^\dagger\mathbb{M}\chi\right)\mathrm{d}V.
        \label{eq:Cc1InnerProduct}
    \end{align}
\end{lemma}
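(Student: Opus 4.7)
The plan is to verify the three defining properties of a Hermitian inner product: sesquilinearity, conjugate symmetry, and positive-definiteness. The first two are essentially formal given the standing assumptions in definition \ref{def:setup}, while positive-definiteness is the non-trivial step.

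Linearity in $\chi$ with $\psi$ fixed (and antilinearity in $\psi$ with $\chi$ fixed) is immediate since $\nabla_I$ is a linear differential operator and the pointwise pairings $(\cdot)^\dagger(\cdot)$ and $(\cdot)^\dagger\mathbb{M}(\cdot)$ are themselves sesquilinear. For conjugate symmetry, the pointwise complex conjugate of $(\nabla_I\psi)^\dagger\nabla^I\chi + \psi^\dagger\mathbb{M}\chi$ equals $(\nabla_I\chi)^\dagger\nabla^I\psi + \chi^\dagger\mathbb{M}^\dagger\psi$, and the latter coincides with $(\nabla_I\chi)^\dagger\nabla^I\psi + \chi^\dagger\mathbb{M}\psi$ because $\mathbb{M}$ is Hermitian. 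Hermiticity follows from the assumption in definition \ref{def:setup} that $\mathbb{M}$ is non-negative definite.

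For non-negativity of $\langle\psi,\psi\rangle_{C_c^\infty}$, recall that the index $I$ is a spatial vielbein index, so raising with the Minkowski metric gives $\nabla^I = \delta^{IJ}\nabla_J$ and hence $(\nabla_I\psi)^\dagger\nabla^I\psi = \sum_{I=1}^{n-1}|\nabla_I\psi|^2\geq 0$ pointwise. Combined with $\psi^\dagger\mathbb{M}\psi\geq 0$ from the non-negative definiteness of $\mathbb{M}$, the integrand in \ref{eq:Cc1InnerProduct} is pointwise non-negative, and so is the integral.

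The main obstacle is positive-definiteness, i.e.\ showing that $\langle\psi,\psi\rangle_{C_c^\infty}=0$ forces $\psi=0$. Non-negativity of the integrand implies that its vanishing on $\Sigma_t$ forces $\nabla_I\psi = 0$ pointwise for every $I$; by smoothness of $\psi\in C_c^\infty$ this holds everywhere. The identity $\nabla_I\psi = 0$ is equivalent to the linear, first-order system $D_I\psi = -\tfrac{\mathrm{i}}{2}\gamma_I\psi - \mathcal{A}_I\psi$. Since $\psi$ has compact support in $\Sigma_t$, it vanishes on the open complement of its support. For any point $p\in\Sigma_t$, connect $p$ by a smooth curve $\gamma$ to a point $q$ lying outside the support, and contract $\nabla_I\psi = 0$ with $\dot\gamma^I$; the result is a linear homogeneous ODE for $\psi\circ\gamma$ with trivial initial data at $q$, so ODE uniqueness gives $\psi(p)=0$. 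Assuming $\Sigma_t$ is connected, $\psi\equiv 0$, completing the verification.
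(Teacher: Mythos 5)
Your proof is correct and follows essentially the same route as the paper: the routine properties are dispatched as formal consequences of the standing assumptions, and positive-definiteness is reduced to the observation that $\nabla_I\psi=0$ together with compact support propagates $\psi\equiv 0$ along curves via first-order linear ODE uniqueness. The only minor additions you make beyond the paper's treatment are making the Hermiticity of $\mathbb{M}$ and the connectedness of $\Sigma_t$ explicit, which the paper leaves tacit.
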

\begin{proof}
    Conjugate symmetry and linearity in the second argument are manifest. Since $\mathbb{M}$ is assumed to be non-negative definite, $\langle\psi, \psi\rangle_{C_c^\infty} \geq 0$ is also immediate. It only remains to check that $\langle\psi, \psi\rangle_{C_c^\infty} = 0 \implies \psi = 0$.

    From equation \ref{eq:Cc1InnerProduct}, $\langle\psi, \psi\rangle_{C_c^\infty} = 0 \implies \nabla_I\psi = 0$. Since $\psi$ is compactly supported, $\exists p \in \Sigma_t$ such that $\psi|_p = 0$. Now consider any other point, $q \in \Sigma_t$, and a curve, $\Gamma(s)$, joining $p$ and $q$. Let $s^I$ be tangent to the curve. Then, it follows that $s^I\nabla_I\psi = 0$ is an ODE along the curve subject to the initial value, $\psi|_p = 0$. Therefore $\psi = 0$ everywhere along the curve because $\psi$ is smooth and thus 1st order, linear, homogeneous ODEs have unique solution. Since $q$ was arbitrary, it must be that $\psi = 0$ everywhere on $\Sigma_t$.
\end{proof}
\begin{definition}[$\mathfrak{D}$]
    Define the Dirac operator, $\mathfrak{D} : C_c^\infty \to L^2$, by $\mathfrak{D} : \psi \mapsto \gamma^I\nabla_I\psi$.
    \label{def:G}
\end{definition}
\begin{lemma}
    For any antisymmetric spacetime tensor, $M^{ab}$, 
    \begin{align}
        n_aD_bM^{ba} &= \widetilde{D}_b(n_aM^{ba}),
    \end{align}
    where $\widetilde{D}$ is the induced covariant derivative on $\Sigma_t$ and $n_a$ is its unit normal.
    \label{thm:antisymmetricDerivative}
\end{lemma}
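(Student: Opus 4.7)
The plan is to compute $\widetilde{D}_b(n_a M^{ba})$ directly by expressing the induced divergence in terms of the ambient one, then exploit the antisymmetry of $M^{ba}$ to kill the remainder. The first observation is that the vector $V^b := n_a M^{ba}$ is tangent to $\Sigma_t$, since $n_b V^b = n_a n_b M^{ba} = 0$ because $n_a n_b$ is symmetric while $M^{ba}$ is antisymmetric.

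Next I would use the standard identity that for any $\Sigma_t$-tangential vector $V^b$, the induced divergence reads $\widetilde{D}_b V^b = h^c_d D_c V^d$ with $h^c_d = \delta^c_d + n^c n_d$ the projector onto $\Sigma_t$. Expanding gives $\widetilde{D}_b V^b = D_b V^b + n^c n_d D_c V^d$, and using $n_d V^d = 0$ to trade $n_d D_c V^d = -V^d D_c n_d$ yields $\widetilde{D}_b V^b = D_b V^b - V^d a_d$, where $a_d := n^c D_c n_d$ is the acceleration of $n$.

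Then I would expand $D_b V^b = D_b(n_a M^{ba}) = n_a D_b M^{ba} + M^{ba} D_b n_a$, and substitute the standard hypersurface decomposition $D_b n_a = K_{ba} - n_b a_a$ (with $K_{ba}$ the symmetric extrinsic curvature of $\Sigma_t$, tangential in both slots; the overall sign of $K_{ba}$ is immaterial below). The $K_{ba}$ piece is annihilated by the antisymmetric $M^{ba}$, leaving $M^{ba} D_b n_a = -M^{ba} n_b a_a$. Collecting everything, the difference between $\widetilde{D}_b(n_a M^{ba})$ and $n_a D_b M^{ba}$ collapses to $-M^{ba}(n_b a_a + n_a a_b)$, which vanishes because $M^{ba}$ is antisymmetric while $n_b a_a + n_a a_b$ is symmetric under $a \leftrightarrow b$, proving the lemma.

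There is no serious obstacle here; the identity is essentially the pointwise reflection of Stokes' theorem, and the needed cancellations are forced once the hypersurface decomposition of $D_b n_a$ is in place. The only point requiring some care is keeping sign conventions consistent (for $K_{ab}$, for the acceleration, and for the projector induced by the timelike $n^a$), but these conventions play no substantive role because every ``spurious'' term is symmetric in $(a,b)$ and therefore killed by $M^{ba}$.
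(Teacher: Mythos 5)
Your proof is correct at every step. The paper itself does not spell out an argument for Lemma \ref{thm:antisymmetricDerivative} --- it simply defers to \cite{Cheng2005} and to lemma~2.5 of \cite{Rallabhandi2025} --- so there is no in-paper proof to compare against directly; you are supplying the missing details.

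The argument you give is the standard one and it is complete. The key observations are all in place: $V^b = n_aM^{ba}$ is tangential to $\Sigma_t$ because $n_an_b$ is symmetric; the tangential divergence satisfies $\widetilde{D}_bV^b = h^c{}_dD_cV^d$ with $h^a{}_b = \delta^a{}_b + n^an_b$ (the sign in the projector matching the mostly-plus signature with $n^an_a = -1$); differentiating $n_dV^d = 0$ trades $n_dD_cV^d$ for $-V^dD_cn_d$; and the hypersurface decomposition $D_bn_a = K_{ba} - n_ba_a$ (with $K_{ba}$ symmetric and tangential) ensures that $M^{ba}D_bn_a$ reduces to $-M^{ba}n_ba_a$. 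Collecting terms, the residual $-M^{ba}(n_ba_a + n_aa_b)$ vanishes by antisymmetry of $M$ against a symmetric tensor. Your closing remark that the sign conventions for $K_{ab}$ and for the acceleration are immaterial is also accurate, precisely because each potential ``spurious'' contribution is symmetric in the contracted pair. The one convention that does matter is the sign inside the projector $h^a{}_b = \delta^a{}_b + n^an_b$ (versus $-n^an_b$ for a spacelike normal), and you have that right for the timelike normal used here.
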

\begin{proof}
    See \cite{Cheng2005} or lemma 2.5 of \cite{Rallabhandi2025}.
\end{proof}
\begin{lemma}
    $\langle \psi, \chi\rangle_{C_c^\infty} = \langle\mathfrak{D}(\psi), \mathfrak{D}(\chi)\rangle_{L^2}$.
    \label{thm:licnerowiczInnerProduct}
\end{lemma}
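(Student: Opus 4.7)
The plan is to first establish the diagonal identity $\langle\psi,\psi\rangle_{C_c^\infty} = \langle\mathfrak{D}\psi, \mathfrak{D}\psi\rangle_{L^2}$ using the Lichnerowicz identity from theorem \ref{thm:lichnerowicz} and then extend to the off-diagonal case by polarisation. Both forms are sesquilinear and Hermitian (the left by direct inspection of equation \ref{eq:Cc1InnerProduct} using that $\mathbb{M}$ is Hermitian, and the right as a pullback of the Hermitian $L^2$ form via the complex-linear map $\mathfrak{D}$), so the diagonal identity determines them completely.

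For the diagonal case, first I would apply theorem \ref{thm:lichnerowicz} with $\varepsilon = \psi$ and rearrange to write the integrand of $\langle\psi,\psi\rangle_{C_c^\infty}$ as
\begin{align*}
(\nabla_I\psi)^\dagger\nabla^I\psi + \psi^\dagger\mathbb{M}\psi &= (\mathfrak{D}\psi)^\dagger\mathfrak{D}\psi + \tfrac{1}{2}n_aD_b(E^{ba}(\psi)).
\end{align*}
Integrating over $\Sigma_t$ reduces the claim to showing that $\int_{\Sigma_t}n_aD_b(E^{ba}(\psi))\,\mathrm{d}V = 0$. Because $E^{ba}(\psi)$ is antisymmetric, lemma \ref{thm:antisymmetricDerivative} converts the integrand into a total divergence $\widetilde{D}_b(n_aE^{ba}(\psi))$ on $\Sigma_t$. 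Stokes' theorem then turns this into a boundary integral, which vanishes identically because $\psi \in C_c^\infty$ is compactly supported on $\Sigma_t$ (there is simply no boundary of the support to contribute). This establishes $\langle\psi,\psi\rangle_{C_c^\infty} = \langle\mathfrak{D}\psi,\mathfrak{D}\psi\rangle_{L^2}$.

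To upgrade to the bilinear statement, I would invoke the complex polarisation identity: for any sesquilinear Hermitian form $B$,
\begin{align*}
B(\psi,\chi) &= \tfrac{1}{4}\bigl(B(\psi+\chi,\psi+\chi) - B(\psi-\chi,\psi-\chi) + \mathrm{i}B(\psi+\mathrm{i}\chi,\psi+\mathrm{i}\chi) - \mathrm{i}B(\psi-\mathrm{i}\chi,\psi-\mathrm{i}\chi)\bigr).
\end{align*}
Since $C_c^\infty$ is closed under these complex combinations, and since $\mathfrak{D}$ is complex linear so $\mathfrak{D}(\psi\pm\chi) = \mathfrak{D}\psi \pm \mathfrak{D}\chi$ and similarly with $\mathrm{i}$, applying the diagonal identity to each of the four terms and reassembling yields the desired bilinear equality.

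I do not anticipate any genuine obstacle here: compact support makes the boundary term manifestly zero, and the polarisation step is purely algebraic. The only small subtlety worth checking is the sign and normalisation coming from theorem \ref{thm:lichnerowicz} (the factor of $2$ on the RHS matches the factor of $\tfrac{1}{2}$ I used when rearranging), and making sure the antisymmetric-divergence lemma applies to the spinor-bilinear $E^{ba}(\psi)$ without modification, which it does because antisymmetry of $M^{ab}$ is the only input.
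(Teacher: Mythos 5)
Your proposal is correct and follows exactly the route the paper indicates: apply theorem \ref{thm:lichnerowicz} to rewrite the $C_c^\infty$ integrand in terms of $(\mathfrak{D}\psi)^\dagger\mathfrak{D}\psi$ plus the divergence $\tfrac{1}{2}n_aD_b E^{ba}(\psi)$, convert the divergence to a boundary term via lemma \ref{thm:antisymmetricDerivative}, kill it using compact support, and polarise. The paper's proof is a one-line enumeration of precisely these four ingredients, so you have simply filled in the same argument in detail.
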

\begin{proof}
    Apply theorem \ref{thm:lichnerowicz}, lemma \ref{thm:antisymmetricDerivative}, compact support of elements in $C_c^\infty$ and the polarisation identity for relating norms and inner products.
\end{proof}
\begin{definition}[$\mathcal{H}$]
    Define $\mathcal{H}$ to be the completion of $C_c^\infty$ under $\langle\cdot, \cdot\rangle_{C_c^\infty}$.
    \label{def:h}
\end{definition}
In general, the completion of a metric space has elements being equivalence classes of Cauchy sequences. However, in this case, the elements, $\psi \in \mathcal{H}$, can be represented as elements of the more familiar Sobolev space, $H^1_{\mathrm{loc}}$, as follows.

The key technical requirement is a weighted Poincar\'{e} inequality. For the chosen asymptotics, $\Sigma_t$ satisfies definition 9.9 of \cite{ChruscielBartnik2003} to be a weakly, asymptotically hyperboloidal end\footnote{The $x$ in their definition is $\mathrm{e}^{-r}$ here, their $h$ is the pullback of $f_{mn}$ to $\Sigma_t$ here and their $\mathcal{N}$ is $S$ here.}. Then, one can apply proposition 8.3 of \cite{ChruscielBartnik2003} to deduce $\exists \,w \in L_{\mathrm{loc}}^1$ such that
\begin{align}
    \int_{\Sigma_t}\psi^\dagger\psi w\,\mathrm{d}V \leq \int_{\Sigma_t}(\nabla_I\psi)^\dagger\nabla^I(\psi)\mathrm{d}V,
\end{align}
for any $\psi \in C_c^\infty$. Thus,
\begin{align}
    \int_{\Sigma_t}(\psi_m - \psi_n)^\dagger(\psi_m - \psi_n)w\,\mathrm{d}V &\leq \int_{\Sigma_t}\nabla_I(\psi_m - \psi_n)^\dagger\nabla^I(\psi_m - \psi_n)\mathrm{d}V \leq ||\psi_m - \psi_n||_{C_c^\infty}.
\end{align}
Therefore, for any Cauchy sequence, $\{\psi_m\}_{m = 0}^\infty \subseteq \mathcal{H}$, $\{\nabla_I\psi_m\}_{m = 0}^\infty$ and $\{\sqrt{w}\psi_m\}_{m = 0}^\infty$ are Cauchy in $L^2$. Since $w \in L^1_{\mathrm{loc}}$, it finally follows that $\psi \in H^1_{\mathrm{loc}}$.
\begin{lemma}
    $\mathfrak{D}$ extends to a continuous (i.e. bounded) linear operator from $\mathcal{H}$ to $L^2$ such that $\langle\psi, \chi\rangle_{\mathcal{H}} = \langle \mathfrak{D}(\psi), \mathfrak{D}(\chi)\rangle_{L^2}$.
    \label{thm:GExtension}
\end{lemma}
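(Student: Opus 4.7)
The plan is to use the standard bounded linear transformation (BLT) theorem argument. Lemma \ref{thm:licnerowiczInnerProduct} already establishes that $\mathfrak{D}:C_c^\infty\to L^2$ is actually an isometry with respect to the relevant norms: setting $\psi=\chi$ in that lemma gives $\|\mathfrak{D}(\psi)\|_{L^2}=\|\psi\|_{C_c^\infty}$ for every $\psi\in C_c^\infty$. In particular $\mathfrak{D}$ is bounded (with operator norm one) on the dense subspace $C_c^\infty\subseteq\mathcal{H}$, which is exactly the hypothesis needed to extend it uniquely and continuously to all of $\mathcal{H}$.

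To construct the extension explicitly, I would proceed as follows. Given any $\psi\in\mathcal{H}$, pick a Cauchy sequence $\{\psi_m\}_{m=0}^\infty\subseteq C_c^\infty$ with $\psi_m\to\psi$ in $\mathcal{H}$. By the isometry, $\|\mathfrak{D}(\psi_m)-\mathfrak{D}(\psi_n)\|_{L^2}=\|\psi_m-\psi_n\|_{C_c^\infty}$, so $\{\mathfrak{D}(\psi_m)\}$ is Cauchy in $L^2$; completeness of $L^2$ produces a limit $\Phi\in L^2$ and we set $\mathfrak{D}(\psi):=\Phi$. Well-definedness is immediate: if $\{\psi_m'\}$ is another approximating sequence, then $\|\mathfrak{D}(\psi_m)-\mathfrak{D}(\psi_m')\|_{L^2}=\|\psi_m-\psi_m'\|_{C_c^\infty}\to 0$. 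Linearity of the extension follows at once from linearity on $C_c^\infty$ together with continuity of vector space operations in $L^2$, and continuity (indeed, the isometric character) of the extension carries over in the limit, giving $\|\mathfrak{D}(\psi)\|_{L^2}=\|\psi\|_{\mathcal{H}}$ for all $\psi\in\mathcal{H}$.

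Finally, the identity $\langle\psi,\chi\rangle_{\mathcal{H}}=\langle\mathfrak{D}(\psi),\mathfrak{D}(\chi)\rangle_{L^2}$ for $\psi,\chi\in\mathcal{H}$ is obtained by polarising (or more directly by continuity). Choose approximating sequences $\{\psi_m\},\{\chi_m\}\subseteq C_c^\infty$ with $\psi_m\to\psi$ and $\chi_m\to\chi$ in $\mathcal{H}$. Lemma \ref{thm:licnerowiczInnerProduct} gives $\langle\psi_m,\chi_m\rangle_{C_c^\infty}=\langle\mathfrak{D}(\psi_m),\mathfrak{D}(\chi_m)\rangle_{L^2}$ for every $m$. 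The left-hand side converges to $\langle\psi,\chi\rangle_{\mathcal{H}}$ because the inner product on $\mathcal{H}$ is by construction the unique continuous extension of $\langle\cdot,\cdot\rangle_{C_c^\infty}$, while the right-hand side converges to $\langle\mathfrak{D}(\psi),\mathfrak{D}(\chi)\rangle_{L^2}$ by the defining property of $\mathfrak{D}$ on $\mathcal{H}$ and continuity of the $L^2$ inner product.

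I do not anticipate any real obstacle: the entire argument is the standard procedure for extending a bounded operator from a dense subspace to its completion, and all the analytic content has already been packaged into lemma \ref{thm:licnerowiczInnerProduct}. The only mild subtlety to flag is that elements of $\mathcal{H}$ are a priori equivalence classes of Cauchy sequences; however, as noted after definition \ref{def:h}, the weighted Poincar\'e inequality from \cite{ChruscielBartnik2003} identifies them with honest spinor fields in $H^1_{\mathrm{loc}}$, so the extension $\mathfrak{D}(\psi)$ coincides with $\gamma^I\nabla_I\psi$ interpreted in the distributional/weak sense — a reassuring consistency check, though not strictly needed for the statement of the lemma itself.
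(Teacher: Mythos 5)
Your proposal is correct and follows essentially the same route the paper takes: lemma \ref{thm:licnerowiczInnerProduct} establishes the isometry $\|\mathfrak{D}(\psi)\|_{L^2}=\|\psi\|_{C_c^\infty}$ on the dense subspace, and $\mathfrak{D}$ is then extended to $\mathcal{H}$ by completion, exactly as the paper sketches (it simply defers the full details to lemma~3.6 of \cite{Rallabhandi2025}, defining $\mathfrak{D}(\psi)=\lim_{m\to\infty}\mathfrak{D}(\psi_m)$ along a Cauchy sequence). Your expanded write-out of the well-definedness, linearity, and the passage to the inner-product identity by continuity is the standard BLT argument that the paper invokes implicitly.
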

\begin{proof}
    This can be proven identically to lemma 3.6 of \cite{Rallabhandi2025}. In particular, given a Cauchy sequence, $\{\psi_m\}_{m = 0}^\infty \subseteq C_c^\infty$, with limit, $\psi \in \mathcal{H}$, $\mathfrak{D}(\psi)$ is defined to be $\lim_{m \to \infty}\mathfrak{D}(\psi_m) \in L^2$.
\end{proof}
\begin{theorem}
    $\mathfrak{D}$ is a continuous, linear isomorphism between $\mathcal{H}$ and $L^2$.
    \label{thm:gIsomorphism}
\end{theorem}
Most saliently, the theorem implies $(\gamma^I\nabla_I)^{-1} : L^2 \to \mathcal{H}$ exists.
\begin{proof}
    Linearity is by construction and continuity has already been shown by lemma \ref{thm:GExtension}. Next, suppose $\mathfrak{D}(\psi) = 0$. Then, by lemma \ref{thm:GExtension}, $0 = ||\mathfrak{D}(\psi)||_{L^2} = ||\psi||_{\mathcal{H}} \implies \psi = 0$ and therefore $\mathfrak{D}$ is injective. It remains to prove surjectivity.
    
    The surjectivity proof follows an index theory argument based on the analysis in \cite{Reall2024, McSharry2025}. Let $\hat{\nabla}_a = D_a + \frac{\mathrm{i}}{2}\gamma_a$, i.e. don't include the $\mathcal{A}_a$ term. Therefore, $\mathfrak{D} - \hat{\mathfrak{D}} = \gamma^I\mathcal{A}_I$. If $\gamma^I\mathcal{A}_I$ were a compact operator, then the index (dimension of kernel minus dimension of cokernel) of $\mathfrak{D}$ and $\hat{\mathfrak{D}}$ would coincide. In particular, if $\hat{\mathfrak{D}}$ was invertible, then it would follow that $\mathrm{index}(\mathfrak{D}) = 0$. Since $\mathfrak{D}$ has already been shown to have trivial kernel above, it would follow that $\mathfrak{D}$ is invertible.

Even though $\mathcal{A}_a$ is just a (sufficiently regular) matrix valued function, it's not clear in general whether $\gamma^I\mathcal{A}_I$ is actually compact. This would be the case though if the underlying space, $\Sigma_t$, were itself compact - a fact leveraged in \cite{Reall2024, McSharry2025}.

Similarly, given an asymptotic end of the form $\mathbb{R} \times S$, consider the compact subset\footnote{Here, $r$ is the Fefferman-Graham coordinate, but in general it could be any coordinate for the $\mathbb{R}$ factor in $\mathbb{R}\times S$ such that the conformal boundary is at $r = \infty$.}, $\Sigma(r_0) = \Sigma\backslash\{r > r_0\}$. With the appropriate boundary conditions for spinors on $\partial\Sigma(r_0) = \{r = r_0\}$, it's known - e.g. from section 3 of \cite{Rallabhandi2025} or appendix B of \cite{McSharry2025} - that $\hat{\mathfrak{D}}$ is invertible on the compact set. Therefore, by the index theory argument above, $\mathfrak{D}$ is invertible on $\Sigma(r_0)$.

Since $C_c^\infty$ is dense in $L^2$, for any $\Phi \in L^2$, $\exists$ a Cauchy sequence, $\{\phi_m\}_{m = 0}^\infty \subseteq C_c^\infty$, such that $\lim_{m \to \infty}\phi_m = \Phi$ in $L^2$. For each $\phi_m$ choose $r_m$ so that $\mathrm{supp}(\phi_m) \subseteq \mathrm{int}(\Sigma(r_m))$, i.e. choose $r_m$ large enough. Then, the choice of boundary conditions on $\partial\Sigma(r_m)$ doesn't matter and $\exists \psi_m$ such that $\mathfrak{D}(\psi_m) = \phi_m$. 
Theorem 6.4 in \cite{Bartnik2005, ChruscielBartnik2003} shows $\mathfrak{D}$ has an ``elliptic regularity" property on $\Sigma(r_m)$ meaning $\phi_m \in C_c^\infty$ and the metric's smoothness imply $\psi_m \in C_c^\infty$\footnote{Theorem 6.4 as stated in \cite{Bartnik2005, ChruscielBartnik2003} only says $\psi_m \in H^1_{\mathrm{loc}}(\Sigma(r_m))$. However, this is because of the very low regularity assumed in \cite{Bartnik2005, ChruscielBartnik2003} for the metric coefficients and the analogue of $\phi_m$. If one assumes additional regularity - in this case $g$'s smoothness and $\phi_m \in C_c^\infty$ - then $\psi_m$ inherits this additional regularity by the same proof. I'd like to thank Piotr Chrusciel for confirming that this is indeed the case.}. 

Furthermore, $\{\psi_m\}_{m = 0}^\infty \subseteq C_c^\infty$ is a Cauchy sequence by lemma \ref{thm:licnerowiczInnerProduct}. By theorem \ref{thm:GExtension}, the limit, $\lim_{m \to \infty}\psi_m = \psi \in \mathcal{H}$ satisfies
\begin{align}
    \mathfrak{D}(\psi) = \lim_{m \to \infty}\mathfrak{D}(\psi_m) = \lim_{m \to \infty}\phi_m = \Phi,
\end{align}
thereby proving $\mathfrak{D}$ is surjective.
\end{proof}
The index theory proof of surjectivity presented works for any regular $\mathcal{A}_I$. This is different to the Lichnerowicz identity based proofs which have previously appeared in the literature. These proofs require additional assumptions on $\mathcal{A}_I$ which the index theory argument does not. For example, the standard surjectivity argument presented in \cite{Bartnik2005, ChruscielBartnik2003} proceeds as follows.

    Let $\theta$ be an arbitrary element of $L^2$ and define $F_\theta : \mathcal{H} \to \mathbb{C}$ by
    \begin{align}
        F_\theta(\psi) = \langle\theta, \mathfrak{D}(\psi)\rangle_{L^2}.
        \label{eq:FTheta}
    \end{align}
    $F_\theta$ is manifestly linear. It is also continuous/bounded because the Cauchy-Schwarz inequality and lemma \ref{thm:GExtension} imply $|F_\theta(\Psi)| = |\langle\theta, \mathfrak{D}(\Psi)\rangle_{L^2}| \leq ||\theta||_{L^2}||\mathfrak{D}(\Psi)||_{L^2} = ||\theta||_{L^2}||\Psi||_{\mathcal{H}}$. Therefore, by the Riesz representation theorem, $\exists\varphi \in \mathcal{H}$ such that $F_\theta(\psi) = \langle\varphi , \psi\rangle_{\mathcal{H}}$. Then, lemma \ref{thm:GExtension} and equation \ref{eq:FTheta} imply
    \begin{align}
       \langle \Phi, \mathfrak{D}(\psi)\rangle_{L^2} = 0 \,\,\forall \psi \in \mathcal{H},\,\, \mathrm{where}\,\, \Phi = \theta - \mathfrak{D}(\varphi).
       \label{eq:PhiProperty}
    \end{align}
    Using lemma \ref{thm:antisymmetricDerivative}, one can perform a formal integration by parts to get
    \begin{align}
        0 &= \int_{\Sigma_t}\psi^\dagger\left(\gamma^{I}D_I(\Phi) + \frac{\mathrm{i}}{2}(n-1)\Phi - \mathcal{A}_I^\dagger\gamma^I\Phi\right)\mathrm{d}V.
        \label{eq:formalIntegration}
    \end{align}
    Suppose $\exists\, \tilde{\mathcal{A}}_a$ such that $(\gamma^I\mathcal{A}_I)^\dagger = \gamma^I\tilde{\mathcal{A}}_I$. Define a new connection on spinors, $\widetilde{\nabla}_a$, by $\widetilde{\nabla}_a = D_a -\frac{\mathrm{i}}{2}\gamma_a + \tilde{\mathcal{A}}_a$ where  and a new Dirac operator, $\widetilde{\mathfrak{D}} = \gamma^I\widetilde{\nabla}_I$. Then, equation \ref{eq:formalIntegration} can be re-written as
    \begin{align}
        0 &= \int_{\Sigma_t}\psi^\dagger\widetilde{\mathfrak{D}}(\Phi)\mathrm{d}V.
    \end{align}
    Since $\psi$ could be any compactly supported spinor, it follows that $\Phi$ is a weak solution to $\widetilde{\mathfrak{D}}(\Phi) = 0$. Thus, the surjectivity proof reduces to showing $\widetilde{\mathfrak{D}}$ has trivial kernel. Continuing with the new connection, suppose $\gamma^{IJ}\tilde{\mathcal{A}}_J$ is hermitian and 
    \begin{align}
        \widetilde{\mathbb{M}} = 4\pi T^{0a}\gamma_0\gamma_a + \gamma^{IJ}D_I\tilde{\mathcal{A}}_J - \frac{\mathrm{i}(n-2)}{2}(\gamma^I\tilde{\mathcal{A}}_I + \tilde{\mathcal{A}}_I^\dagger\gamma^I) - \tilde{\mathcal{A}}^\dagger_I\gamma^{IJ}\tilde{\mathcal{A}}_J
    \end{align}
    is non-negative definite. Then, following the same steps as theorem \ref{thm:lichnerowicz}, one finds if $\psi \in C_c^\infty$, then
    \begin{align}
        ||\widetilde{\mathfrak{D}}(\psi)||_{L^2}^2 &= \int_{\Sigma_t}\left((\widetilde{\nabla}_I\psi)^\dagger\widetilde{\nabla}^I\psi + \psi^\dagger\widetilde{\mathbb{M}}\psi\right)\mathrm{d}V.
        \label{eq:adjointLichnerowicz}
    \end{align}
    As $\widetilde{\mathbb{M}}$ is assumed to be non-negative definite, this is formally identical to the original Lichnerowicz identity. Then, one can proceed analogously to the analysis of $\mathfrak{D}$ itself by defining an inner product analogous to equation \ref{eq:Cc1InnerProduct} and thereby concluding that $\widetilde{\mathfrak{D}}$ is injective. Hence, $\widetilde{\mathfrak{D}}(\Phi) = 0$ implies $\Phi = \theta - \mathfrak{D}(\varphi) = 0$ and therefore $\mathfrak{D}$ is surjective.

While this proof is more in the spirit of the rest of the material because it is built on a Lichnerowicz identity, it has the disadvantage that there may be choices of $\mathcal{A}_a$ for which no simple $\tilde{\mathcal{A}}_a$ exists. Furthermore, even if $\tilde{\mathcal{A}}_a$ exists, it may be that $\widetilde{\mathbb{M}}$ is not non-negative definite. 

In fact, this is exactly the situation for the electromagnetic examples of section \ref{sec:BPS}. For the 4D and 5D theories studied there,
\begin{align}
    \mathcal{A}_a^{(4)} &= - \frac{1}{4}F_{bc}\gamma^{bc}\gamma_a + \mathrm{i}A_aI = \frac{1}{2}E_I\gamma^0\gamma^I\gamma_a - \frac{1}{4}F_{IJ}\gamma^{IJ}\gamma_a + \mathrm{i}A_a I \,\,\,\mathrm{and} \\
    \mathcal{A}_a^{(5)} &= - \frac{1}{4\sqrt{3}}F_{bc}\gamma^{bc}\gamma_a - \frac{1}{2\sqrt{3}}F_{ab}\gamma^b + \mathrm{i}\sqrt{3}A_aI \\
    &= - \frac{1}{2\sqrt{3}}E_I\gamma^{I}\gamma^0\gamma_a - \frac{1}{4\sqrt{3}}F_{IJ}\gamma^{IJ}\gamma_a - \frac{1}{2\sqrt{3}}F_{ab}\gamma^b + \mathrm{i}\sqrt{3}A_aI.
\end{align}
Using equations \ref{eq:gammaA4} and \ref{eq:gammaA5}, one finds $\tilde{\mathcal{A}}^{(4)}$ \& $\tilde{\mathcal{A}}^{(5)}$ exist and they are identical to $\mathcal{A}^{(4)}$ \& $\mathcal{A}^{(5)}$ except that $F_{IJ} \to -F_{IJ}$. In both cases, by following similar steps to the proof of lemma \ref{thm:supergravityPositiveEnergy}, one finds
\begin{align}
    \widetilde{\mathbb{M}} &= -2S_I\gamma^0\gamma^I,
\end{align}
where $S^I = F\indices{^I_J}E^J$ is the Poynting vector. Thus, $\widetilde{\mathbb{M}}$ has eigenvalues, $\pm 4\sqrt{S_IS^I}$, and is therefore not non-negative definite\footnote{This issue was also recently pointed out in \cite{Reall2024} and exists even when $\Lambda = 0$. An analogous proof to theorem \ref{thm:gIsomorphism} presented in this work therefore also fixes the issue in theorem 11.9 of \cite{ChruscielBartnik2003}, where $\Lambda = 0$.}.

Unlike the proof of theorem \ref{thm:gIsomorphism} presented above, it appears the Licherowicz identity based proof can only be rectified in a few specific cases where further assumptions are made. The simplest, but most unsatisfying, assumption would be to restrict to electromagnetic fields which have vanishing Poynting vector. 

As the following argument\footnote{I am very grateful to Piotr Chrusciel for providing this argument.} shows, a much less obvious assumption that also works is to restrict to hypersurfaces, $\Sigma_t$, which have $K = \delta^{IJ}K_{IJ} = 0$. Instead of the spacetime Levi-Civita connection, $D$, one could re-write the argument in terms of the connection intrinsic to $\Sigma_t$, say $D^{(\sigma)}$, where $\sigma$ is the metric on $\Sigma_t$. Then, for any Dirac spinor, $\Psi$, 
\begin{align}
    D_I\Psi = D_I^{(\sigma)}\Psi + \frac{1}{2}K_{IJ}\gamma^0\gamma^J \,\,\,\mathrm{and}\,\,\, \gamma^ID_I\Psi = \gamma^ID_I^{(\sigma)}\Psi + \frac{1}{2}K\gamma^0.
\end{align}
Instead of the $\widetilde{\nabla}_a$ defined earlier, one could define another connection, say $\nabla^\prime_I$, differing from $\nabla_I$ in not just $F_{IJ} \to -F_{IJ}$ (and a $-\frac{\mathrm{i}}{2}\gamma_I$ term instead of a $+\frac{\mathrm{i}}{2}\gamma_I$ term), but also $K_{IJ} \to -K_{IJ}$. Since $(-K_{IJ}, E_I, -F_{IJ})$ satisfies the constraint equations (encoded in $T^{0a}$) whenever $(K_{IJ}, E_I, F_{IJ})$ does, this connection will have $\mathbb{M}^\prime = 0$, which is trivially non-negative definite. $\gamma^I\nabla^\prime_I$ and $\gamma^I\widetilde{\nabla}_I$ differ by a $K\gamma^0$ term, but this goes to zero if $K$ is assumed to vanish.

$K = 0$ is a ``maximal gauge" that is sometimes used in the study of the initial value problem \cite{Fournodavlos2019}. However, it would have to be shown $(M, g)$ admits such a foliation and that the foliation is compatible with the coordinates chosen on $\mathcal{I}$ to get $f_{(0)0\alpha} = 0$.
\begin{theorem}
    Suppose $\Phi$ is a spinor such that $\gamma^I\nabla_I\Phi \in L^2$ and $\Phi$ grows at most as $O(\mathrm{e}^{r/2})$. By theorem \ref{thm:gIsomorphism}, let $\Psi \in \mathcal{H}$ be the unique spinor such that $\gamma^I\nabla_I\Psi = \gamma^I\nabla_I\Phi$. Let $\mathcal{Z} = \Phi - \Psi$. Let $\{\psi_m\}_{m = 0}^\infty \in C_c^\infty$ be a Cauchy sequence whose limit is $\Psi$ and let $\mathcal{Z}_m = \mathcal{Z} - \psi_m$. Then, for the functional, $Q$, in definition \ref{def:setup}, $\lim_{m \to \infty} Q(\mathcal{Z}_m) = Q(\mathcal{Z})$.
    \label{thm:bulkConvergence}
\end{theorem}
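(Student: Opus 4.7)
The plan is to convert $Q$ into a bulk integral via the Lichnerowicz identity of theorem \ref{thm:lichnerowicz} and then exploit the Hilbert-space convergence $\psi_m \to \Psi$ in $\mathcal{H}$. Specifically, for any sufficiently regular spinor $\varepsilon$,
\begin{align*}
Q(\varepsilon) = 2\int_{\Sigma_t}\Big((\nabla_I \varepsilon)^\dagger \nabla^I \varepsilon - (\gamma^I \nabla_I \varepsilon)^\dagger \gamma^J \nabla_J \varepsilon + \varepsilon^\dagger \mathbb{M}\varepsilon\Big)\mathrm{d}V.
\end{align*}
By the defining property of $\Psi$, $\gamma^I \nabla_I \mathcal{Z} = \gamma^I\nabla_I\Phi - \gamma^I\nabla_I\Psi = 0$, so the middle Dirac term in $Q(\mathcal{Z})$ drops out. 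Meanwhile, taking $\mathcal{Z}_m$ with $\mathcal{Z}_m - \mathcal{Z} = -(\psi_m-\Psi)$ (matching the setup in the proof of theorem \ref{thm:generalPositveEnergy}), one has $\gamma^I \nabla_I \mathcal{Z}_m = \gamma^I \nabla_I(\Psi - \psi_m)$, which tends to zero in $L^2$ by the isometry $\|\mathfrak{D}\psi\|_{L^2}=\|\psi\|_{\mathcal{H}}$ of lemma \ref{thm:GExtension}.

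Polarising each quadratic piece in the Lichnerowicz identity gives
\begin{align*}
Q(\mathcal{Z}_m) - Q(\mathcal{Z}) &= 4\int_{\Sigma_t}\mathrm{Re}\big\langle \nabla_I\mathcal{Z}, \nabla^I(\Psi-\psi_m)\big\rangle\mathrm{d}V + 4\int_{\Sigma_t}\mathrm{Re}\big(\mathcal{Z}^\dagger \mathbb{M}(\Psi - \psi_m)\big)\mathrm{d}V \\
&\quad + 2\|\Psi - \psi_m\|_{\mathcal{H}}^2 - 2\int_{\Sigma_t}|\gamma^I\nabla_I(\Psi - \psi_m)|^2\,\mathrm{d}V.
\end{align*}
The third piece vanishes by definition of $\mathcal{H}$-convergence and the fourth by the preceding paragraph. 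The two cross terms are handled by Cauchy--Schwarz, e.g.
\begin{align*}
\left|\int_{\Sigma_t}\big\langle \nabla_I\mathcal{Z},\nabla^I(\Psi-\psi_m)\big\rangle\mathrm{d}V\right| \leq \|\nabla_I\mathcal{Z}\|_{L^2}\|\Psi-\psi_m\|_{\mathcal{H}},
\end{align*}
which vanishes in the limit \emph{provided} $\nabla_I\mathcal{Z} \in L^2$ and $\mathcal{Z}^\dagger \mathbb{M}\mathcal{Z}\in L^1$.

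The chief technical point is therefore verifying these two bounds on $\mathcal{Z}=\Phi-\Psi$. The $\Psi$ contribution is automatic because $\Psi \in \mathcal{H}$. For $\Phi$ one decomposes
\begin{align*}
\nabla_I \Phi = (D_I - \overbar{D}_I)\Phi + \Big(\overbar{D}_I \Phi + \tfrac{\mathrm{i}}{2}\gamma_I \Phi\Big) + \mathcal{A}_I \Phi,
\end{align*}
whose middle bracket vanishes in the intended application $\Phi=\varepsilon_k$ by the background Killing-spinor equation (\ref{eq:backgroundKillingSpinor}); the spin-connection differences from equations (\ref{eq:wNPMDecay})--(\ref{eq:w1NMDecay}) and the assumed $O(\mathrm{e}^{-(n-1)r})$ decay of $\mathcal{A}_I$ then comfortably beat the $O(\mathrm{e}^{r/2})$ growth of $\Phi$ against the $O(\mathrm{e}^{(n-2)r})$ measure -- exactly as in lemma \ref{thm:killingSpinorL2}. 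Analogously, $\|\mathbb{M}\|_0 = o(\mathrm{e}^{-(n-1)r})$ paired with $|\Phi|^2 = O(\mathrm{e}^r)$ yields $\Phi^\dagger \mathbb{M} \Phi \in L^1$. Assembling these estimates, every piece of $Q(\mathcal{Z}_m)-Q(\mathcal{Z})$ vanishes in the limit and the claim follows.
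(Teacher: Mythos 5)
Your proof is correct and follows essentially the same route as the paper's: Lichnerowicz identity to express $Q$ as a bulk integral, polarise, kill quadratic remainders using $\|\Psi - \psi_m\|_{\mathcal{H}} \to 0$ and the isometry $\|\mathfrak{D}\psi\|_{L^2} = \|\psi\|_{\mathcal{H}}$, then handle cross terms by Cauchy--Schwarz against the decay of $\|\mathbb{M}\|_0$. One organisational difference worth noting: you polarise around $\mathcal{Z}$, which kills the Dirac cross term identically ($\gamma^I\nabla_I\mathcal{Z} = 0$), whereas the paper polarises around $\Phi$ and retains $\gamma^I\nabla_I\Phi$ paired against $\mathfrak{D}(\Psi - \psi_m)$; the two bookkeepings are equivalent but yours is slightly cleaner at that step. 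A second point in your favour: you explicitly flag that $\nabla_I\mathcal{Z} \in L^2$ is needed for the gradient cross term and supply it for the intended application $\Phi = \varepsilon_k$ via lemma \ref{thm:killingSpinorL2}. The paper's proof writes this cross term as an $L^2$ inner product $\langle \nabla_I(\Psi-\psi_m), \nabla^I\Phi\rangle_{L^2}$ and invokes continuity of the inner product, which implicitly requires $\nabla_I\Phi \in L^2$ even though that is not literally among the theorem's hypotheses ($\gamma^I\nabla_I\Phi \in L^2$ plus $O(\mathrm{e}^{r/2})$ growth does not obviously imply it). So you have, if anything, made a tacit assumption of the paper explicit rather than introducing a gap.
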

\begin{proof}
    As before, theorem \ref{thm:lichnerowicz} implies that for any spinor, $\chi$, 
    \begin{align}
        Q(\Phi - \chi) &=  2\int_{\Sigma_t}\Big(\nabla_I(\Phi - \chi)^\dagger\nabla^I(\Phi - \chi) - (\gamma^I\nabla_I(\Phi - \chi))^\dagger\gamma^J\nabla_J(\Phi - \chi) \nonumber \\
        &\,\,\,\,\,\,\,\,\,\,\,\,\,\,\,\,\,\,\,\,\,\,\,\, + (\Phi - \chi)^\dagger\mathbb{M}(\Phi - \chi)\Big)\mathrm{d}V.
    \end{align}
    Hence, by equation \ref{eq:Cc1InnerProduct}, definition \ref{def:h}, definition \ref{def:G}, lemma \ref{thm:GExtension} and $\gamma^I\nabla_I\Phi \in L^2$,
    \begin{align}
        \frac{1}{2}(Q(\mathcal{Z}) - Q(\mathcal{Z}_m)) &= \frac{1}{2}(Q(\Phi - \Psi) - Q(\Phi - \psi_m)) \\
        &= ||\Psi||_\mathcal{H}^2 - ||\psi_m||_\mathcal{H}^2 - ||\mathfrak{D}(\Psi)||_{L^2}^2 + ||\mathfrak{D}(\psi_m)||_{L^2}^2 + \langle \mathfrak{D}(\Psi - \psi_m), \gamma^I\nabla_I\Phi\rangle_{L^2} \nonumber \\
        &\,\,\,\,\,\,\, + \langle\gamma^I\nabla_I\Phi, \mathfrak{D}(\Psi - \psi_a)\rangle_{L^2} - \int_{\Sigma_t}(\nabla_I(\Psi - \psi_m))^\dagger\nabla^I(\Phi)\mathrm{d}V \nonumber \\
        &\,\,\,\,\,\,\, - \int_{\Sigma_t}\nabla^I(\Phi)^\dagger\nabla_I(\Psi - \psi_m)\mathrm{d}V - \int_{\Sigma_t}(\Psi - \psi_m)^\dagger\mathbb{M}\Phi\,\mathrm{d}V\nonumber \\
        &\,\,\,\,\,\,\, - \int_{\Sigma_t}\Phi^\dagger\mathbb{M}(\Psi - \psi_m)\,\mathrm{d}V.
    \end{align}
    Since inner products and $\mathfrak{D}$ are both continuous, it immediately follows that 
    \begin{align}
        \lim_{m\to\infty}\frac{1}{2}(Q(\mathcal{Z}) - Q(\mathcal{Z}_m)) &= \lim_{a\to\infty}\bigg(- \int_{\Sigma_t}(\nabla_I(\Psi - \psi_m))^\dagger\nabla^I(\Phi)\mathrm{d}V - \int_{\Sigma_t}\nabla^I(\Phi)^\dagger\nabla_I(\Psi - \psi_m)\mathrm{d}V\nonumber \\
        &\,\,\,\,\,\,\, - \int_{\Sigma_t}(\Psi - \psi_m)^\dagger\mathbb{M}\Phi\,\mathrm{d}V - \int_{\Sigma_t}\Phi^\dagger\mathbb{M}(\Psi - \psi_m)\,\mathrm{d}V\bigg).
    \end{align}
    Since the inner product on $\mathcal{H}$ is $\langle\psi, \chi\rangle_{\mathcal{H}} = \int_{\Sigma_t}\left((\nabla_I\psi)^\dagger\nabla^I\chi + \psi^\dagger\mathbb{M}\chi\right)\mathrm{d}V$ (with limits of Cauchy sequences taken appropriately when $\psi$ or $\chi$ is in $\mathcal{H}\backslash C_c^\infty$) and $\mathbb{M}$ is non-negative definite,
    \begin{align}
        \int_{\Sigma_t}(\nabla_I\psi)^\dagger\nabla^I(\psi)\,\mathrm{d}V \leq ||\psi||_\mathcal{H}^2 < \infty.
    \end{align}
    Hence $\nabla_I\psi \in L^2$ and $\psi \mapsto \nabla_I\psi$ is a continuous (i.e. bounded) linear operator. Consequently,
    \begin{align}
        \lim_{m\to\infty}\int_{\Sigma_t}(\nabla_I(\Psi - \psi_m))^\dagger\nabla^I(\Phi)\mathrm{d}V &= \lim_{m\to\infty}\langle\nabla_I(\Psi - \psi_m), \nabla^I\Phi\rangle_{L^2} = 0
    \end{align}
    and likewise for $\int_{\Sigma_t}\nabla^I(\Phi)^\dagger\nabla_I(\Psi - \psi_m)\mathrm{d}V$. That leaves
    \begin{align}
        \lim_{m\to\infty}\frac{1}{2}(Q(\varepsilon) - Q(\varepsilon_m)) = \lim_{m\to\infty}\bigg(- \int_{\Sigma_t}(\Psi - \psi_m)^\dagger\mathbb{M}\Phi\,\mathrm{d}V - \int_{\Sigma_t}\Phi^\dagger\mathbb{M}(\Psi - \psi_m)\,\mathrm{d}V\bigg).
    \end{align}
    Because it's assumed $\mathbb{M}$ is non-negative definite, $||\mathbb{M}||_0$ decays faster than $O(\mathrm{e}^{-(n-1)r})$ near $\Sigma_{t, \infty}$ and $\Phi$ grows at $O(\mathrm{e}^{r/2})$ near $\Sigma_{t, \infty}$,
    \begin{align}
        \bigg|\int_{\Sigma_t}\Phi^\dagger\mathbb{M}\Phi\,\mathrm{d}V\bigg| = \int_{\Sigma_t}\Phi^\dagger\mathbb{M}\Phi\,\mathrm{d}V \leq \int_{\Sigma_t}\Phi^\dagger\Phi||\mathbb{M}||_0\,\mathrm{d}V < \infty.
    \end{align}
    Therefore $\Phi\sqrt{||\mathbb{M}||_0} \in L^2$. Likewise, $(\Psi - \psi_m)\sqrt{||\mathbb{M}||_0} \in L^2$ because
    \begin{align}
        \int_{\Sigma_t}(\Psi - \psi_m)^\dagger(\Psi - \psi_m)||\mathbb{M}||_0\,\mathrm{d}V \leq \int_{\Sigma_t}(\Psi - \psi_m)^\dagger\mathbb{M}(\Psi - \psi_m)\mathrm{d}V \leq ||\Psi - \psi_m||_{\mathcal{H}}^2 < \infty.
    \end{align}
    Hence, by the Cauchy-Schwartz inequality and continuity of norms, 
    \begin{align}
        \lim_{m\to\infty}\bigg|\int_{\Sigma_t}(\Psi - \psi_m)^\dagger\mathbb{M}\Phi\,\mathrm{d}V\bigg| &\leq \lim_{m\to\infty}||(\Psi - \psi_m)\sqrt{||\mathbb{M}||_0}||_{L^2}||\Phi\sqrt{||\mathbb{M}||_0}||_{L^2} = 0.
    \end{align}
    Analogously, $\lim_{m\to\infty}\int_{\Sigma_t}\Phi^\dagger\mathbb{M}(\Psi - \psi_m)\,\mathrm{d}V = 0$ too, leaving $\lim_{m\to\infty}Q(\mathcal{Z}_m) = Q(\mathcal{Z})$.
\end{proof}

\end{appendices}
\newpage

\bibliographystyle{unsrt}
\bibliography{references}

\end{document}